\documentclass[aps,prl,twocolumn,superscriptaddress,floatfix,nobibnotes,nofootinbib,longbibliography]{revtex4-2}

\usepackage{amsmath,amssymb,amsfonts,amsthm}
\usepackage{graphicx}
\usepackage{physics}
\usepackage{hyperref}
\hypersetup{colorlinks=true, linkcolor=blue, citecolor=blue, urlcolor=blue}
\usepackage{xcolor}
\usepackage{bm}
\usepackage{bbm}

\newtheorem{theorem}{Theorem}
\newtheorem{corollary}[theorem]{Corollary}

\newcommand{\Dg}{\hat{\mathcal{D}}_\gamma}
\newcommand{\Kz}{\hat{K}_z}

\newcommand{\Kp}{\hat{K}_+}
\newcommand{\Km}{\hat{K}_-}
\newcommand{\Kphi}{\hat{K}_\phi}
\newcommand{\Veps}{\hat{V}_\varepsilon}
\newcommand{\Hstab}{\hat{H}_{\mathrm{stab}}}
\newcommand{\Ntrunc}{N_{\mathrm{trunc}}}

\newcommand{\HH}{\mathcal{H}}
\newcommand{\DD}{\mathbb{D}}
\newcommand{\PP}{\mathbb{P}}
\newcommand{\id}{\mathbbm{1}}
\newcommand{\SU}{\mathrm{SU}}
\newcommand{\PSU}{\mathrm{PSU}}

\newcommand{\GL}{\mathrm{GL}}
\newcommand{\Aut}{\mathrm{Aut}}

\begin{document}
\raggedbottom

 \title{Bosonic codes from compact phase spaces}

\author{David Roberts$^{\ast,\dagger}$}
\affiliation{Extropic Corporation, Cambridge, Massachusetts 02458, USA}
\affiliation{Joint Quantum Institute, University of Maryland, College Park, MD 20742, USA}
\affiliation{Joint Center for Quantum Information and Computer Science, NIST/University of Maryland, College Park, MD 20742, USA}

\author{Aaron Slipper$^{\ast}$}
\affiliation{Department of Mathematics, Duke University, Durham, NC 27708, USA}

\author{Alireza Parhizkar}
\affiliation{Joint Quantum Institute, University of Maryland, College Park, MD 20742, USA}

\author{Victor V.\ Albert}
\affiliation{Joint Center for Quantum Information and Computer Science, NIST/University of Maryland, College Park, MD 20742, USA}

\author{Mohammad Hafezi}
\affiliation{Joint Quantum Institute, University of Maryland, College Park, MD 20742, USA}
\affiliation{Joint Center for Quantum Information and Computer Science, NIST/University of Maryland, College Park, MD 20742, USA}

\date{\today}

\begin{abstract}

We present the algebraic structure of bosonic quantum error-correcting codes on genus-two Riemann surfaces. We explicitly construct the code words as automorphic forms and analytically generate the full tower of code spaces at all weights. We prove a fundamental no-go theorem: for any genus greater than one, the stabilizer group is non-amenable, forcing a strictly positive spectral gap in the stabilizer Hamiltonian. Consequently, no normalizable quantum state can satisfy all stabilizer conditions. This sharply contrasts with standard Gottesman--Kitaev--Preskill (GKP) codes, where the amenability of the stabilizer group $\mathbb{Z}^2$ permits approximate code words with arbitrary precision.
\end{abstract}

\maketitle
{\renewcommand{\thefootnote}{\fnsymbol{footnote}}%
\footnotetext[1]{These authors contributed equally to this work.}%
\footnotetext[2]{Contact author: david.b.roberts@outlook.com}}

{\it Introduction}. The Gottesman--Kitaev--Preskill (GKP) code
protects quantum information in a single bosonic mode by imposing a
lattice of displacement symmetries on phase
space~\cite{GKP2001}.  The quotient of the phase plane by this
lattice is a torus (an elliptic curve), and the code words are theta
functions on it, holomorphic sections of a line bundle arising from
geometric quantization, a perspective developed
in~\cite{Roberts2024,Conrad2024,MayrandRoyer2026}. Crucially, the torus is the \emph{phase
space} of the oscillator, not a configuration space: the elliptic
curve emerges physically as the image of the coherent-state family
projected into the code space~\cite{SM}.  Because the phase
space is compact, the code space is finite-dimensional, with its
dimension governed by the Riemann--Roch theorem.  This raises a sharp
question: what bosonic codes arise when the phase space is a compact
surface of genus $g \geq 2$?

Here we address this question for compact hyperbolic Riemann surfaces,
with genus two as the first nontrivial case.  Such a surface is the
hyperbolic disk $\DD$ folded up by a discrete group $\Gamma$.  The
symmetries of the disk form the
group $\PSU(1,1)$, whose unitary action on two bosonic modes is
by the quadratic Gaussian transformations of $\SU(1,1)$. Taking
$\Gamma$ as the stabilizer group therefore makes the stabilizers
\emph{squeezing} transformations, the hyperbolic counterpart of the
displacements of GKP.  The code words are automorphic forms on~$\DD$,
their number fixed by the Riemann--Roch theorem.

We report two complementary results.  Both the stabilizers and the
logical gates of these two-mode codes
are Gaussian operations (squeezing and rotations of two bosonic
modes), yet their logical gate group is the underlying surface's automorphism
group, and by Greenberg's theorem~\cite{Jones2019} \emph{every}
finite group arises this way.  Any finite group
can therefore serve as the logical gate group, in just two modes.  For
the Bolza surface~\cite{Bolza1887}---the most symmetric genus-two curve---it is the 48-element single-qubit Clifford group $\GL(2,\mathbb{F}_3)$,
which we compute explicitly.

\begin{figure}[t]
  \includegraphics[width=\columnwidth]{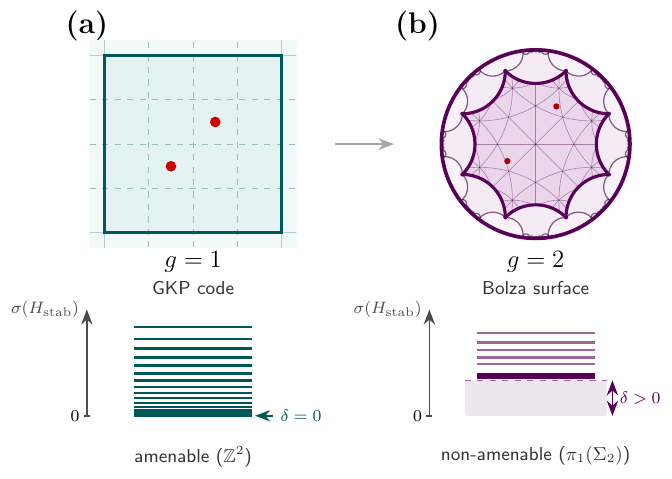}
  \caption{\label{fig:phase_spaces}%
    \textbf{Bosonic codes from compact phase spaces.}
    Top: torus ($g=1$, GKP code) and hyperbolic octagon
    ($g=2$, Bolza surface) with red dots marking stellar zeros of
    representative code states.
    Bottom: spectrum of the stabilizer Hamiltonian~$\Hstab$.
    For $g=1$ the surface group $\mathbb{Z}^2$ is amenable and the
    spectrum touches zero ($\delta=0$), permitting approximate code
    words.  For $g=2$ the surface group $\pi_1(\Sigma_2)$ is
    non-amenable, opening a spectral gap ($\delta>0$) that
    obstructs construction of normalizable exact stabilizer
    eigenstates.}
\end{figure}
The second result is a sharp obstruction.  A physical code needs a
normalizable state fixed by every stabilizer; we prove that for
$g \geq 2$ none exists, not even approximately.  The cause is
geometric: the surface group $\pi_1(\Sigma_g)$ is \emph{non-amenable},
forcing a strictly positive gap in the spectrum of the stabilizer
Hamiltonian.  The same geometry that makes the gate groups
so rich thus obstructs the code states, in sharp contrast to the GKP
code, whose amenable stabilizer group $\Gamma\simeq \mathbb{Z}^2$ admits
approximate code words of arbitrary precision.

\emph{Framework.}---We take as our starting point a general quantum phase
space: a coherent-state family \emph{holomorphically} parametrized by
a complex manifold~$C$, i.e., a holomorphic map
$C \to \PP(\HH)$ into the space of rays
in~$\HH$~\cite{Berezin1975,Baez2018}.  On the complex
plane, the Bargmann representation $z \mapsto |z\rangle$ provides
such a family globally: each point $z \in \mathbb{C}$ indexes a
coherent state, and the dependence on~$z$ is holomorphic.  Given a state~$\psi$, we define its \emph{stellar function}
$f^\psi(z) = \langle\psi|z\rangle$ (antilinear in~$\psi$, with no
normalization factor), adapting conventions of Ref.~\cite{Chabaud2020} for reasons that will be clear
shortly.  On the plane, $f^\psi$ is entire, so its zeros, which
encode the non-Gaussian structure of~$\psi$, can be infinite in
number and carry no topological constraint.

Now suppose the phase space is a compact Riemann surface~$C$ of
genus~$g$; that is, suppose we have a coherent-state family
parametrized by~$C$ \cite{Baez2018}.  The map $C \to \PP(\HH)$ is well defined globally, but every
holomorphic function on~$C$ is constant, so it admits no global
holomorphic lift to~$\HH$~\cite{SM}.  Holomorphic lifts
exist only locally: one works with frames
$z \mapsto |z\rangle_\alpha$ on coordinate patches~$U_\alpha$,
related on overlaps by nowhere-vanishing holomorphic functions
$|z\rangle_\alpha = g_{\alpha\beta}(z)\,|z\rangle_\beta$.  These
transition functions define a holomorphic line
bundle~$\mathbb{L} \to C$; the \emph{stellar functions}
$f^\psi_\alpha(z) = \langle\psi|z\rangle_\alpha$ on each patch
together constitute a global holomorphic section of the dual
bundle~$\mathcal{L} := \mathbb{L}^*$.  Unlike the planar case, where
$f^\psi$ is a single entire function, the compact geometry forces
$f^\psi$ to be defined only locally, with chart-to-chart
transformations dictated by~$\mathcal{L}$.

Compactness makes the stellar representation \emph{rigid}: the
stellar rank of every state is the same topological invariant,
$\deg\mathcal{L}$.  Two states with the same zero locus are
proportional~\cite{Gunning1966}: every state is pinned, up to a
scalar, by a finite constellation of points on~$C$.  This generalizes the
Majorana constellation on the sphere
($g = 0$)~\cite{Majorana1932,Bacry2004}
and the stellar hierarchy on the plane~\cite{Chabaud2020}.  The stellar transform $\psi \mapsto f^\psi$ identifies the
Hilbert space with the space of global sections
$\Gamma(C, \mathcal{L})$ of the line bundle~$\mathcal{L}$;
for stellar rank $r = \deg\mathcal{L} > 2g{-}2$, the
Riemann--Roch theorem gives
\begin{equation}
\dim\HH = r - g + 1.
\end{equation}
Furthermore, $r:=\deg \mathcal L  = A/2\pi$ \cite{BatesWeinstein1997}, where $A$ is the area of~$C$; hence the formula reads as a generalization of the
Bohr-Sommerfeld rule to phase spaces with nontrivial topology.

\emph{Hyperbolic stellar representations.}---Quantum systems occurring in nature typically realize only simply connected phase spaces, for example the plane of a bosonic mode, or the sphere of a spin \cite{Perelomov1986}. Topologically richer phase spaces arise by quotienting a universal cover by a discrete group of symmetries. The uniformization theorem guarantees that any compact Riemann surface of genus $g \geq 2$ arises this way, specifically as a quotient of the Poincar\'e disk $\DD = \{z \in \mathbb{C} : |z| < 1\}$. Crucially, $\DD$ itself corresponds to a physically realizable phase space: two bosonic modes $\hat{a}$,
$\hat{b}$ carry a representation of $\SU(1,1)$, generated by
$\Kp = \hat{a}^\dagger\hat{b}^\dagger$,
$\Km = \hat{a}\hat{b}$, and
$\Kz = \tfrac{1}{2}(\hat{n}_a {+} \hat{n}_b {+} 1)$.  This
representation splits into sectors $\HH_k$ of fixed occupation-number difference
$\hat{n}_a - \hat{n}_b = 2k{-}1$, each containing a unique vacuum state
$|k, 0\rangle$ of lowest total boson number.  Fixing $k$, we identify points of the
disk with the squeezed translates of this
vacuum,
$|z\rangle = e^{z\Kp}|k, 0\rangle$. An element
$g = \bigl(\begin{smallmatrix}\alpha & \beta \\
\beta^* & \alpha^*\end{smallmatrix}\bigr) \in \SU(1,1)$ acts on
$\DD$ by the M\"obius transformation
$g \cdot z = (\alpha z + \beta)/(\beta^* z + \alpha^*)$, and the corresponding unitary $\hat{\mathcal D}_g$ acts on stellar functions via~\cite{BrifVourdasMann1996}
\begin{equation}\label{eq:transformation}
  f^{\hat{\mathcal{D}}_g \psi}(g\cdot z)[g'(z)]^{k}\, = f^\psi(z),
\end{equation}
where the transformation factor $j(g;z) = [g'(z)]^{k}$ \cite{footnote_double_cover} is called a \emph{factor of
automorphy}.  Therefore, stellar
functions transform as holomorphic $k$-differentials
$f^\psi(z)\,(dz)^{k}$.

\emph{Compact phase spaces via stabilizer codes}. Let
$\Gamma \subset \PSU(1,1)$ be a cocompact Fuchsian group
uniformizing a compact genus-$g$ surface $C = \Gamma \backslash \DD$.
Imposing its squeezing operators as
stabilizers~\cite{footnote_spin_structure} defines the code space
\begin{equation}\label{eq:code}
  \HH_\Gamma := \bigl\{\psi \in \mathcal H_k :
  \hat{\mathcal D}_\gamma\,\psi = \psi
  \;\;\forall\, \gamma \in \Gamma\bigr\}.
\end{equation}
For a code state, $f^\psi(z)\,(dz)^k$ is $\Gamma$-invariant and
therefore descends to a holomorphic $k$-differential on
$C = \Gamma\backslash\DD$, a section of the $k$th tensor power
$\Omega_C^{\otimes k}$ of the holomorphic cotangent bundle.  This identifies
$\HH_\Gamma$ with the space of global sections
$\Gamma(C, \Omega_C^{\otimes k})$ of this line bundle, which has
degree $\deg\Omega_C = 2g{-}2$. The stellar rank is thus
$r = k(2g{-}2)$ and the Riemann--Roch formula gives
\begin{equation}\label{eq:dim}
  \dim \HH_\Gamma = k(2g{-}2) - g + 1 = (2k - 1)(g - 1)
\end{equation}
for $k \geq 2$; for $k = 1$ the dimension is
$g$~\cite{ACGH1985}. 

\begin{figure*}[t]
  \includegraphics[width=0.99\textwidth]{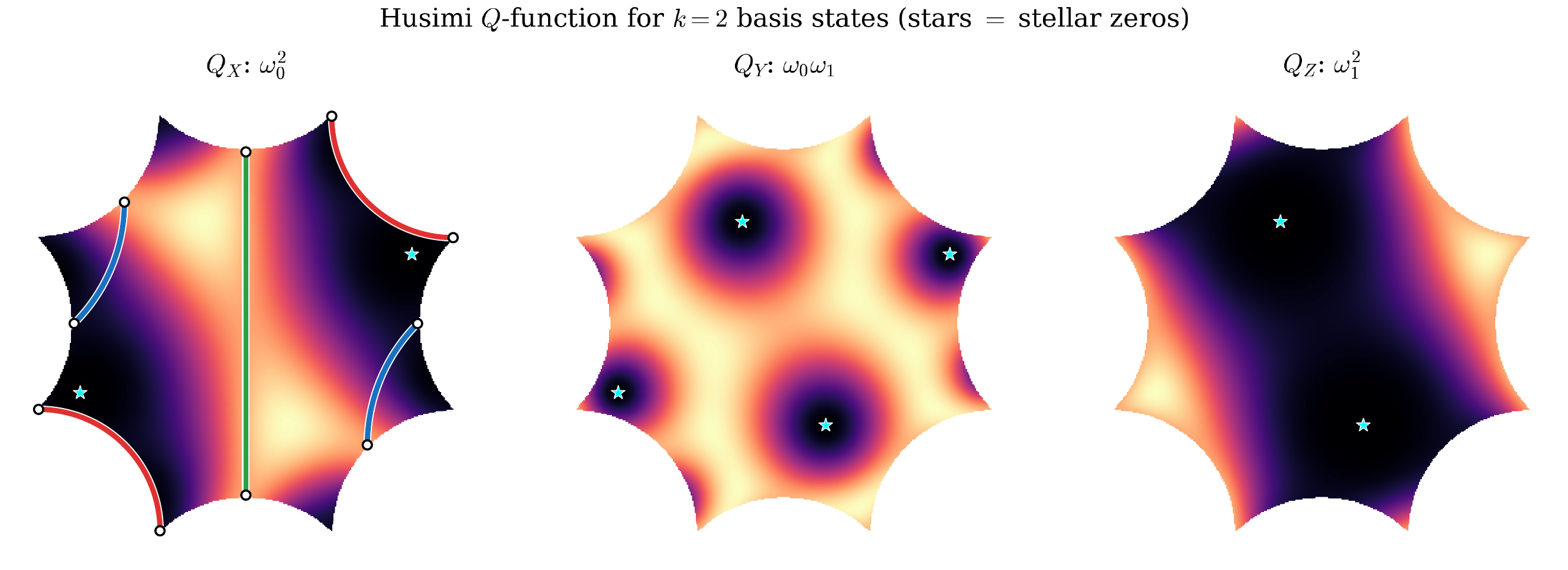}
  \caption{\label{fig:code_words}%
    Husimi $Q$-function for the $k = 2$ code space
    basis on the Bolza octagon.
    Every basis state has stellar rank
    $\deg \Omega_C^{\otimes 2} = 4$.
    Left: $Q_{\omega_0^2}$ (2~double zeros).
    Center: $Q_{\omega_0\omega_1}$ (4~simple zeros).
    Right: $Q_{\omega_1^2}$ (2~double zeros).
    Stars mark zero locations; the octagonal fundamental domain tiles the Poincar\'e disk
    under~$\Gamma$.  The three pants geodesics
    $\gamma_R, \gamma_G, \gamma_B$ (red, green, blue) are overlaid
    on the left panel; $\gamma_R$ runs along two identified edges
    of the octagon.}
\end{figure*}

\emph{Code words.}---Averaging over the stabilizer group produces
invariant states---this is how GKP code words are built at genus
one---and here we average the coherent states of the disk.  A point
$x$ of the surface is an orbit $\Gamma z \subset \DD$, and on a
small patch $U_\alpha \subset C$ one can choose the representative
$z_\alpha(x) \in \DD$ holomorphically.  Averaging then attaches a
generalized code word---a formal stabilizer average---to each
point of the surface~\cite{footnote_convergence},
\begin{equation}\label{eq:frames}
  |x\rangle_\alpha := \sum_{\gamma \in \Gamma}
  \hat{\mathcal D}_\gamma\,|z_\alpha(x)\rangle.
\end{equation}
Choosing a different coordinate patch changes $|x\rangle_\alpha$ only by a non-vanishing
scalar, so the ray it spans is chart-independent. Therefore,
the family of states Eq. \eqref{eq:frames} descends to a holomorphic map $C \to \PP(\HH_\Gamma)$. This is by definition a quantum phase space: the surface itself is the physical phase space of the code.

At genus two the $k = 1$ code space is two-dimensional, and any
basis $\omega_0\,dz$, $\omega_1\,dz$ generates the entire
tower of code spaces analytically. The $k$-differentials
\begin{equation}\label{eq:canonical}
  \begin{cases}
    \omega_0^{k-j}\,\omega_1^j\,(dz)^k, & j = 0, \ldots, k, \\[3pt]
    W \cdot \omega_0^{k-j-3}\,\omega_1^j\,(dz)^k, & j = 0, \ldots, k{-}3
    \;\;(k \geq 3),
  \end{cases}
\end{equation}
with $W = \omega_0\omega_1' - \omega_0'\omega_1$ the Wronskian,
form a basis of the code space at every integer
weight~$k$~\cite{SM} (cf.~Ref.~\cite{KockTait2014}).  The series~\eqref{eq:frames}
diverges at $k = 1$~\cite{Iwaniec2002}, so $\omega_0, \omega_1$
cannot be constructed directly by averaging over the stabilizer group $\Gamma$; we extract them from $k = 2$
instead.  It suffices to find a stellar-function basis
$X, Y, Z$ of the $k = 2$ code space obeying
the conic relation
\begin{equation}\label{eq:conic}
  Y(z)^2 = X(z)\,Z(z).
\end{equation}
The holomorphic square roots $\omega_0 = \sqrt{X}$ and
$\omega_1 = \sqrt{Z}$ are then single-valued on~$\DD$, the
differentials $\omega_0\,dz$, $\omega_1\,dz$ are a basis of the $k = 1$
code space~\cite{SM}, and Eq.~\eqref{eq:canonical} generates every
other code word.

\begin{figure*}[t]
  \scalebox{1}[0.9]{\includegraphics[width=\textwidth]{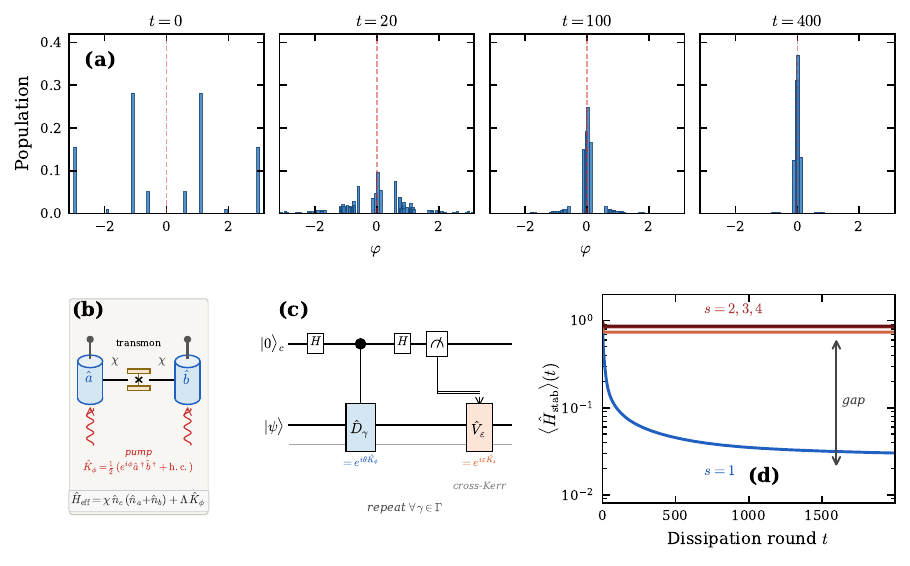}}
  \caption{\label{fig:protocol}\label{fig:scaling}%
    (a)~Single-generator conditional-kick protocol ($k = 2$, Bolza
    surface, $\Ntrunc = 60$, $\varepsilon = 0.5$): population in
    the $\Dg$-eigenbasis at rounds $t = 0, 20, 100, 400$.
    The dashed line marks $\varphi = 0$ (target stabilizer phase);
    population
    concentrates there over successive rounds.
    (b)~Circuit-QED module: two microwave cavities ($\hat a$,
    $\hat b$) dispersively coupled ($\chi$) to a transmon ancilla,
    with a single parametric pump ($\Lambda$) driving correlated
    photon-pair creation (two-mode squeezing).
    The effective Hamiltonian
    $\hat{H}_{\mathrm{eff}} = \chi\,\hat n_c(\hat n_a + \hat n_b)
    + \Lambda\,\hat{K}_\phi$ provides the native gate set.
    (c)~One round of the conditional-kick channel: controlled
    $\hat{\mathcal D}_\gamma = e^{i\theta\hat{K}_\phi}$ (parametric squeeze)
    followed by a classically conditioned correction
    $\hat{V}_\varepsilon = e^{i\varepsilon\hat{K}_z}$ (cross-Kerr),
    repeated for each generator $\gamma \in \Gamma$.
    (d)~Stabilizer energy
    $\langle\Hstab\rangle(t)$ [Eq.~\eqref{eq:Hstab}] for
    $s = 1, 2, 3, 4$ generators ($\Ntrunc = 200$).  The
    single-generator (amenable) protocol drives
    $\langle\Hstab\rangle$ toward zero, while all multi-generator
    (non-amenable) protocols collapse onto a common
    $\mathcal{O}(1)$ plateau set by the spectral gap
    (Theorem~\ref{thm:gap}).  The floor is independent
    of~$\Ntrunc$ (not shown).}
\end{figure*}

\emph{$k = 2$ theta series.}---An explicit basis at $k = 2$ comes
from closed geodesics.  Each simple closed geodesic of~$C$ defines
a conjugacy class $[\gamma] \subset \Gamma$ of hyperbolic elements
and a theta series
\begin{equation}\label{eq:theta}
  \theta_\gamma(z) = \sum_{\nu \in [\gamma]}
  \bigl(\beta^*\,z^2 + (\alpha^* - \alpha)z - \beta\bigr)^{-k},
  \quad
  \nu = \bigl(\begin{smallmatrix}
    \alpha & \beta \\ \beta^* & \alpha^*
  \end{smallmatrix}\bigr).
\end{equation}
Cutting $C$ along disjoint simple closed geodesics decomposes it
into \emph{pairs of pants}, and every such pants
decomposition uses exactly $3g - 3$ geodesics.  These theta series are linearly
independent \cite{Wolpert1986}, and therefore always provide a basis for the space of $k=2$ differentials for any complex curve. Therefore, for a curve of genus two, the problem of producing the entire tower of code words reduces to that of solving for the linear transformation that brings the three theta series in \eqref{eq:theta} into the conic form~\eqref{eq:conic}.

\emph{The Bolza curve.}---We now specialize to the simplest
genus-two example: the Bolza
curve, a regular octagonal fundamental domain in~$\DD$ with
opposite sides identified (Fig.~\ref{fig:code_words}). The four side identifications are the stabilizer
generators: one per side pair, they are the two-mode squeezing operators
$\hat{\mathcal D}_j = \exp\!\bigl(\xi_j\,\Kp - \xi_j^{*}\,\Km\bigr)$
with $\xi_j = \tfrac{\varphi}{2}\,e^{ij\pi/4}$,
$j = 0, \ldots, 3$, sharing the common squeezing strength
$\varphi = 2|\xi_j| = 2\,\mathrm{arccosh}(1 + \sqrt{2})$.

The Bolza curve is cut into two pairs of pants by three of its
shortest geodesics, $\gamma_R, \gamma_G, \gamma_B$ (each of hyperbolic
length~$\varphi$), displayed on the left panel of
Fig.~\ref{fig:code_words}.  Their conjugacy classes furnish the
theta-series basis $\theta_R, \theta_G, \theta_B$ of
Eq.~\eqref{eq:theta}, and the change of basis to the conic
form~\eqref{eq:conic} is fixed by a single complex parameter
$\lambda = -1.014(1) - 0.239(1)\,i$~\cite{SM}:
\begin{align}\label{eq:bolza_XYZ}
  X &= \tfrac{c}{6}\,(\theta_R {+} \theta_G {-} 2\theta_B)
       + \tfrac{b}{2}\,(\theta_R {-} \theta_G),
      \notag \\[3pt]
  Y &= \tfrac{a}{3}\,(\theta_R + \theta_G + \theta_B),
       \\[3pt]
  Z &= \tfrac{c}{6}\,(\theta_R {+} \theta_G {-} 2\theta_B)
       - \tfrac{b}{2}\,(\theta_R {-} \theta_G),
       \notag
\end{align}
with $a = \sqrt{6\lambda{+}3}$,
$b = \sqrt{2{-}2\lambda}$,
$c = \sqrt{6\lambda{-}6}$. For a code state the Husimi function
$Q_\psi(z) = |f^\psi(z)|^2(1-|z|^2)^{2k}$ is $\Gamma$-invariant and therefore descends to a well-defined density on the surface;
Fig.~\ref{fig:code_words} plots it on the fundamental domain for
the conic basis~\eqref{eq:bolza_XYZ}.  The double zeros of~$X$
and~$Z$ confirm that each is a perfect square, as the conic
relation~\eqref{eq:conic} demands.

{\it Logical gates}.  The order-8 rotation
$R: z \mapsto e^{i\pi/4}z$ together with an order-3 M\"obius
transformation $U$ (explicit $\SU(1,1)$ matrix in~\cite{SM})
generate the curve's 48-element symmetry group,
$\Aut(C) \cong \GL(2,\mathbb{F}_3)$.  Every symmetry is a logical
gate: it lifts to a M\"obius transformation $g$ normalizing the
Fuchsian group ($g\Gamma g^{-1} = \Gamma$), and at integer~$k$
the squeezing operators represent $\PSU(1,1)$ honestly
(non-projectively), so $\hat{\mathcal D}_g$ permutes the
stabilizers and preserves the code space.  The same symmetry
group furnishes the encoded gates of the tessellation codes of
Ref.~\cite{WXL2024}, where the surface is a tiled configuration
space; in our codes it is the phase space itself.

Like the stabilizers, the logical gates are two-mode Gaussian
circuits:
$\hat{\mathcal D}_R = e^{i(\pi/4)\Kz}$ is a phase rotation, while
$\hat{\mathcal D}_U = \hat{\mathcal D}_0^{1/2}\hat{\mathcal D}_R^3$
is implemented by three octagon rotations followed by a stabilizer squeeze with half the squeezing parameter
(${\sim}6.6$~dB of two-mode squeezing)~\cite{SM}.  At $k = 1$ the
code space is a logical qubit: writing $|\omega\rangle$ for the
code word whose stellar function is~$\omega$, the states
$|\omega_0\rangle, |\omega_1\rangle$ form a basis, and the
logical gates act as
\begin{align}\label{eq:reps}
  V_R &= \frac{1}{\sqrt{6}}
  \begin{pmatrix}1{-}i\sqrt{3} & i\sqrt{2} \\
    -i\sqrt{2} & -(1{+}i\sqrt{3})\end{pmatrix},
 \end{align}
 \begin{align}
  V_U &= \frac{1}{2\sqrt{3}}
  \begin{pmatrix}-(\sqrt{3}{-}i) & 2\sqrt{2} \\
    -2\sqrt{2} & -(\sqrt{3}{+}i)\end{pmatrix}.
\end{align}
These two unitary matrices generate a faithful copy of
$\GL(2, \mathbb{F}_3)$: all 48 symmetries give distinct Gaussian
gates.  The Bolza curve is only the simplest example: by
Greenberg's theorem~\cite{Jones2019}, \emph{any} finite group
can be realized as the classical logical gate group, at integer
weight, on \textit{some} hyperbolic surface.  Away from integer weight, however, the symmetries act
only projectively, and only a subgroup of them lifts to valid
quantum operations at
all~\cite{footnote_projective_logicals,SM}.

{\it Physical platform}.\label{sec:platform} The two-mode
$\mathrm{SU}(1,1)$ representation requires two microwave resonators
(modes $\hat{a}$, $\hat{b}$) linearly coupled to a transmon
ancilla; the two-mode squeezing drive is supplied by a DC
SQUID~\cite{Grosso2025}.  In the dispersive
regime, a Schrieffer--Wolff transformation yields
\begin{equation}\label{eq:Heff}
  \hat{H}_\mathrm{eff} \simeq \Delta_c\,\hat{n}_c
  + \chi\,\hat{n}_c\,(\hat{n}_a {+} \hat{n}_b)
  + \Lambda\,\Kphi,
\end{equation}
where $\Delta_c$ is the transmon detuning in the rotating frame,
contributing only a passive transmon phase~\cite{SM}, $\chi$ is
the dispersive shift (assumed equal for both modes),
$\Lambda$ the parametric pump amplitude, and
$\Kphi = \tfrac{1}{2}(e^{i\phi}\hat{a}^{\dagger}\hat{b}^{\dagger}
+ e^{-i\phi}\hat{a}\hat{b})$ is the two-mode squeezing generator.

Three gates follow.  (i)~\emph{Conditional two-mode squeeze}: when
$\chi \gg \Lambda$, the parametric drive is resonant only for the
transmon in $|g\rangle$, giving, up to an unconditional
squeeze~\cite{SM},
$\mathrm{C}\Kphi(\theta) = e^{i\hat{n}_c\theta\Kphi}$.
(ii)~\emph{Conditional $\Kz$}: setting $\Lambda = 0$, the cross-Kerr
interaction yields
$\mathrm{C}\Kz(\theta) = e^{i\hat{n}_c\theta\Kz}$
with $\Kz = \tfrac{1}{2}(\hat{n}_a {+} \hat{n}_b {+} 1)$.
(iii)~\emph{Unconditional $\SU(1,1)$}: preparing the transmon in
$|e\rangle$ converts any conditional gate to unconditional.  Generic
$\SU(1,1)$ elements are synthesized by Trotterization.

\emph{Dissipative stabilization.}---These gates assemble into one
stabilization round per generator (Fig.~\ref{fig:protocol}c):
prepare the transmon in $|+\rangle$, apply the controlled squeeze
$\mathrm{C}\hat{\mathcal{D}}_\gamma$, and measure the transmon in
the $X$ basis; when the outcome is $-1$, apply the feedforward
correction $\Veps = e^{i\varepsilon\Kz}$.  One round realizes
the Kraus map
$\hat{\mathcal{K}}_0 = \tfrac{1}{2}(\id + \hat{\mathcal{D}}_\gamma)$,
$\hat{\mathcal{K}}_1 = \tfrac{1}{2}\Veps(\id - \hat{\mathcal{D}}_\gamma)$,
cycled over the generators~\cite{SM}.  Code states are dark: the
error outcome fires with probability
$\tfrac{1}{2}\bigl(1 - \mathrm{Re}\langle\Dg\rangle\bigr)$,
vanishing as the state approaches the code space~\cite{SM}. For a single Bolza generator,
$\mathrm{Re}\langle\Dg\rangle$ rises from $0.03$ to $0.97$ within
500 rounds (Fig.~\ref{fig:protocol}a), converging from any
initial state, including the maximally mixed state, to the same
attractor.

{\it The spectral gap obstruction}\label{sec:nogo}. While
single-generator stabilization succeeds
(Fig.~\ref{fig:protocol}a), adding a second squeezing stabilizer
along a different axis renders stabilization impossible, not
merely slower.  Define the stabilizer Hamiltonian
\begin{equation}\label{eq:Hstab}
  \Hstab = \frac{1}{n_{\mathrm{gen}}}\sum_{i=1}^{n_{\mathrm{gen}}}\Bigl(\id -
  \tfrac{1}{2}\bigl(\hat{\mathcal{D}}_i + \hat{\mathcal{D}}_i^\dagger\bigr)\Bigr) = \id - M,
\end{equation}
where the sum runs over the stabilizer generators, $n_{\mathrm{gen}}$ is
their number, and $M$ is the Markov operator of a symmetric random walk
on~$\Gamma$, here realized by two-mode squeezing operators.  The spectral gap
$\delta = 1 - \sup\sigma(M)$ measures how far \emph{any}
normalizable state must be from satisfying all stabilizer
constraints simultaneously.

The same random walk can also be run on the group itself, in the
regular representation ($\Gamma$ translating functions in
$\ell^2(\Gamma)$), and there its Markov operator $M_\lambda$
\emph{is} gapped: every cocompact Fuchsian group is
\emph{non-amenable}~\cite{BHV2008}, so $\|M_\lambda\| < 1$
(Kesten's theorem).  Since $\sup\sigma(M) \leq \|M\|$, the gap transfers
to~$\HH_k$ provided $\|M\| \leq \|M_\lambda\|$.  It does: the
representation of~$\Gamma$ on~$\HH_k$ is \emph{weakly
contained}~\cite{BHV2008} in the regular representation, and weak
containment never increases operator norms~\cite{SM}.

\begin{theorem}[Spectral gap]\label{thm:gap}
  For $g \geq 2$, the stabilizer Hamiltonian~\eqref{eq:Hstab} has
  a spectral gap $\delta > 0$: $\langle v|\Hstab|v\rangle \geq
  \delta$ for every unit vector $v \in \HH_k$.
\end{theorem}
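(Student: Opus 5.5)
The plan follows the route sketched just before the statement: bound $\|M\|$ on $\HH_k$ by the norm $\|M_\lambda\|$ of the same Markov operator in the regular representation $\lambda$ of $\Gamma$ on $\ell^2(\Gamma)$, and then invoke Kesten's theorem. Let $\mu=\frac{1}{2n_{\mathrm{gen}}}\sum_i(\delta_{\gamma_i}+\delta_{\gamma_i^{-1}})$ be the symmetric generating measure on $\Gamma$. Writing $\pi_k$ for the unitary representation $\gamma\mapsto\hat{\mathcal D}_\gamma$ on $\HH_k$, we have $M=\pi_k(\mu)$ and $M_\lambda=\lambda(\mu)$, both self-adjoint. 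Then $\langle v|\Hstab|v\rangle = 1-\langle v|M|v\rangle \ge 1-\|M\|$, so it suffices to set $\delta := 1-\|M_\lambda\|$ and prove two things: $\|\pi_k(\mu)\|\le\|\lambda(\mu)\|$, and $\|\lambda(\mu)\|<1$.

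\textbf{Step 1 (Kesten).} A cocompact Fuchsian group of genus $g\ge2$ contains a nonabelian free subgroup; for instance, two hyperbolic elements with disjoint axes give one by ping-pong. It is therefore non-amenable. Because $\mu$ is symmetric and its support generates $\Gamma$, Kesten's theorem gives $\|\lambda(\mu)\|<1$. This step is standard and will simply be cited.

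\textbf{Step 2 (weak containment).} The aim is to show that $\pi_k|_\Gamma$ is weakly contained in $\lambda$. The cleanest route I see is through the restriction from $\SU(1,1)$.
\begin{itemize}
\item For $k\ge 1$ (the sectors with $2k-1\ge1$ occupation difference relevant here), $\HH_k$ carries a holomorphic discrete series representation of $\SU(1,1)$, or of its double cover. Its matrix coefficients $\langle k,0|\hat{\mathcal D}_g|k,0\rangle$ decay like $|\alpha|^{-2k}$.
\item For $k>1/2$ they lie in $L^{2+\epsilon}(G)$ for every $\epsilon>0$, and for $k\ge1$ they are in fact in $L^2$, so the representation is tempered.
\item Tempered representations of $G$ are weakly contained in $\lambda_G$. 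Restricting to the lattice $\Gamma$, the restriction $\lambda_G|_\Gamma$ is a multiple of $\lambda_\Gamma$, because $L^2(G)\cong\ell^2(\Gamma)\otimes L^2(\Gamma\backslash G)$.
\item Hence $\pi_k|_\Gamma\prec\lambda_\Gamma$.
\end{itemize}
An alternative that avoids Lie theory is the Cowling--Haagerup--Howe criterion. The coefficients restricted to $\Gamma$ are in $\ell^{2+\epsilon}(\Gamma)$, since $\sum_\gamma |\alpha_\gamma|^{-s}$ converges for $s>2$ by the lattice point count in $\DD$; this again yields weak containment in $\lambda_\Gamma$.

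\textbf{Step 3 (norm comparison).} Weak containment $\pi\prec\lambda$ implies $\|\pi(f)\|\le\|\lambda(f)\|$ for every $f\in\ell^1(\Gamma)$. Indeed, the $C^*$-norm bound follows because each diagonal coefficient $\langle v,\pi(f^*\!*f)^n v\rangle$ is a limit of sums of $\lambda$-coefficients, and taking $n$-th roots gives the spectral radius. Applying this with $f=\mu$ gives $\|M\|\le\|M_\lambda\|<1$, which completes the proof with $\delta=1-\|M_\lambda\|$.

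The main obstacle is Step 2, in particular making sure the coefficient decay is strong enough in the lowest sector and that the representation, possibly projective at half-integer $k$, restricts to an honest representation of $\Gamma$ (the spin-structure choice). I would first try the $\ell^{2+\epsilon}$ route, computing the coefficient explicitly as $\langle k,0|\hat{\mathcal D}_\gamma|k,0\rangle=\alpha_\gamma^{*-2k}$ up to phase. This reduces Step 2 to convergence of a Poincaré-type series, which holds for exponent $4k>2$.
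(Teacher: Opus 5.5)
Your proposal is correct and follows the paper's three-step proof essentially verbatim: Kesten's theorem for the non-amenable surface group, then the chain $D_k^+|_\Gamma\preceq\lambda_G|_\Gamma\preceq\lambda_\Gamma$, then monotonicity of operator norms under weak containment. The differences are minor: you get $D_k^+\le\lambda_G$ from the decay of its matrix coefficients where the paper uses the Plancherel embedding into $L^2(\SU(1,1))^{\mathbb{Z}_2}$, and you use a ping-pong free subgroup where the paper uses the surjection $\Gamma\twoheadrightarrow F_2$. If you instead pursue the Cowling--Haagerup--Howe alternative, note that $|k,0\rangle$ is cyclic for $G$ but need not be cyclic for $\Gamma$, so apply the criterion to the dense span of the coherent states $\hat{\mathcal D}_g|k,0\rangle$; their $\Gamma$-coefficients obey the same $|\alpha_\gamma|^{-2k}$ decay up to $g$-dependent constants.
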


\noindent {\it Proof sketch}. Concatenate the three bounds above:
$\sup\sigma(M) \leq \|M\| \leq \|M_\lambda\| < 1$; the full
argument is in~\cite{SM}.~$\square$\\

The gap can be made quantitative.   At genus two, Nagnibeda's upper bound on $\|M_\lambda\|$ gives
$\delta \geq 1 - \|M_\lambda\| \geq
0.337$~\cite{Nagnibeda1997,SM}---one bound for every genus-two
code built on the standard four surface-group generators, since
$\|M_\lambda\|$ sees only the abstract group and its generating
set, not their realization---while for the Bolza surface at $k = 2$ the lowest eigenvalue
of~$\Hstab$, extrapolated from finite-dimensional truncation,
is $\delta \approx 0.45$~\cite{SM}: three quarters of the gap is
set by the group-theoretic properties of~$\Gamma$ alone.

The gap obstructs not only exact code words but
\emph{approximate} ones.  For GKP codes,
$\inf\sigma(\Hstab) = 0$ and one can construct approximate
code words with arbitrarily small stabilizer
energy~\cite{GKP2001}. The gap $\delta > 0$ makes this
impossible here: no sequence of normalizable states can
drive $\langle\Hstab\rangle$ below~$\delta$.
Combining with the known existence of approximate GKP
states:

\begin{corollary}[Amenability dichotomy]\label{cor:dichotomy}
  Among the bosonic stabilizer codes on compact Riemann surfaces
  of genus~$g \geq 1$ constructed here---plane and
  discrete-series representations---approximate code words of
  arbitrarily high precision exist if and only if\/ $g = 1$.
\end{corollary}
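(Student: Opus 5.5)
The corollary splits into two directions, and I will treat them separately, relying on Theorem~\ref{thm:gap} for the hard one.

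\emph{The case $g\ge 2$, nonexistence.} Here the stabilizers are the squeezing operators $\hat{\mathcal D}_\gamma$ acting on a discrete-series sector $\HH_k$, with $\Gamma$ a cocompact Fuchsian group. I take ``approximate code words of arbitrarily high precision'' to mean a sequence of unit vectors $v_n$ such that $\|\hat{\mathcal D}_i v_n - v_n\|\to 0$ for every generator. Equivalently, the sequence satisfies $\langle v_n|\Hstab|v_n\rangle\to 0$. The equivalence follows from the identity
\[
\langle v|\bigl(\id-\tfrac12(\hat{\mathcal D}_i+\hat{\mathcal D}_i^\dagger)\bigr)|v\rangle=\tfrac12\|\hat{\mathcal D}_i v-v\|^2 ,
\]
together with the fact that each summand of $\Hstab$ is nonnegative, so all of them must vanish simultaneously. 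The precision condition for a general $\gamma$ then follows from the condition on the generators, by the triangle inequality and unitarity applied to words in the generators. Theorem~\ref{thm:gap} gives $\langle v|\Hstab|v\rangle\ge\delta>0$ uniformly in $v$, which contradicts $\langle v_n|\Hstab|v_n\rangle\to 0$. Taking the limit precision $0$ also rules out exact normalizable code words. I will also remark that the gap depends only on $\|M_\lambda\|<1$, so the argument does not depend on the choice of symmetric generating set. Changing the generating set changes $\delta$ but keeps it positive, because for a non-amenable group Kesten's criterion holds for every finite symmetric generating set.

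\emph{The case $g=1$, existence.} Here $\Gamma\simeq\mathbb Z^2$ acts by displacements on the plane representation, which is the GKP setting. I will cite the standard approximate GKP states, for instance Gaussian-enveloped sums of squeezed peaks~\cite{GKP2001}, for which $\|\hat{\mathcal D}_i v_\Delta-v_\Delta\|\to 0$ as $\Delta\to 0$, and hence $\langle\Hstab\rangle\to 0$. As a conceptual cross-check I will note that $\mathbb Z^2$ is amenable. Its Følner sets $F_n$ (large boxes) give almost-invariant vectors in the regular representation, and these transfer to the displacement representation. That representation is a direct integral of characters, so it contains almost-invariant vectors near the trivial character. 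Under these conditions $\inf\sigma(\Hstab)=0$.

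\emph{Main obstacle.} The main difficulty is making the dichotomy statement precise. I need to verify that the two notions of ``precision'', small stabilizer energy and small per-generator defect, agree; the identity displayed above settles this. I also need to check that the $g=1$ representation in the corollary is indeed the plane (displacement) representation and not a discrete-series one. Genus one cannot be uniformized by $\DD$, so no other case arises. Once these points are settled, both directions reduce immediately to Theorem~\ref{thm:gap} and the known GKP construction.
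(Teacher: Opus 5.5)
Your proposal is correct and follows the paper's own argument: the $g\ge 2$ direction is exactly Theorem~\ref{thm:gap} ($\langle\Hstab\rangle\ge\delta>0$ for every normalizable state), and the $g=1$ direction is the known existence of approximate GKP states. The identity $\langle v|\bigl(\id-\tfrac12(\hat{\mathcal D}_i+\hat{\mathcal D}_i^\dagger)\bigr)|v\rangle=\tfrac12\|\hat{\mathcal D}_i v-v\|^2$, the generating-set independence, and the F\o{}lner cross-check add detail the paper leaves implicit without changing the route. The F\o{}lner cross-check is valid because, via the Zak transform, the GKP stabilizer action is a multiple of the regular representation of $\mathbb Z^2$.
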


{\it Discussion}\label{sec:discussion}. This work completes a
trinity of continuous-variable quantum codes from compact, two-dimensional phase spaces, classified by
curvature and genus $g$.  On the sphere ($g = 0$), the stabilizer group is finite
and exact code words exist; examples include spin~\cite{Gross2020} and the spherical-Landau-level~\cite{Fan2023} codes.  On the flat
torus ($g = 1$), the stabilizer group $\mathbb{Z}^2$ is infinite but
amenable, and the GKP codes~\cite{GKP2001} admit approximate code
words of arbitrary precision at the cost of increasing energy.  

We
settle the hyperbolic case $g \geq 2$: the non-amenability of the
surface group opens a spectral gap (Theorem~\ref{thm:gap}) that
forbids even approximate stabilizer eigenstates.  Amenability
of~$\pi_1(\Sigma_g)$ is the sharp dividing line
(Corollary~\ref{cor:dichotomy}). The infinite-dimensional setting is essential: in finite
dimension every generalized eigenvector is normalizable, so a
nonzero generalized code space cannot coexist with a positive
bound on all normalizable states.  The closest finite-dimensional analogue is the no-low-energy-trivial-states (NLTS)
theorem~\cite{FreedmanHastings2014,ABN2023}: for certain
families of qubit stabilizer Hamiltonians whose interaction
graphs are expander graphs, every state below a fixed energy
density is nontrivial, i.e. unreachable by a constant-depth
circuit. So expansion is a kind of finite-family analogue of
non-amenability~\cite{SM}, but it removes only the
\emph{trivial} low-energy states; here non-amenability
removes them all.

The obstruction, however, is intrinsically a statement about the exact unitary stabilizers $\hat{\mathcal{D}}_\gamma$. Finite-energy analogues—in which these unitaries are deformed by a non-unitary envelope to confine the system's energy—need not inherit this gap. For standard GKP codes, precisely this strategy underlies the Trotterized stabilization protocols of Royer \emph{et al.}~\cite{Royer2020}, which engineer dissipation toward the $+1$ eigenspace of finite-energy stabilizers. Determining if these dissipative stabilization schemes can be adapted to tame the non-amenable geometry of higher-genus curves is an interesting open challenge for realizing these codes in hardware.

{\it Acknowledgements}. We would like to thank Hossein Dehgani and Jonathan Roberts for helpful discussions. D.R. acknowledges support from the Joint Quantum Institute at the University of Maryland, where this work was initiated, and from Extropic. V.V.A. acknowledges NSF grant OMA2120757 (QLCI). All code, notebooks, and data reproducing the numerical results and figures of this work are available in a Zenodo replication package~\cite{ZenodoRepo2026}. This manuscript was edited with the assistance of Claude, developed by Anthropic. Claude was used to help obtain the spectral-gap theorem and amenability dichotomy, in accordance with the authors' instructions. All content, claims, and conclusions have been reviewed and verified by the authors to ensure accuracy and originality. Certain products, commercial and otherwise, are mentioned in this publication. These mentions are for informational purposes only, and do not imply recommendation or endorsement by NIST.

\bibliography{prl-refs}

\end{document}


\title{Supplemental Material: Bosonic codes from compact phase spaces}

\author{David Roberts}
\thanks{These authors contributed equally to this work.}
\affiliation{Extropic Corporation, Cambridge, Massachusetts 02458, USA}
\affiliation{Joint Quantum Institute, University of Maryland, College Park, MD 20742, USA}
\affiliation{Joint Center for Quantum Information and Computer Science, NIST/University of Maryland, College Park, MD 20742, USA}

\author{Aaron Slipper}
\thanks{These authors contributed equally to this work.}
\affiliation{Department of Mathematics, Duke University, Durham, NC 27708, USA}

\author{Alireza Parhizkar}
\affiliation{Joint Quantum Institute, University of Maryland, College Park, MD 20742, USA}

\author{Victor V.\ Albert}
\affiliation{Joint Center for Quantum Information and Computer Science, NIST/University of Maryland, College Park, MD 20742, USA}

\author{Mohammad Hafezi}
\affiliation{Joint Quantum Institute, University of Maryland, College Park, MD 20742, USA}
\affiliation{Joint Center for Quantum Information and Computer Science, NIST/University of Maryland, College Park, MD 20742, USA}

\date{\today}

\maketitle

\renewcommand{\theequation}{S\arabic{equation}}
\renewcommand{\thefigure}{S\arabic{figure}}
\renewcommand{\thetable}{S\arabic{table}}
\renewcommand{\thesection}{S\Roman{section}}
\setcounter{section}{0}
\setcounter{secnumdepth}{3}

\vspace{\baselineskip}

\tableofcontents

\section{Framework: phase-space stabilizer codes}\label{fw:framework}

\emph{Classical and generalized displacement groups.}---Fix a quantized phase
space: a complex manifold $C$ together with a holomorphic map into $\PP(\HH)$, as in the main text\footnote{Two realizations occur in this work: the plane,
with the Bargmann family $|z\rangle=e^{z\hat a^{\dagger}}|0\rangle$,
and the disk, with the $\SU(1,1)$ family of the main text,
$|z\rangle=e^{z\Kp}|k,0\rangle$ at Bargmann index $k$
(Sec.~\ref{fw:hyperbolic}). We expect a more general
framework in which the physical system supporting the stabilizer code itself may have a noncontractible phase space $C$, but we do not
develop it here. Such a framework is necessary for considering e.g. concatenated codes.}. We call a
\emph{classical displacement group} any subgroup
$G\subseteq\Aut_{\mathrm{hol}}(C)$ of the holomorphic automorphism
group of $C$.  A \emph{generalized displacement group} is a central
extension of $G$ by a group of phases $Z$,
\begin{equation}\label{fw:eq:pauli-ext}
  1 \longrightarrow Z \longrightarrow \cD
    \xrightarrow{\ \pi\ } G \longrightarrow 1 ,
\end{equation}
realized on states through a projective action of $G$ that is
\emph{linearized} on $\cD$, in the sense that each element
$\tilde g\in\cD$ over $g=\pi(\tilde g)\in G$ is represented by a unitary operator on $\mathcal H$ that acts on the
coherent-state frame by
\begin{equation}\label{fw:eq:automorphy}
  \hat{\mathcal D}_{\tilde g}\,|z\rangle \;=\; \cJ_{\tilde g}(z)\,\bigl|\,g(z)\bigr\rangle ,
\end{equation}
where the nowhere-vanishing holomorphic multiplier $\cJ_{\tilde g}(z)$
is the \emph{factor of automorphy}.  Associativity of
\eqref{fw:eq:automorphy} is the cocycle identity
$\cJ_{\tilde g'\tilde g}(z)=\cJ_{\tilde g'}(g z)\,\cJ_{\tilde g}(z)$, so
$\tilde g\mapsto\hat{\mathcal D}_{\tilde g}$ is an honest (unitary)
representation of $\cD$ and a projective representation of $G$.

\begin{definition}[Phase-space stabilizer code]\label{fw:def:code}
A \emph{classical stabilizer group} is a discrete subgroup
$\Gamma\subset G$ acting freely and properly discontinuously on $C$.
A \emph{phase-space stabilizer code} is a lift of $\Gamma$ through
$\pi$: a subgroup $\cS\subset\cD$ with
$\pi|_{\cS}\colon\cS\to\Gamma$ an isomorphism, equivalently the
image of a splitting $\sigma\colon\Gamma\to\cD$
($\pi\circ\sigma=\mathrm{id}_{\Gamma}$).  We write
$\CS := C/\Gamma$ for the compact quotient phase space.
\end{definition}

\noindent Exact stabilizer states are non-normalizable---ideal GKP
code words already show this---so the fixed-point condition must be
posed in a space slightly larger than $\HH$.  We use a rigged Hilbert
space (Gelfand triple) $\Phi\subset\HH\subset\Phi'$: a dense test
space $\Phi$ containing every coherent frame vector $|z\rangle$ and
preserved by every displacement operator,
$\hat{\mathcal D}_{\tilde g}\,\Phi=\Phi$, together with its antidual
$\Phi'$, on which the $\hat{\mathcal D}_{\tilde g}$ then act by
duality:
\begin{equation}\label{fw:eq:dual-action}
  \bigl\langle\hat{\mathcal D}_{\tilde g}\,\psi\,\big|\,\varphi\bigr\rangle
  \;:=\;
  \bigl\langle\psi\,\big|\,\hat{\mathcal D}_{\tilde g}^{\dagger}\,\varphi\bigr\rangle,
  \qquad
  \psi\in\Phi',\quad \varphi\in\Phi .
\end{equation}
The \emph{code space} of $\cS$ consists of the states fixed by every
stabilizer,
\begin{equation}\label{fw:eq:code-space}
  \mathcal H_{\cS}
  \;:=\;\bigl\{\psi\in\Phi'\,:\,
  \hat{\mathcal D}_{s}\,\psi=\psi\ \ \forall s\in\cS\bigr\}.
\end{equation}
Elements of $\mathcal H_{\cS}$ are the \emph{code
states} (stabilizer states).\\

\subsection{Inner product on $\mathcal H_{\cS}$}
For $\mathcal H_{\cS}$ to carry a
stellar transform of its own, an inner product must first be defined:
the code space consists of generalized vectors---it sits inside the
rigged extension $\Phi'$---and inherits no inner product from $\HH$.
In particular the $Q$-function of a code state is not a priori
defined.  What \emph{is} available is the overlap density upstairs:
for code states $\psi,\phi$, form
\begin{equation}\label{fw:eq:overlap-density}
  \Lambda_{\psi,\phi}(z)
  \;:=\;\frac{\langle\psi|z\rangle\,\langle z|\phi\rangle}
             {\langle z|z\rangle}
  \;=\;\frac{f^{\psi}(z)\,f^{\phi}(z)^{*}}{\langle z|z\rangle}\,,
\end{equation}
the density of the stellar inner product on $C$---the overlap density
of a \emph{pair} of states, not merely the Husimi density of a single
one.  It is $\Gamma$-invariant:
\begin{equation}\label{fw:eq:density-invariance}
  \Lambda_{\psi,\phi}(\gamma z) =\frac{f^{\psi}(\gamma z)\,f^{\phi}(\gamma z)^{*}}
       {\langle\gamma z|\gamma z\rangle}
  \;=\;|j_{\gamma}(z)|^{-2}\,
  \frac{f^{\psi}(z)\,f^{\phi}(z)^{*}}
       {\langle z|\hat{\mathcal D}_{\sigma(\gamma)}^{\dagger}
        \hat{\mathcal D}_{\sigma(\gamma)}|z\rangle\,
        |j_{\gamma}(z)|^{-2}}
  \;=\;\Lambda_{\psi,\phi}(z),
\end{equation}
by the automorphy of the stellar functions of code states and
unitarity of the lifted stabilizer.  The upstairs inner product of two
code states therefore diverges for a trivial reason: it integrates
infinitely many copies of the same value, one per tile of the
$\Gamma$-action.  The fix is to choose a fundamental domain for the $\Gamma$ action, and integrate once over that domain.

\begin{definition}[Inner product on the code space]\label{fw:def:petersson}
Fix a fundamental domain $\mathcal F\subset C$ for the action of
$\Gamma$ (by \eqref{fw:eq:density-invariance} the
choice is immaterial).  For $\psi,\phi\in\mathcal H_{\cS}$,
\begin{equation}\label{fw:eq:petersson}
  \langle\psi|\phi\rangle_{\cS}
  \;:=\;\int_{\mathcal F}
  \frac{f^{\psi}(z)\,f^{\phi}(z)^{*}}{\langle z|z\rangle}\,d\mu(z),
\end{equation}
where $d\mu$ is the invariant volume form of $C$ (the pullback of the
ambient Fubini--Study form along $z\mapsto[\,|z\rangle\,]$),
normalized so that the ambient resolution of identity reads
$\id=\int_{C}P_z\,d\mu$ with
$P_z=|z\rangle\langle z|/\langle z|z\rangle$ (on the disk,
$d\mu=\tfrac{2k-1}{\pi}\,(1-|z|^2)^{-2}\,dA$; on the plane,
$d\mu=dA/\pi$).\footnote{On the disk we take $k>\tfrac12$.}
\end{definition}

\subsection{Coherent states in the code space}
We now define systems of coherent states associated with the code space. To do this precisely, we work in charts. Concretely, choose a \emph{good} cover $\{U_\alpha\}$ of
$\CS$---one whose nonempty finite intersections are all
contractible---and on each patch a local lift
$z_\alpha\colon U_\alpha\to C$ of the quotient map $q\colon C\to C/\Gamma$. Because $\Gamma$ acts holomorphically, $q$ is a local
biholomorphism, so each $z_\alpha$ is a biholomorphism onto a sheet
of the cover---a chart on $U_\alpha$ modeled on $C$ rather than on
$\mathbb C^{n}$.  For the complex \emph{curves} of this work $C\subseteq \mathbb{C}$ and the
distinction evaporates.\\

Projecting the corresponding coherent states into the code space produces local
holomorphic frames\footnote{This sum is formally divergent as an element of $\mathcal H$. When a rigorous meaning is needed, we interpret this sum via its stellar transform, as a sum in $\mathcal O_C$---pointwise convergent whenever the weight satisfies $2k > 2$, and understood as a formal sum otherwise.}
\begin{equation}\label{fw:eq:local-frame}
  |x\rangle_{\alpha} \;:=\; |z_\alpha(x)\rangle_{\cS}
  \;=\;\sum_{s\in\cS}\hat{\mathcal D}_s\,|z_\alpha(x)\rangle ,
  \qquad x\in U_\alpha .
\end{equation}
The collection $\{(U_\alpha,z_\alpha)\}$ is an atlas exhibiting $\CS$
as a complex manifold, and the group $\Gamma$ enters through its changes of
coordinates.  On an overlap $U_{\alpha\beta}=U_\alpha\cap U_\beta$ the
two charts are related by a holomorphic change of coordinates. Because $U_{\alpha\beta}$ is
connected, there exists a unique deck transformation $\gamma_{\alpha\beta}\in\Gamma$ which relates the two coordinates:
\begin{equation}\label{fw:eq:chart-transition}
  z_\beta \;=\; \gamma_{\alpha\beta}\circ z_\alpha
  \quad\text{on } U_{\alpha\beta},
  \qquad \gamma_{\alpha\beta}\in\Gamma \text{ unique}.
\end{equation}
Writing
$j_{\gamma}:=\cJ_{\sigma(\gamma)}$, the automorphy law \eqref{fw:eq:automorphy} gives
$|\gamma z\rangle
=j_{\gamma}(z)^{-1}\,\hat{\mathcal D}_{\sigma(\gamma)}|z\rangle$, and
the frames compare by a direct computation:
\begin{equation}\label{fw:eq:frame-transition}
\begin{split}
  |x\rangle_{\beta}
  \;=\;\sum_{s\in\cS}\hat{\mathcal D}_s\,
      \bigl|\gamma_{\alpha\beta}\,z_\alpha(x)\bigr\rangle
  \;&=\; j_{\gamma_{\alpha\beta}}\!\bigl(z_\alpha(x)\bigr)^{-1}
      \sum_{s\in\cS}\hat{\mathcal D}_{s}\,
      \hat{\mathcal D}_{\sigma(\gamma_{\alpha\beta})}\,
      |z_\alpha(x)\rangle \\
  \;&=\; j_{\gamma_{\alpha\beta}}\!\bigl(z_\alpha(x)\bigr)^{-1}\,
      |x\rangle_{\alpha} ,
\end{split}
\end{equation}
the last equality because $\sigma(\gamma_{\alpha\beta})\in\cS$, so
right multiplication by it merely reindexes the stabilizer sum: the
stabilizer absorbs its own elements.  This is precisely the
local data of a holomorphic line bundle $ \mathbb L_{\cS}\longrightarrow\CS$, with transition functions $j_{\gamma_{\alpha\beta}}^{-1}\circ z_\alpha$.\\

{\it Kodaira map}. The above computation demonstrates in particular that overlapping frames differ by scalars, so
$x\mapsto[\,|x\rangle_\alpha]$ is chart-independent and defines a
globally holomorphic map
\begin{equation}\label{fw:eq:kodaira-map}
  \Psi\colon \CS \longrightarrow \PP(\mathcal H_{\cS})
\end{equation}
into the projectivized code space, with
\begin{equation}
    \mathbb L_{\cS}=\Psi^{*}\mathcal O(-1)
\end{equation} the pullback
of the tautological bundle.\footnote{\label{fw:fn:cocycle-def}In some situations, $\Psi$ might not
exist: if the averaged frames \eqref{fw:eq:local-frame} vanish at
some point, the ray $[\,|x\rangle_\alpha]$ is undefined there.
This is avoided as long as $\Ls$ is \emph{basepoint-free}---its
global sections have no common zero.  More generally, we take as
the \emph{definition} of $\mathbb L_{\cS}$ the line bundle with
transition functions $j_{\gamma_{\alpha\beta}}^{-1}\circ z_\alpha$
[Eq.~\eqref{fw:eq:frame-transition}]: these are nowhere vanishing
and satisfy the cocycle identity.  This makes $\mathbb{L}_{\cS}$
well defined as long as the code $\cS$ is---even if $\Ls$ is not
basepoint-free, and even if the frames \eqref{fw:eq:local-frame} formally diverge as elements of $\mathcal O_C$. In situations where the frames exist and have no common zero, we keep the frame language, so that the abstract transition functions that define the sheaf $\mathbb{L}_{\mathcal S}$ are read off from actual physical states.}\\

\begin{theorem}[Compact stellar transform of a code
state]\label{fw:thm:reproducing}
Let $\psi\in \mathcal H_{\mathcal S}$ and $\{f^\psi_\alpha\}_\alpha$ denote its stellar transform. Then
\begin{equation}\label{fw:eq:reproducing}
  f^\psi_\alpha(x) :=\bigl\langle\psi\,\big|\,x\bigr\rangle_\alpha^{\mathcal S}
  \;=\;f^{\psi}\bigl(z_\alpha(x)\bigr),
\end{equation}
where $f^{\psi}$ is the stellar transform of $\psi$ with respect to
$C$, $|z_\alpha(x)\rangle_{\cS}=|x\rangle_\alpha$ is the compact
coherent state \eqref{fw:eq:local-frame}, and
$\langle\psi\,|\,x\rangle^{\cS}_{\alpha}$ denotes the inner product
of Definition~\ref{fw:def:petersson} of $\psi$ with the frame
$|x\rangle_\alpha$.
\end{theorem}

\begin{proof}
Let $\psi$ be a code state, write $z_0=z_\alpha(x)$, and for
$\gamma\in\Gamma$ define
\begin{equation}
  T_{\gamma}(z)\;:=\;
  \frac{\langle\psi|z\rangle\,
        \langle z|\hat{\mathcal D}_{\sigma(\gamma)}|z_0\rangle}
       {\langle z|z\rangle}.
\end{equation}
Then $T_{\gamma}(\gamma'z)=T_{\gamma'^{-1}\gamma}(z)$:
\begin{equation}\label{fw:eq:T-covariance}
  T_{\gamma}(\gamma'z)
  =\frac{|j_{\gamma'}(z)|^{-2}\,
    \langle\psi|\hat{\mathcal D}_{\sigma(\gamma')}|z\rangle\,
    \langle z|\hat{\mathcal D}_{\sigma(\gamma')}^{\dagger}
    \hat{\mathcal D}_{\sigma(\gamma)}|z_0\rangle}
   {|j_{\gamma'}(z)|^{-2}\,
    \langle z|\hat{\mathcal D}_{\sigma(\gamma')}^{\dagger}
    \hat{\mathcal D}_{\sigma(\gamma')}|z\rangle}
  =T_{\gamma'^{-1}\gamma}(z):
\end{equation}
the factors of automorphy cancel between numerator and denominator,
$\psi$ is an $\cS$-stabilizer state, and
$\hat{\mathcal D}_{\sigma(\gamma')}^{\dagger}
\hat{\mathcal D}_{\sigma(\gamma)}
=\hat{\mathcal D}_{\sigma(\gamma'^{-1}\gamma)}$ because
$\hat{\mathcal D}$ is an honest (linear) unitary representation of
$\cD$ and $\sigma$ is a homomorphism.  Now insert the stabilizer sum
\eqref{fw:eq:local-frame} into \eqref{fw:eq:petersson}: the integrand
is $\sum_{\gamma}T_{\gamma}$, and \eqref{fw:eq:T-covariance} with
$\gamma'=\gamma$, together with the invariance of $d\mu$, gives
$\int_{\mathcal F}T_{\gamma}\,d\mu
=\int_{\gamma^{-1}\mathcal F}T_{e}\,d\mu$.  The tiles
$\{\gamma^{-1}\mathcal F\}_{\gamma\in\Gamma}$ partition $C$, so
\begin{equation}
  \langle\psi|z_\alpha(x)\rangle_{\cS}
  =\sum_{\gamma}\int_{\mathcal F}T_{\gamma}\,d\mu
  =\int_{C}\frac{\langle\psi|z\rangle\,\langle z|z_0\rangle}
                {\langle z|z\rangle}\,d\mu(z)
  =\langle\psi|z_0\rangle
  =f^{\psi}(z_0),
\end{equation}
the last two equalities by the ambient resolution of identity.  The
stabilizer sum in the ket has exactly compensated the restriction of
the domain---one group element per tile.
\end{proof}

\begin{theorem}\label{fw:thm:linear-normality}
The compact stellar transform
\begin{equation}\label{fw:eq:compact-stellar}
  \mathcal H_{\cS}\;\longrightarrow\;\Gamma(\CS,\Ls),
  \qquad
  |\psi\rangle\;\longmapsto\;\{f^{\psi}_{\alpha}\}_{\alpha},
  \qquad
  f^{\psi}_{\alpha}(x):=\langle\psi|z_\alpha(x)\rangle,
\end{equation}
is an antilinear isomorphism onto the global sections of
$\Ls:=\mathbb L_{\cS}^{\vee}$.
\end{theorem}

\begin{proof}
The stellar transform $\psi\mapsto f^{\psi}$ on $C$ identifies the
code space $\mathcal H_{\cS}$ with the space of $\Gamma$-automorphic
forms
\begin{equation}\label{fw:eq:automorphic-space}
  \mathcal A(\Gamma,\cS)
  \;\equiv\;\bigl\{f\in\mathcal O(C):
  f(\gamma z)=j_{\gamma}(z)^{-1}f(z)\ \ \forall\gamma\in\Gamma\bigr\}.
\end{equation}
(The automorphy condition on $f^{\psi}$ is precisely the
$\cS$-invariance \eqref{fw:eq:code-space} of $\psi$, read through the
automorphy law \eqref{fw:eq:automorphy}; every automorphic form
arises because the stellar transform identifies $\Phi'$ with the
holomorphic functions of admissible growth, as in the main text.)
By Theorem~\ref{fw:thm:reproducing} the compact
stellar transform is $f^{\psi}_{\alpha}=f^{\psi}\circ z_\alpha$,
whose gluing law
$f_{\beta}=j_{\gamma_{\alpha\beta}}(z_\alpha)^{-1}f_{\alpha}$
[inherited from \eqref{fw:eq:frame-transition}] is that of a section
of $\Ls$.  It remains to invert the map
$f\mapsto\{f\circ z_\alpha\}$ from $\mathcal A(\Gamma,\cS)$ to
$\Gamma(\CS,\Ls)$.

Given $z\in C$ over a point $x\in U_\alpha$, freeness of the
$\Gamma$-action gives a unique $\gamma\in\Gamma$ with
$z=\gamma\,z_\alpha(x)$.  Define
\begin{equation}\label{fw:eq:inverse-formula}
  f(z) \;\equiv\; j_{\gamma}\bigl(z_\alpha(x)\bigr)^{-1} f_\alpha(x).
\end{equation}
\emph{Chart-independence.}  If also $x\in U_\beta$, write
$z=\gamma' z_\beta(x)$; from
$z_\beta=\gamma_{\alpha\beta}z_\alpha$ and uniqueness,
$\gamma=\gamma'\gamma_{\alpha\beta}$.  Using the gluing law
$f_\beta=j_{\gamma_{\alpha\beta}}(z_\alpha)^{-1}f_\alpha$ and the
cocycle identity,
\begin{equation}
  j_{\gamma'}(z_\beta)^{-1}f_\beta
  = j_{\gamma'}\bigl(\gamma_{\alpha\beta}z_\alpha\bigr)^{-1}
    j_{\gamma_{\alpha\beta}}(z_\alpha)^{-1} f_\alpha
  = j_{\gamma'\gamma_{\alpha\beta}}(z_\alpha)^{-1} f_\alpha 
  = j_{\gamma}(z_\alpha)^{-1} f_\alpha ,
\end{equation}
so \eqref{fw:eq:inverse-formula} is well defined.

\emph{Holomorphy.}  Near any $z_0$, the deck element $\gamma$ and the
chart $\alpha$ may be held fixed, so $f$ is locally the product of the
nowhere-vanishing holomorphic function $j_\gamma(z_\alpha\circ q)^{-1}$
with $f_\alpha\circ q$; hence $f\in\mathcal O(C)$.

\emph{Automorphy.}  For $\gamma_0\in\Gamma$ the unique deck element
carrying $z_\alpha(x)$ to $\gamma_0 z$ is $\gamma_0\gamma$, so by the
cocycle identity
\begin{equation}
  f(\gamma_0 z)
  = j_{\gamma_0\gamma}(z_\alpha)^{-1} f_\alpha
  = j_{\gamma_0}\bigl(\gamma z_\alpha\bigr)^{-1}
    j_{\gamma}(z_\alpha)^{-1} f_\alpha
  = j_{\gamma_0}(z)^{-1} f(z),
\end{equation}
i.e.\ $f\in\mathcal A(\Gamma,\cS)$.

\emph{The maps are mutually inverse.}  Setting $\gamma=e$ in
\eqref{fw:eq:inverse-formula} gives $f\circ z_\alpha=f_\alpha$, so the
composite $\{f_\alpha\}\mapsto f\mapsto\{f\circ z_\alpha\}$ is the
identity.  Conversely, if $f_\alpha=f^{\psi}\circ z_\alpha$, then for
\emph{every} $z=\gamma z_\alpha(x)\in C$,
\begin{equation}
  f(z) = j_\gamma(z_\alpha)^{-1} f^{\psi}(z_\alpha)
       = f^{\psi}\bigl(\gamma z_\alpha(x)\bigr) = f^{\psi}(z),
\end{equation}
by the automorphy of $f^{\psi}$---so the reconstruction returns
$f^{\psi}$ on all of $C$, not merely on the chart images.  The
compact stellar transform
$\mathcal H_{\cS}\to\Gamma(\CS,\Ls)$ is therefore an isomorphism.
\end{proof}

\begin{corollary}\label{fw:cor:sections}
$\mathcal H_{\cS}\simeq\Gamma(\CS,\Ls)$ with
$\Ls=\mathbb L_{\cS}^{\vee}$, the isomorphism being
$|\psi\rangle\mapsto\{f^{\psi}_{\alpha}\}_{\alpha}$ with
$f^{\psi}_{\alpha}(x)=\langle\psi|z_\alpha(x)\rangle$: the
\emph{compact} stellar transform of $\psi$ on the coordinate patch
$U_\alpha$ is the noncompact stellar transform of $\psi$ with respect
to $C$, viewed in the coordinate chart
$z_\alpha\colon U_\alpha\to C$.
\end{corollary}

Whenever $\Ls$ is basepoint-free, the Kodaira map~\eqref{fw:eq:kodaira-map}
lands in the projectivization of the finite-dimensional space
$H^{0}(\CS,\Ls)$.  The image is a projective algebraic variety: by Remmert's proper mapping theorem~\cite{Bedford84_fw},
the image of a compact complex manifold under a holomorphic map to
projective space is a compact analytic subvariety, and by Chow's
theorem~\cite{Chow49_fw}, every compact analytic subvariety of
projective space is an algebraic variety (see also
\cite{GriffithsHarris_fw}, p.~167).  

{\it Transport by displacement operators}.  The frame bundle responds
to unitaries by transport of structure, a mechanism we record once
and use repeatedly.

\begin{lemma}[Transport of sheaves by invertible
operators]\label{fw:lem:transport}
Let $\hat U$ be an invertible operator on (the rigged extension of)
$\HH$, and let $\mathbb L$ be an invertible sheaf on a complex
manifold $X$ whose local sections are vector-valued holomorphic maps
$x\mapsto|\psi_x\rangle$.  Define the image sheaf $\hat U(\mathbb L)$
by
\begin{equation}\label{fw:eq:image-sheaf-def}
  \Gamma\bigl(U,\hat U(\mathbb L)\bigr)
  \;:=\;\bigl\{\hat U|\psi_x\rangle \,:\,
  |\psi_x\rangle\in\Gamma(U,\mathbb L)\bigr\}.
\end{equation}
Then $\hat U\colon\mathbb L\to\hat U(\mathbb L)$ is an isomorphism of
sheaves of $\mathcal O_X$-modules; in particular $\hat U(\mathbb L)$
is again invertible.
\end{lemma}

\begin{proof}
$\hat U$ is $\mathcal O_X$-linear,
$\hat U\bigl(f(x)\,|\psi_x\rangle\bigr)=f(x)\,\hat U|\psi_x\rangle$,
and commutes with restriction:
$(\hat U|\psi_x\rangle)\big|_{V}=\hat U\bigl(|\psi_x\rangle\big|_{V}\bigr)$
for $V\subset U$; hence it is a morphism of sheaves of
$\mathcal O_X$-modules.  By the same two properties, $\hat U^{-1}$ is
a morphism $\hat U(\mathbb L)\to\mathbb L$, and the two composites are
the identity: $\hat U$ is an isomorphism, with explicit inverse
$\hat U^{-1}$ (for unitary $\hat U$, its adjoint $\hat U^{\dagger}$).
\end{proof}

\subsection{No global holomorphic lift on compact
  surfaces}\label{sec:no_lift}

We prove the claim in the main text that a holomorphic coherent-state
family parametrized by a compact Riemann surface $C$ admits no
global holomorphic section.

\begin{proof}
  Suppose $z \mapsto |z\rangle$ were a global holomorphic map from
  $C$ to $\HH$ (not just to $\PP(\HH)$).  For any
  $|\phi\rangle \in \HH$, the function
  $g(z) = \langle\phi|z\rangle$ would be holomorphic on the compact
  surface~$C$, hence constant by the maximum principle.  Since this
  holds for every $|\phi\rangle$, the map $z \mapsto |z\rangle$
  itself is constant---contradicting the assumption that it
  parametrizes a non-trivial coherent-state family.

  More precisely, the obstruction is topological.  The coherent-state
  map $\Psi\colon C \to \PP(\HH)$ defines the holomorphic line
  bundle $\mathbb{L} = \Psi^{*}\mathcal{O}(-1)$, which has negative
  degree for any non-constant holomorphic map to projective space.
  A global holomorphic lift would provide a nowhere-vanishing
  section of $\mathbb{L}$, impossible for a line bundle of nonzero
  degree.  One must therefore work with local frames
  $|z\rangle_\alpha$ on coordinate patches $U_\alpha$, related by
  transition functions
  $|z\rangle_\alpha = g_{\alpha\beta}(z)\,|z\rangle_\beta$.
\end{proof}

\section{Hyperbolic bosonic codes}\label{fw:hyperbolic}

Here $G=\PSU(1,1)$ and $\cD=\cD_k$ is the group of generalized
(hyperbolic) displacement operators of the main text, associated with
the holomorphic discrete series $D_k^{+}$, for which
\begin{equation}\label{fw:eq:hyp-multiplier}
  \cJ_{\tilde g}(z)
  \;=\;(\bar\beta z+\bar\alpha)^{-2k}
  \;=\;g'(z)^{\,k},
\end{equation}
where the lift $\tilde g$ over the classical M\"obius map $g$ is the
choice of phase entering the real power: at half-integral $k$,
$Z\cong\{\pm1\}$ is the choice of branch of the square root, and in
general $Z\cong\langle e^{2\pi ik}\rangle$\footnote{A priori the phase ambiguity in
\eqref{fw:eq:hyp-multiplier} could be function-valued; in fact it is constant, by a two-line argument.  Let
$\cJ$ and $\cJ'$ be two branches of $(\bar\beta z+\bar\alpha)^{-2k}$
over the same classical map $g$ (the matrix being itself defined up
to sign).  They share the modulus
$|\bar\beta z+\bar\alpha|^{-2k}$, so $\cJ'/\cJ$ is a holomorphic
function on $\DD$ of unit modulus, hence constant by the open mapping
theorem.}.

Recall from Sec.~\ref{fw:framework} that the compact system of
coherent states $\{|x\rangle_\alpha\}_\alpha$ of a phase-space
stabilizer code glues across overlapping charts by the holomorphic
transition functions
$j_{\gamma_{\alpha\beta}}(z_\alpha(x))^{-1}$
[Eq.~\eqref{fw:eq:frame-transition}]; by
\eqref{fw:eq:hyp-multiplier} this reads
\begin{equation}\label{fw:eq:hyp-transition}
  |x\rangle_{\beta}
  \;=\;\gamma_{\alpha\beta}'\bigl(z_\alpha(x)\bigr)^{-k}\,
  |x\rangle_{\alpha}.
\end{equation}
This is the transformation rule of holomorphic differentials on
$\CS$: the chain rule applied to the chart transition
\eqref{fw:eq:chart-transition} gives
$dz_\beta=\gamma_{\alpha\beta}'(z_\alpha)\,dz_\alpha$, hence
$(dz_\beta)^{\otimes(-k)}
=\gamma_{\alpha\beta}'(z_\alpha)^{-k}\,(dz_\alpha)^{\otimes(-k)}$.
Therefore the chart-by-chart assignment
\begin{equation}\label{fw:eq:cech-iso}
  \mathbb L_{\cS}\;\xrightarrow{\ \sim\ }\;
  \Omega_{\CS}^{\otimes(-k)},
  \qquad
  |x\rangle_{\alpha}\;\longmapsto\;(dz_\alpha)^{\otimes(-k)},
\end{equation}
matches transition functions on both sides and is an isomorphism of
holomorphic line bundles.\footnote{For integer $k$ the bundle
$\Omega_{\CS}^{\otimes(-k)}$ is well-defined.  For fractional $k$
the $k$-th power of $\Omega_{\CS}$ is defined \emph{through} the
choice of roots implicit in the lift $\sigma$: multiplying transition
functions corresponds to tensoring line bundles, so a consistent
choice of branches of $(\gamma_{\alpha\beta}')^{k}$---which is
exactly the data a lift $\cS$ of $\Gamma$ supplies---is the
transition data of a root of the canonical bundle.}

Dually, $\Ls=\mathbb L_{\cS}^{\vee}\simeq\Omega_{\CS}^{\otimes k}$,
and Theorem~\ref{fw:thm:linear-normality} specializes to
\begin{equation}\label{fw:eq:hyp-sections}
  \mathcal H_{\cS}
  \;\simeq\;
  \Gamma\bigl(\CS,\;\Omega_{\CS}^{\otimes k}\bigr),
\end{equation}
with the root of the canonical bundle on the right determined by the
choice of lift $\cS$ of the classical stabilizer group.  This
undergirds---and finally justifies---the use of the Riemann--Roch
theorem to count the code-space dimension in the main text:
\begin{equation}\label{fw:eq:RR}
  \dim\mathcal H_{\cS}
  \;=\;\deg\Ls\;-\;g(\CS)\;+\;1,
  \qquad
  \deg\Omega_{\CS}^{\otimes k}=k\bigl(2g(\CS)-2\bigr),
\end{equation}
valid whenever the stellar rank $\deg\Ls$ exceeds $2g(\CS)-2$; the
low-degree corrections are those quoted in the main text.

\section{The logical gate group}\label{fw:logical}

As for any stabilizer code, the
logical operators form the \emph{logical gate group}: the
normalizer of the stabilizer in the displacement group, modulo the
stabilizer,
\begin{equation}\label{fw:eq:logical-def}
  \cD_{\cS} \;=\; N_{\cD}(\cS)/\cS .
\end{equation}
This section identifies the logical gate group with the
polarization-preserving automorphisms of $(\CS,\Ls)$, and Mumford's theta group is recovered as the
$\mathbb C^\times$-pushout of the logical gate group
(Theorem~\ref{fw:thm:theta}), specializing at genus one---where the
quotient $\CS=\mathbb C/\Gamma$ \emph{is} an elliptic curve---to
Mumford's classical construction.

\begin{theorem}[Logical gate action]\label{fw:thm:main}
There is a short exact sequence
\begin{equation}\label{fw:eq:main-ses}
  1 \longrightarrow Z \longrightarrow \cD_{\cS}
    \longrightarrow G_{\cS} \longrightarrow 1 ,
\end{equation}
in which the \emph{classical logical gate group} $G_{\cS}$ is a
subgroup of $\Aut_{\Ls}(\CS)$, the polarization-preserving
automorphisms of $\CS$ (equivalently, the automorphisms preserving
the frame bundle $\mathbb L_{\cS}$).  Consequently $G_{\cS}$ acts
projectively on $\mathcal H_{\cS}$.
\end{theorem}

Throughout this section, we work on the good cover
$\mathfrak U=\{U_\alpha\}$ of Sec.~\ref{fw:framework}. To keep the letter $g$ for
group elements we write $t_{\alpha\beta}$ for transition cocycles, so
the code bundle $\mathbb L_{\cS}$ has cocycle
$t^{\cS}_{\alpha\beta}=j_{\gamma_{\alpha\beta}}^{-1}\circ z_\alpha$.\\

For $g\in N_{G}(\Gamma)$ write $\bar g$
for the induced automorphism of the quotient,
$\bar g\bigl(q(z)\bigr):=q(gz)$---well defined precisely because $g$
normalizes $\Gamma$---and let $\tilde g$ be any lift of $g$.  We now
compute the image sheaf of $\mathbb L_{\cS}$ under
$\hat{\mathcal D}_{\tilde g}$.
Applying $\hat{\mathcal D}_{\tilde g}$ to a frame
\eqref{fw:eq:local-frame} of $\mathbb L_{\cS}$ and commuting it
through the stabilizer sum,
\begin{equation}\label{fw:eq:transport-frames}
  \hat{\mathcal D}_{\tilde g}\,|x\rangle_{\alpha}
  \;=\!\!\sum_{s\in\tilde g\cS\tilde g^{-1}}\!\!
    \hat{\mathcal D}_{s}\;\hat{\mathcal D}_{\tilde g}\,
    |z_\alpha(x)\rangle
  \;=\;\cJ_{\tilde g}\bigl(z_\alpha(x)\bigr)
  \!\!\sum_{s\in\tilde g\cS\tilde g^{-1}}\!\!
    \hat{\mathcal D}_{s}\,|g\,z_\alpha(x)\rangle
  \;=\;\cJ_{\tilde g}\bigl(z_\alpha(x)\bigr)\,
    |\bar g x\rangle^{\tilde g\cS\tilde g^{-1}}_{\alpha},
\end{equation}
where the conjugated code $\tilde g\cS\tilde g^{-1}$---itself a
phase-space stabilizer code, since $g$ normalizes $\Gamma$---is
framed in the translated atlas
$\bigl\{\bigl(\bar g(U_\alpha),\;g\circ z_\alpha\circ\bar g^{-1}\bigr)\bigr\}$,
so that the last equality is exact.
Eq.~\eqref{fw:eq:transport-frames} says that
$\hat{\mathcal D}_{\tilde g}$ carries frames of $\mathbb L_{\cS}$ to
nowhere-vanishing holomorphic multiples of pulled-back frames, i.e.
\begin{equation}\label{fw:eq:image-sheaf}
  \hat{\mathcal D}_{\tilde g}\bigl(\mathbb L_{\cS}\bigr)
  \;=\;\bar g^{*}\,\mathbb L_{\tilde g\cS\tilde g^{-1}},
\end{equation}
where $\hat{\mathcal D}_{\tilde g}(\mathbb{L}_{\cS})$ is the image
sheaf of Lemma~\ref{fw:lem:transport}.  Immediately, we have the
following result:

\begin{proposition}\label{fw:prop:conj-pullback}
Let $g\in N_{G}(\Gamma)$ and let $\tilde g\in\cD$ be any lift of $g$.
Then
\begin{equation}\label{fw:eq:conj-pullback}
  \mathbb L_{\tilde g^{-1}\cS\tilde g}
  \;\simeq\; \bar g^{*}\,\mathbb L_{\cS} ,
\end{equation}
the isomorphism being implemented by the displacement operator itself.
\end{proposition}

\begin{proof}
Lemma~\ref{fw:lem:transport} and Eq.~\eqref{fw:eq:image-sheaf},
applied to the lift $\tilde g^{-1}$ of $g^{-1}$, give the isomorphism
$\hat{\mathcal D}_{\tilde g}^{-1}\colon\mathbb L_{\cS}
\xrightarrow{\ \sim\ }(\bar g^{-1})^{*}\,\mathbb L_{\tilde g^{-1}\cS\tilde g}$.
Apply the pullback $\bar g^{*}$ to both sides.
\end{proof}

\begin{corollary}\label{fw:cor:descend}
Define the \emph{$\cS$-polarized normalizer}
\begin{equation}\label{fw:eq:NGS}
  N^{\cS}_{G}(\Gamma)
  \;:=\;\bigl\{\,g\in N_{G}(\Gamma)\;:\;
  \bar g^{*}\,\mathbb L_{\cS}\simeq\mathbb L_{\cS}\,\bigr\}.
\end{equation}
If $\tilde g\in N_{\cD}(\cS)$, then
$g=\pi(\tilde g)\in N^{\cS}_{G}(\Gamma)$.
\end{corollary}

\begin{proof}
Applying the projection $\pi\colon\cD\to G$ to the normalizer
identity gives
$\Gamma=\pi(\cS)=\pi(\tilde g^{-1}\cS\tilde g)=g^{-1}\Gamma g$, so
$g\in N_{G}(\Gamma)$.  It remains to show
$\bar g^{*}\mathbb L_{\cS}\simeq\mathbb L_{\cS}$, which follows from
Proposition~\ref{fw:prop:conj-pullback} combined with the hypothesis
$\tilde g^{-1}\cS\tilde g=\cS$.
\end{proof}

For the converse we track the phase ambiguity of a lift explicitly.
Let $g\in N_{G}(\Gamma)$ and let $\tilde g$ be \emph{any} lift of
$g$.  For $\gamma\in\Gamma$,
$\pi(\tilde g^{-1}\sigma(\gamma)\tilde g)=g^{-1}\gamma g\in\Gamma$,
so
\begin{equation}\label{fw:eq:chi-def}
  \tilde g^{-1}\sigma(\gamma)\,\tilde g
  \;=\;\chi_{\tilde g}(g^{-1}\gamma g)\;\sigma(g^{-1}\gamma g),
  \qquad \chi_{\tilde g}(g^{-1}\gamma g)\in Z .
\end{equation}
From \eqref{fw:eq:chi-def} it follows immediately that
$\chi_{\tilde g}\colon\Gamma\to Z$ is a character:
\begin{equation}
  \tilde g^{-1}\sigma(\gamma\gamma')\,\tilde g
  =\bigl(\tilde g^{-1}\sigma(\gamma)\tilde g\bigr)
   \bigl(\tilde g^{-1}\sigma(\gamma')\tilde g\bigr)
  =\chi_{\tilde g}(g^{-1}\gamma g)\,
   \chi_{\tilde g}(g^{-1}\gamma' g)\,
   \sigma(g^{-1}\gamma\gamma' g),
\end{equation}
where we used that $\sigma$ is a homomorphism.  For a character
$\chi\colon\Gamma\to Z$, let $\mathbb L_{\chi}$ denote the flat line
bundle on $\CS$ defined by the constant transition functions
\begin{equation}\label{fw:eq:char-bundle}
  t^{\chi}_{\alpha\beta}\;:=\;\chi(\gamma_{\alpha\beta})^{-1}
\end{equation}
(the inverse mirrors the convention
$t^{\cS}_{\alpha\beta}=j_{\gamma_{\alpha\beta}}^{-1}\circ z_\alpha$
for $\mathbb L_{\cS}$).

\begin{proposition}[Flat twist]\label{fw:prop:flat-twist}
Let $g\in N_{G}(\Gamma)$ and let $\tilde g$ be any lift of $g$.  Then
\begin{equation}\label{fw:eq:flat-twist}
  \mathbb L_{\tilde g^{-1}\cS\tilde g}
  \;\simeq\;\mathbb L_{\cS}\otimes\mathbb L_{\chi_{\tilde g}},
\end{equation}
with $\chi_{\tilde g}$ the character of \eqref{fw:eq:chi-def}.
\end{proposition}

\begin{proof}
By \eqref{fw:eq:chi-def} the conjugated code is the lift
$\gamma\mapsto\chi_{\tilde g}(\gamma)\,\sigma(\gamma)$ of $\Gamma$,
so its frames are
$|x\rangle^{\tilde g^{-1}\cS\tilde g}_{\alpha}
=\sum_{\gamma\in\Gamma}\chi_{\tilde g}(\gamma)\,
\hat{\mathcal D}_{\sigma(\gamma)}|z_\alpha(x)\rangle$ (central
elements act as the corresponding phases).  We compute the cocycle
exactly as in Eq.~\eqref{fw:eq:frame-transition}:
\begin{equation}
\begin{split}
  |x\rangle^{\tilde g^{-1}\cS\tilde g}_{\beta}
  &=\sum_{\gamma\in\Gamma}\chi_{\tilde g}(\gamma)\,
    \hat{\mathcal D}_{\sigma(\gamma)}\,|z_\beta(x)\rangle
  = j_{\gamma_{\alpha\beta}}\bigl(z_\alpha(x)\bigr)^{-1}
    \sum_{\gamma\in\Gamma}\chi_{\tilde g}(\gamma)\,
    \hat{\mathcal D}_{\sigma(\gamma\gamma_{\alpha\beta})}\,
    |z_\alpha(x)\rangle \\
  &=\Bigl(j_{\gamma_{\alpha\beta}}\bigl(z_\alpha(x)\bigr)\,
    \chi_{\tilde g}(\gamma_{\alpha\beta})\Bigr)^{-1}\,
    |x\rangle^{\tilde g^{-1}\cS\tilde g}_{\alpha},
\end{split}
\end{equation}
reindexing $\gamma\mapsto\gamma\gamma_{\alpha\beta}^{-1}$ in the last
step and using multiplicativity of $\chi_{\tilde g}$.  The resulting
cocycle is
$t^{\cS}_{\alpha\beta}\,t^{\chi_{\tilde g}}_{\alpha\beta}$---the
cocycle of the tensor product
$\mathbb L_{\cS}\otimes\mathbb L_{\chi_{\tilde g}}$.
\end{proof}

\begin{corollary}\label{fw:cor:lift-normalizes}
Let $g\in N^{\cS}_{G}(\Gamma)$.  Then every lift $\tilde g$ of $g$
normalizes $\cS$.
\end{corollary}

\begin{proof}
Combining Propositions~\ref{fw:prop:conj-pullback}
and~\ref{fw:prop:flat-twist} with the hypothesis
$\bar g^{*}\mathbb L_{\cS}\simeq\mathbb L_{\cS}$ gives
\begin{equation}
  \mathbb L_{\cS}\otimes\mathbb L_{\chi_{\tilde g}}
  \;\simeq\;\mathbb L_{\tilde g^{-1}\cS\tilde g}
  \;\simeq\;\bar g^{*}\,\mathbb L_{\cS}
  \;\simeq\;\mathbb L_{\cS},
\end{equation}
so $\mathbb L_{\chi_{\tilde g}}$ is trivial.  A flat character bundle
is trivial iff the character is trivial---under the identification
$\Pic^{0}(\CS)\cong\Hom(\Gamma,\Us(1))$,
$\mathbb L_{\chi}\mapsto\chi$~\cite{Buss09_fw}---hence
$\chi_{\tilde g}\equiv1$, and \eqref{fw:eq:chi-def} reads
\begin{equation}\label{fw:eq:normalizes}
  \tilde g^{-1}\sigma(\gamma)\,\tilde g
  =\sigma(g^{-1}\gamma g)\in\cS
  \qquad\forall\,\gamma\in\Gamma,
\end{equation}
i.e.\ $\tilde g\in N_{\cD}(\cS)$.
\end{proof}

\begin{corollary}\label{fw:cor:normalizer-ses}
The sequence
\begin{equation}\label{fw:eq:normalizer-ses}
  1 \longrightarrow Z \longrightarrow N_{\cD}(\cS)
    \xrightarrow{\ \pi\ } N^{\cS}_{G}(\Gamma) \longrightarrow 1
\end{equation}
is exact.
\end{corollary}

\begin{proof}
Corollary~\ref{fw:cor:descend} says $\pi$ restricted to
$N_{\cD}(\cS)$ takes values in $N^{\cS}_{G}(\Gamma)$;
Corollary~\ref{fw:cor:lift-normalizes} says it is surjective onto it
(lifts exist because $\pi$ is surjective).  Its kernel is
$Z\cap N_{\cD}(\cS)$; since $Z$ is central,
$Z\subseteq N_{\cD}(\cS)$, so the kernel is all of $Z$.
\end{proof}

Quotienting \eqref{fw:eq:normalizer-ses} by the stabilizers yields
Theorem~\ref{fw:thm:main} through an elementary group-theoretic
device, Lemma~\ref{fw:lem:quotient} of Sec.~\ref{fw:ses}.
Applying that lemma to \eqref{fw:eq:normalizer-ses} with
$D=\cS$ (so $\pi(D)=\Gamma$ and $A\cap D=Z\cap\cS=1$) gives exactly
\eqref{fw:eq:main-ses}, with
$G_{\cS}=N^{\cS}_{G}(\Gamma)/\Gamma$.  Finally, $G_{\cS}$ sits
inside $\Aut_{\Ls}(\CS)$---the full group of automorphisms of $\CS$
preserving $\mathbb L_{\cS}$, equivalently its dual $\Ls$---as the
subgroup induced by elements of $G$.  This proves
Theorem~\ref{fw:thm:main}. \hfill$\square$

\emph{Logical gates as bundle isomorphisms.}---For
$\tilde p\in N_{\cD}(\cS)$ the conjugated code \emph{is} $\cS$, so
the image-sheaf computation \eqref{fw:eq:image-sheaf} specializes to
\begin{equation}\label{fw:eq:gate-iso}
  \hat{\mathcal D}_{\tilde p}\colon\;
  \mathbb L_{\cS}\;\xrightarrow{\ \sim\ }\;
  \bar p^{*}\,\mathbb L_{\cS},
  \qquad
  \hat{\mathcal D}_{\tilde p}\,|x\rangle_{\alpha}
  \;=\;\cJ_{\tilde p}\bigl(z_\alpha(x)\bigr)\,
       |\bar p x\rangle_{\alpha} .
\end{equation}
Therefore, for a phase-space stabilizer code, a logical gate in
$\cD_{\cS}$ can be viewed as an isomorphism of
the frame bundle onto its pullback.

\emph{Mumford's theta group.}---The logical gate group admits a
canonical enlargement by scalars.  Consider the set of operators
\begin{equation}\label{fw:eq:pushout-ops}
  \bigl\{\,\lambda\,\hat{\mathcal D}_{\tilde g}\,\cS \;:\;
  \lambda\in\mathbb C^{\times},\ \tilde g\in N_{\cD}(\cS)\,\bigr\}
  \;\simeq\;\cD_{\cS}\times_{Z}\mathbb C^{\times}
\end{equation}
(cosets of the stabilizer, as in \eqref{fw:eq:logical-def}; the group
on the right, the \emph{pushout}, is
$(\cD_{\cS}\times\mathbb C^{\times})/\!\sim$ with
$(z\tilde p,\lambda)\sim(\tilde p,z\lambda)$ for $z\in Z$---the
identification absorbs the central phases common to both factors).
The enlarged operators fit into the enlarged exact sequence
\begin{equation}\label{fw:eq:pushout-ses}
  1\longrightarrow\mathbb C^{\times}
  \longrightarrow\cD_{\cS}\times_{Z}\mathbb C^{\times}
  \longrightarrow G_{\cS}\longrightarrow 1 .
\end{equation}
The enlarged gate group \eqref{fw:eq:pushout-ses} coincides with a
celebrated construction of algebraic geometry, the \emph{theta
group}~\cite{Mumford66_fw} of the pair $(\CS,\mathbb L_{\cS})$,
defined as the set
\begin{equation}\label{fw:eq:theta-def}
  G(\mathbb L_{\cS})
  \;=\;\bigl\{(\bar g,\varphi)\;:\;\bar g\in G_{\cS},\ \
  \varphi\colon\mathbb L_{\cS}\xrightarrow{\ \simeq\ }
  \bar g^{*}\,\mathbb L_{\cS}\bigr\},
\end{equation}
equipped with the group law
$(\bar g,\varphi)\cdot(\bar h,\psi)
=(\bar g\bar h,\;\bar h^{*}\varphi\circ\psi)$.\footnote{Recall the
pullback of a morphism $\varphi\colon\mathcal F\to\mathcal G$ of
sheaves along an automorphism $\bar h$ of the base space, defined by
$\bar h^{*}(\varphi)\bigl(\bar h^{*}s\bigr)
=\bar h^{*}\bigl(\varphi(s)\bigr)$; the contravariance
$(\bar g\bar h)^{*}=\bar h^{*}\bar g^{*}$ makes the group law
associative.}  Transport \eqref{fw:eq:gate-iso} provides a map from
the pushout into the theta group,
\begin{equation}\label{fw:eq:theta-map}
  \Theta\colon\;\lambda\,\hat{\mathcal D}_{\tilde g}\,\cS
  \;\longmapsto\;
  \bigl(\bar g,\ \lambda\,\hat{\mathcal D}_{\tilde g}\colon
  \mathbb L_{\cS}\to\bar g^{*}\,\mathbb L_{\cS}\bigr),
\end{equation}
well defined because all frames of $\mathbb L_{\cS}$ consist of
stabilizer states: for any $\gamma\in\Gamma$,
$\lambda\hat{\mathcal D}_{\tilde g}\hat{\mathcal D}_{\sigma(\gamma)}$
and $\lambda\hat{\mathcal D}_{\tilde g}$ act identically on
$\mathbb L_{\cS}$.

\begin{theorem}[Logical gates and the theta group]\label{fw:thm:theta}
The map \eqref{fw:eq:theta-map} is an isomorphism of groups,
\begin{equation}\label{fw:eq:theta-iso}
  \cD_{\cS}\times_{Z}\mathbb C^{\times}
  \;\simeq\;G(\mathbb L_{\cS}).
\end{equation}
\end{theorem}

\begin{proof}
\emph{Homomorphism.}  The product
$(\lambda\hat{\mathcal D}_{\tilde g})(\chi\hat{\mathcal D}_{\tilde h})
=\lambda\chi\,\hat{\mathcal D}_{\tilde g\tilde h}$ is sent to
$\bigl(\overline{gh},\,
\lambda\chi\,\hat{\mathcal D}_{\tilde g\tilde h}\bigr)$, so we must
verify
$\lambda\chi\,\hat{\mathcal D}_{\tilde g\tilde h}
=\bar h^{*}\bigl(\lambda\hat{\mathcal D}_{\tilde g}\bigr)
 \circ\bigl(\chi\hat{\mathcal D}_{\tilde h}\bigr)$
as morphisms
$\mathbb L_{\cS}\to(\bar g\bar h)^{*}\mathbb L_{\cS}$.  This holds
because an operator acts on the \emph{value} of a section, not on its
base point, so the pulled-back morphism
$\bar h^{*}(\lambda\hat{\mathcal D}_{\tilde g})$ again acts as
$\lambda\hat{\mathcal D}_{\tilde g}$: for
$|\psi_x\rangle\in\Gamma(U,\mathbb L_{\cS})$,
\begin{equation}
  \bar h^{*}\bigl(\lambda\hat{\mathcal D}_{\tilde g}\bigr)
  \bigl(\chi\hat{\mathcal D}_{\tilde h}\,|\psi_x\rangle\bigr)
  \;=\;\lambda\hat{\mathcal D}_{\tilde g}\,
       \chi\hat{\mathcal D}_{\tilde h}\,|\psi_x\rangle .
\end{equation}

\emph{Injectivity.}  Suppose
$\Theta\bigl(\lambda\hat{\mathcal D}_{\tilde g}\,\cS\bigr)
=(1,\mathrm{id}_{\mathbb L_{\cS}})$.  Then $\bar g=1$, so
$g\in\Gamma$ and
$\hat{\mathcal D}_{\tilde g}=z\,\hat{\mathcal D}_{\sigma(g)}$ for
some $z\in Z$.  Hence $\lambda\hat{\mathcal D}_{\tilde g}$ acts on
$\mathbb L_{\cS}$ as the constant $\lambda z$, and
$\mathrm{id}_{\mathbb L_{\cS}}$ forces $\lambda z=1$; therefore
$\lambda\hat{\mathcal D}_{\tilde g}\,\cS=\cS$, the identity coset.

\emph{Surjectivity.}  Let $(\bar g,\varphi)\in G(\mathbb L_{\cS})$.
Choose a representative $g\in N^{\cS}_{G}(\Gamma)$ of $\bar g$ and a
lift $\tilde g$ of $g$; by
Corollary~\ref{fw:cor:lift-normalizes}, $\tilde g\in N_{\cD}(\cS)$.
Set
$\varphi_{\tilde g}:=\hat{\mathcal D}_{\tilde g}\big|_{\mathbb L_{\cS}}
\colon\mathbb L_{\cS}\to\bar g^{*}\mathbb L_{\cS}$
[Eq.~\eqref{fw:eq:gate-iso}].  Then
$\varphi_{\tilde g}^{-1}\circ\varphi$ is an automorphism of
$\mathbb L_{\cS}$, hence multiplication by a global unit
$u\in H^{0}(\CS,\mathcal O^{\times}_{\CS})$; because $\CS$ is compact
and connected, $u$ is a nonzero constant
$\lambda\in\mathbb C^{\times}$.  Therefore
$\varphi=\lambda\,\varphi_{\tilde g}
=\Theta\bigl(\lambda\hat{\mathcal D}_{\tilde g}\,\cS\bigr)$.
\end{proof}

\subsection{Genus one: recovering Mumford's theta group}
\label{fw:sub:genus1}

Take $\CS=C/\Gamma$ an elliptic curve, so that
$G=\mathbb C$, $Z=\Us(1)$, and $\cD=\Heis(2,\mathbb R)$ is the
single-mode Heisenberg--Weyl group generated by the Weyl operators
$\hat{\mathcal D}_{g}$, $g\in G$.  The stabilizer
$\cS=\sigma(\Gamma)$ is fixed by the images of two generators
$\gamma_1,\gamma_2$ of $\Gamma=\langle\gamma_1,\gamma_2\rangle$, which must commute.
Writing $\sigma(\gamma)=z_{\gamma}\hat{\mathcal D}_{\gamma}$ with $z_{\gamma}\in Z$
a phase, and using that phases are central, the group commutator is
\begin{equation}\label{fw:eq:commutator}
  [\sigma(\gamma_1),\sigma(\gamma_2)]
  = [\hat{\mathcal D}_{\gamma_1},\hat{\mathcal D}_{\gamma_2}]
  = e^{\,i\,\omega(\gamma_1,\gamma_2)},
  \qquad
  \omega(a,b) = 2\operatorname{Im}(\bar{a}\,b),
\end{equation}
Commutativity therefore requires
$\mathrm{Area}(\CS)=|\omega(\gamma_1,\gamma_2)|=2\pi d$ for an integer
$d\geq1$; fix the generators so $\omega(\gamma_1,\gamma_2)=2\pi d>0$.  Once the
images commute, $\sigma(n\gamma_1+m\gamma_2):=\sigma(\gamma_1)^n\sigma(\gamma_2)^m$
is a homomorphism, and $\cS$ is a valid code.

\emph{Logical gate group.}---Since $G=\mathbb C$ is abelian, every
$g\in G$ normalizes $\Gamma$ with trivial conjugation action, and
every lift of $g$ differs from the Weyl operator
$\hat{\mathcal D}_{g}$ by a central phase, which cancels in
\eqref{fw:eq:chi-def}: the character reduces to the group commutator
\begin{equation}\label{fw:eq:chi-commutator}
  \chi_{g}(\gamma)
  \;=\;\hat{\mathcal D}_{g}^{-1}\,\sigma(\gamma)\,
       \hat{\mathcal D}_{g}\,\sigma(\gamma)^{-1}
  \;=\;[\hat{\mathcal D}_{\gamma},\hat{\mathcal D}_{g}]
  \;=\;e^{\,i\,\omega(\gamma,g)} .
\end{equation}
By the exact sequence \eqref{fw:eq:normalizer-ses}, $g$ is covered by
a logical gate iff $\bar g^{*}\mathbb L_{\cS}\simeq\mathbb L_{\cS}$;
by Proposition~\ref{fw:prop:flat-twist} this holds iff the flat twist
$\mathbb L_{\chi_{g}}$ is trivial, i.e.\ iff $\chi_{g}$ is trivial:
$\omega(\gamma_i,g)\in2\pi\mathbb Z$ for
$i=1,2$.  Expanding $g=\alpha\gamma_1+\beta\gamma_2$ and using
$\omega(\gamma_1,\gamma_2)=2\pi d$ gives $d\alpha,d\beta\in\mathbb Z$, i.e.\
$dg\in\Gamma$.  Hence
\begin{equation}\label{fw:eq:torsion}
  G_{\cS} \;=\; \CS[d],
\end{equation}
the $d$-torsion points of the elliptic curve.  Equivalently
$G_{\cS}=N^{\cS}_{G}(\Gamma)/\Gamma=H(\mathbb L_{\cS})$ is Mumford's
$H$-group---the translations of the elliptic curve preserving the
frame bundle,
$H(\mathbb L_{\cS})=\{\bar g\in\CS:
\bar g^{*}\,\mathbb L_{\cS}\simeq\mathbb L_{\cS}\}$---and the logical
sequence \eqref{fw:eq:main-ses} specializes to
\begin{equation}\label{fw:eq:theta-unitary}
  1 \longrightarrow \Us(1) \longrightarrow \cD_{\cS}
    \longrightarrow \CS[d] \longrightarrow 1 .
\end{equation}
Extending scalars by Theorem~\ref{fw:thm:theta},
$\cD_{\cS}\times_{Z}\mathbb C^{\times}\simeq G(\mathbb L_{\cS})$,
this becomes Mumford's \emph{theta group} of the elliptic curve,
\begin{equation}\label{fw:eq:theta-classical}
  1 \longrightarrow \mathbb C^{\times}
    \longrightarrow G(\mathbb L_{\cS})
    \longrightarrow \CS[d] \longrightarrow 1 :
\end{equation}
the celebrated central extension of the $d$-torsion group \emph{is}
the logical Pauli group of the GKP code, with its phases extended
from $\Us(1)$ to $\mathbb C^{\times}$.

~\\

\section{Code words: elliptic-curve equations and the canonical ring}
\label{fw:codewords}

The Kodaira map \eqref{fw:eq:kodaira-map} realizes the compact phase
space as a projective algebraic variety inside the projectivized
code space, so the code words satisfy polynomial equations---and
every such equation is a polynomial identity among coherent-state
overlaps.  This section makes those equations explicit at genus one
and at genus two.  At genus one they are the
classical equations of elliptic curves, dictated by the
representation theory of the theta group of
Sec.~\ref{fw:sub:genus1}, following Mumford~\cite{Mumford66_fw}.  At
genus two the same algebraicity is governed by the canonical ring,
which we compute in closed form; its first output is the conic
$Y^{2}=XZ$ satisfied by the $k=2$ code words, and its last the
equation of the hyperelliptic curve itself, recovered as an
identity among code words.  A closing subsection supplies the
analytic construction of the code words on every phase space
uniformized by the hyperbolic disk: Poincar\'e theta series,
hyperbolic analogues of the Jacobi theta functions.

\subsection{Genus one: equations of elliptic curves}
\label{fw:sub:ellequations}

The identifications of this subsection---GKP code words as theta
functions for a system of multipliers, with the code words
realizing the projective embedding of the curve---were developed
by one of us in the notebook
\texttt{quantization-of-elliptic-curves} (dated March 21, 2024) of
Ref.~\cite{Roberts2024_SM}, and independently by Conrad, Burchards,
and Flammia~\cite{Conrad2024_SM}.  While completing the present
manuscript we became aware of the work of Mayrand and
Royer~\cite{MayrandRoyer2026_SM}, which develops the genus-one
theory systematically from the perspective of complex abelian
varieties.

\emph{Code words are theta functions.}---The stellar function of a
code state of the GKP code~\cite{GKP01_fw} obeys the
quasi-periodicity condition
\begin{equation}\label{fw:eq:quasiper}
  f^{\psi}(z+\gamma_j)
  \;=\;z_{\gamma_j}^{-1}\,
  e^{\,|\gamma_j|^{2}/2+\gamma_j^{*}z}\,f^{\psi}(z),
\end{equation}
with $z_{\gamma_j}$ the lift phase [the automorphy law
\eqref{fw:eq:automorphy} read for the Weyl operators]: the code
space is the space of theta functions of level $d$.  A natural basis---the logical
$Z$-eigenstates---is
\begin{equation}\label{fw:eq:zbasis}
  X_j(z)\;=\;e^{\gamma_1^{*}z^{2}/2\gamma_1}\,
  \vartheta\!\Bigl(\frac{z}{\gamma_1}-\frac{j}{d};\,
  \frac{\tau}{d}\Bigr),
  \qquad j=0,\dots,d-1,
\end{equation}
where $\vartheta(u;\tau)=\sum_{n\in\mathbb Z}
e^{i\pi n^{2}\tau+2\pi inu}$ is the Jacobi theta function and
$\tau=\gamma_2/\gamma_1$ the modular parameter: all basis elements
share one Gaussian prefactor and differ only by the shift of $j/d$
in the argument.

\emph{The Kodaira embedding.}---In this basis the Kodaira map
\eqref{fw:eq:kodaira-map} reads
\begin{equation}\label{fw:eq:theta-embedding}
  \Psi\colon\ \CS\longrightarrow\PP^{d-1},
  \qquad
  z\;\longmapsto\;\bigl[X_0(z):\cdots:X_{d-1}(z)\bigr],
\end{equation}
the coherent-state family seen from inside the code space.  The map
is an embedding precisely for $d\geq3$: a line bundle of degree at
least $2g+1$ on a genus-$g$ curve is \emph{very ample}, meaning its
sections separate points and tangent vectors, so that the Kodaira
map is an embedding.  At $d=2$ the map instead exhibits $\CS$ as a
$2:1$ cover of $\PP^{1}$---the genus-one shadow of the hyperelliptic
map that will organize the genus-two case below.

\emph{The theta group cuts out the curve.}---The generating logical
gates $\hat X=\hat{\mathcal D}_{\gamma_1/d}$ and
$\hat Z=\hat{\mathcal D}_{\gamma_2/d}$, lifting the $d$-torsion
translations \eqref{fw:eq:torsion}, act on the basis by
$\hat X\colon X_j\mapsto X_{j+1}$ and
$\hat Z\colon X_j\mapsto\omega^{j}X_j$ with $\omega=e^{2\pi i/d}$,
so the image of $\Psi$ is constrained by the representation theory
of the theta group \eqref{fw:eq:theta-classical}.  For $d=3$ the
constraint forces the image cubic in $\PP^{2}$ into the \emph{Hesse
normal form}
\begin{equation}\label{fw:eq:hesse}
  X_0^{3}+X_1^{3}+X_2^{3}\;=\;\mu\,X_0X_1X_2,
  \qquad
  \mu=\frac{q(0)^{3}+q(1)^{3}+q(2)^{3}}{q(0)\,q(1)\,q(2)},
\end{equation}
with theta constants $q(j)=\vartheta(-j/3,\tau/3)$: the Hesse
pencil is precisely the family of plane cubics invariant under
$\hat X$ and $\hat Z$.  For $d=4$ the image in $\PP^{3}$ is the
intersection of two Heisenberg-symmetric quadrics,
\begin{equation}\label{fw:eq:d4quadrics}
  X_1^{2}+X_3^{2}=2\lambda\,X_0X_2,
  \qquad
  X_0^{2}+X_2^{2}=2\lambda\,X_1X_3,
  \qquad
  \lambda=\frac{q(1)^{2}}{q(0)\,q(2)},
\end{equation}
and Mumford proves~\cite{Mumford66_fw} that whenever $4\,|\,d$ the
image is cut out by quadric hypersurfaces.

\subsection{Genus two: the canonical ring}\label{fw:sub:canonicalring}

At genus two the code spaces at all integer weights assemble into
the \emph{canonical ring}
\begin{equation}\label{fw:eq:canring-def}
  R(\CS)\;=\;\bigoplus_{k\geq0}
  \Gamma\bigl(\CS,\Omega_{\CS}^{\otimes k}\bigr),
\end{equation}
the graded ring in which code words of weights $k$ and $k'$
multiply to code words of weight $k+k'$
[Eq.~\eqref{fw:eq:hyp-sections}].  The main text generates the
entire tower of code spaces from a conic basis of the $k=2$ code
space; the two propositions below are exactly the facts used
there.  Both are proved by pushing the code line bundle
$\Ls\simeq\Omega_{\CS}^{\otimes k}$ of Sec.~\ref{fw:hyperbolic}
[Eq.~\eqref{fw:eq:hyp-sections}] down the hyperelliptic covering
and reading off the $\pm1$ eigenspaces of the hyperelliptic
involution.

\begin{proposition}[Square roots]\label{fw:prop:sqroots}
Let $X,Y,Z$ be a basis of $\Gamma(\CS,\Omega_{\CS}^{\otimes2})$
satisfying the conic relation $Y^{2}=XZ$.  Then $X$ and $Z$ admit
holomorphic square roots; any choice of square roots
$\omega_0,\omega_1$ is a basis of $\Gamma(\CS,\Omega_{\CS})$, and
$Y=\pm\,\omega_0\omega_1$.
\end{proposition}

\begin{proposition}[Canonical ring]\label{fw:prop:canring}
Let $\omega_0,\omega_1$ be any basis of $\Gamma(\CS,\Omega_{\CS})$
and let $W=\omega_0\omega_1'-\omega_1\omega_0'$ be their
Wronskian.  Then for every $k\geq1$ the $k$-differentials
\begin{equation}\label{fw:eq:canonical-basis}
  \omega_0^{\,k-j}\omega_1^{\,j}
  \quad (j=0,\dots,k),
  \qquad\qquad
  W\,\omega_0^{\,k-j-3}\omega_1^{\,j}
  \quad (j=0,\dots,k-3;\ k\geq3),
\end{equation}
form a basis of $\Gamma(\CS,\Omega_{\CS}^{\otimes k})$
[cf.~Ref.~\cite{KockTait_fw}, Thm.~5.1].
\end{proposition}

\emph{The pushforward computation.}---At genus two the canonical
bundle is basepoint-free [Ref.~\cite{Hartshorne_fw},
Prop.~IV.5.1] with $h^{0}(\Omega_{\CS})=2$, so its sections define
a morphism $\pi\colon\CS\to\PP^{1}$ with
\begin{equation}\label{fw:eq:KC-pullback}
  \Omega_{\CS}\;\simeq\;\pi^{*}\mathcal O_{\PP^{1}}(1)
\end{equation}
[Ref.~\cite{Hartshorne_fw}, Thm.~II.7.1], of degree
$\deg\Omega_{\CS}=2$: the hyperelliptic covering.
The covering involution $\iota$ is the hyperelliptic involution,
and the branch divisor $B\subset\PP^{1}$ has degree six
(Riemann--Hurwitz).  The one genuine input is the eigensheaf
decomposition of a double cover~\cite{Pardini91_fw},
\begin{equation}\label{fw:eq:eigensheaf}
  \pi_{*}\mathcal O_{\CS}
  \;=\;\mathcal O_{\PP^{1}}\oplus L^{-1},
  \qquad
  L^{\otimes2}\simeq\mathcal O(B),
\end{equation}
whose summands are the $\pm1$ eigensheaves of $\iota$; $\deg B=6$
gives $L\simeq\mathcal O(3)$.  By the projection formula,
\begin{equation}\label{fw:eq:pushforward-chain}
  \pi_{*}\bigl(\Omega_{\CS}^{\otimes k}\bigr)
  \;\simeq\;\pi_{*}\bigl(\pi^{*}\mathcal O_{\PP^{1}}(k)\bigr)
  \;\simeq\;\pi_{*}\mathcal O_{\CS}\otimes\mathcal O(k)
  \;\simeq\;\mathcal O(k)\oplus\mathcal O(k-3),
\end{equation}
and since $\pi$ is finite, taking sections upstairs is the same as
taking sections of the pushforward downstairs:
\begin{equation}\label{fw:eq:sections-split}
  \Gamma\bigl(\CS,\Omega_{\CS}^{\otimes k}\bigr)
  \;\simeq\;
  \Gamma\bigl(\PP^{1},\mathcal O(k)\bigr)
  \;\oplus\;
  \Gamma\bigl(\PP^{1},\mathcal O(k-3)\bigr).
\end{equation}
For $k\geq2$ the dimension count $(k+1)+(k-2)=(2k-1)(g-1)$, with
$h^{0}(\mathcal O(m))=0$ for $m<0$, reproduces Riemann--Roch
\eqref{fw:eq:RR}; at $k=1$ it gives
$\dim\Gamma(\CS,\Omega_{\CS})=2=g$, where the correction term in
Riemann--Roch is nonvanishing.

Two facts about the splitting \eqref{fw:eq:sections-split} drive
the proofs of Propositions~\ref{fw:prop:sqroots}
and~\ref{fw:prop:canring}.

(a)~The summand $\Gamma(\PP^{1},\mathcal O(k))$ consists of the
homogeneous degree-$k$ polynomials in $\omega_0,\omega_1$.
Indeed, the $k$-th tensor power of \eqref{fw:eq:KC-pullback}
identifies $\Omega_{\CS}^{\otimes k}\simeq\pi^{*}\mathcal O(k)$
for every $k$ at once, compatibly with multiplication of
sections---the product of pullbacks is the pullback of the
product---so if $\omega_0,\omega_1$ denote the abelian
differentials corresponding to the homogeneous coordinates $s,t$,
the section corresponding to a homogeneous degree-$k$ polynomial
$p(s,t)$ is $p(\omega_0,\omega_1)$.  A $\GL_2$ change of the
coordinates $s,t$ realizes every basis of
$\Gamma(\CS,\Omega_{\CS})$ this way, so the statement holds for an
arbitrary basis.

(b)~The two summands are $\iota$-eigenspaces of opposite parity.
The hyperelliptic involution acts on abelian differentials by
$\iota^{*}\omega=-\omega$,\footnote{\label{fw:fn:iota}Two
preliminaries.  At a fixed point of $\iota$ (a ramification
point), any centered local coordinate $v$ can be antisymmetrized:
$u:=\tfrac12(v-v\circ\iota)$ satisfies $u\circ\iota=-u$, and it is
again a coordinate because $d\iota=-1$ at a fixed point [$d\iota$
squares to $1$ since $\iota^{2}=\mathrm{id}$, and $d\iota=+1$
would make $\tfrac12(v+v\circ\iota)$ an $\iota$-invariant
coordinate, forcing $\iota=\mathrm{id}$ near the point and hence,
by the identity theorem, everywhere].  In this coordinate the
covering reads $w=u^{2}$, with $w$ a coordinate downstairs.  Now
let $\omega$ be an abelian differential and set
$\eta=\omega+\iota^{*}\omega$, which is $\iota$-invariant; we
claim $\eta=0$.  Away from the ramification points $\pi$ is an
unbranched two-sheeted covering, so $\eta$ is the pullback of a
holomorphic differential $\xi$ on the complement of $B$.  Near a
ramification point write $\eta=h(u)\,du$; invariance gives
$-h(-u)=h(u)$, so $h(u)=u\,g(u^{2})$ and
$\eta=\tfrac12\,g(w)\,dw$: $\xi$ extends holomorphically across
$B$.  Since $\PP^{1}$ carries no nonzero holomorphic
differential, $\xi=0$, hence $\eta=0$.  See also
Ref.~\cite{FarkasKra_fw}, Sec.~III.7.}  so the polynomial summand
$\Gamma(\mathcal O(k))$ carries $\iota$-eigenvalue $(-1)^{k}$,
and the complementary summand $\Gamma(\mathcal O(k-3))$---which
differs from it by the odd eigensheaf $L^{-1}$ of
\eqref{fw:eq:eigensheaf}---carries the opposite eigenvalue
$(-1)^{k+1}$.

\emph{Low weights.}---At $k=1$ the odd summand
$\Gamma(\mathcal O(-2))$ is empty.  At $k=2$ the odd summand
$\Gamma(\mathcal O(-1))$ is still empty, so by reading~(a) every
$k=2$ code word is a homogeneous quadratic polynomial in any basis
$\omega_0,\omega_1$:
\begin{equation}\label{fw:eq:bicanonical-pullback}
  \Gamma\bigl(\CS,\Omega_{\CS}^{\otimes2}\bigr)
  \;=\;\operatorname{span}\{\omega_0^{2},\;\omega_0\omega_1,\;
        \omega_1^{2}\},
\end{equation}
i.e.\ the multiplication map
$\operatorname{Sym}^{2}\Gamma(\Omega_{\CS})\to
\Gamma(\Omega_{\CS}^{\otimes2})$ is an isomorphism, and the
bicanonical Kodaira map factors through the degree-two Veronese
map $\nu_2\colon[s:t]\mapsto[s^{2}:st:t^{2}]$:
\begin{equation}\label{fw:eq:veronese-factor}
  \CS\;\xrightarrow{\ \pi\ }\;\PP^{1}
  \;\xrightarrow{\ \nu_2\ }\;\PP^{2}.
\end{equation}
In the basis $X=\omega_0^{2}$, $Y=\omega_0\omega_1$,
$Z=\omega_1^{2}$ the image is the smooth conic
\begin{equation}\label{fw:eq:veronese-conic}
  Y^{2}\;=\;XZ
\end{equation}
(the conic relation of the main text): the $k=2$ code words of a
genus-two code satisfy a single quadratic identity---the genus-two
manifestation of the same algebraicity that produces the Hesse
cubic~\eqref{fw:eq:hesse} and the
quadrics~\eqref{fw:eq:d4quadrics} at genus one.

\begin{proof}[Proof of Proposition~\ref{fw:prop:sqroots}]
By \eqref{fw:eq:bicanonical-pullback} and reading~(a), the given
basis corresponds to a basis $q_X,q_Y,q_Z$ of the homogeneous
quadratic polynomials in $s,t$; because the inclusion of sections
is injective and multiplicative, the conic relation descends to
$q_Y^{2}=q_X\,q_Z$.  Now factor into linear factors in the
polynomial ring $\mathbb C[s,t]$ (a unique factorization domain),
writing $q_Y=m_1m_2$ with $m_1,m_2$ homogeneous linear.  Then
$q_X$ is proportional to $m_1^{2}$, $m_1m_2$, or $m_2^{2}$; the
middle case makes $q_X$ proportional to $q_Y$, and
$m_1\propto m_2$ makes $q_X$ proportional to $q_Z$---both
excluded by linear independence.  Hence, after rescaling the
linear factors, $q_X=m_1^{2}$ and $q_Z=m_2^{2}$ with $m_1,m_2$
independent, and $q_Y=\pm m_1m_2$.  The abelian differentials
$\omega_0,\omega_1$ corresponding to $m_1,m_2$ therefore satisfy
$\omega_0^{2}=X$, $\omega_1^{2}=Z$, and
$\omega_0\omega_1=\pm Y$; being independent, they are a basis of
the two-dimensional space $\Gamma(\CS,\Omega_{\CS})$.  Finally,
square roots of a nonzero holomorphic section are unique up to
sign on a connected domain, so \emph{every} choice of square roots
is $(\pm\omega_0,\pm\omega_1)$; in particular the single-valued
square roots $\sqrt X,\sqrt Z$ extracted on $\DD$ in the main text
automatically transform as abelian differentials.
\end{proof}

\emph{The Wronskian.}---For any basis $\omega_0,\omega_1$, writing
$\omega_i=f_i\,dz$ in a local coordinate, the \emph{Wronskian}
\begin{equation}\label{fw:eq:wronskian}
  W\;=\;\omega_0\,\omega_1'-\omega_1\,\omega_0'
  \;:=\;\bigl(f_0f_1'-f_1f_0'\bigr)\,(dz)^{3}
\end{equation}
is well defined: under a coordinate change $z=z(w)$ each
coefficient becomes $g_i=f_i\,z'$, and the $z''$ cross terms
cancel in the antisymmetric combination,
$g_0g_1'-g_1g_0'=\bigl(f_0f_1'-f_1f_0'\bigr)(z')^{3}$---exactly
the transformation law of a cubic differential.  It is nonzero,
since $f_0f_1'=f_1f_0'$ would force $\omega_1/\omega_0$ to be
constant; and a change of basis multiplies it by the determinant,
so $W$ is canonical up to scale.  Its parity is read off at a
ramification point, in the antisymmetrized coordinate $u$ with
$\iota(u)=-u$ of footnote~\ref{fw:fn:iota}: there $du$ is odd and
each $\omega_i$ is odd, so each $f_i$ is even, each $f_i'$ is odd,
and the coefficient $h=f_0f_1'-f_1f_0'$ is odd.  Hence
$\iota^{*}W=-h(-u)\,(du)^{3}=h(u)\,(du)^{3}=W$ near the point, and
therefore on all of $\CS$ by the identity theorem: $W$ is
$\iota$-\emph{invariant}, the opposite parity to the cubic
monomials in $\omega_0,\omega_1$.  Moreover $h$ odd gives
$h(0)=0$, so $W$ vanishes, to odd order, at each of the six
ramification points; since $\deg\Omega_{\CS}^{\otimes3}=6$, each
zero is simple and there are no others.

\begin{proof}[Proof of Proposition~\ref{fw:prop:canring}]
The monomials $\omega_0^{k-j}\omega_1^{j}$ are the images of the
degree-$k$ monomials in $s,t$ [reading~(a)], hence $k+1$
independent elements of $\iota$-parity $(-1)^{k}$.
Multiplication by the nonzero section $W$ is injective, so the
$k-2$ elements $W\,\omega_0^{k-j-3}\omega_1^{j}$ are independent,
with parity $(+1)\cdot(-1)^{k-3}=(-1)^{k+1}$.  The two families
have opposite parities, hence are jointly independent, and their
total count---$(k+1)+(k-2)$ for $k\geq3$, $k+1$ for $k\leq2$---equals
$\dim\Gamma(\CS,\Omega_{\CS}^{\otimes k})$ by
\eqref{fw:eq:sections-split}; together they form a basis.
\end{proof}

\emph{The relation.}---The square $W^{2}$ is $\iota$-invariant of
weight six, so by Proposition~\ref{fw:prop:canring} it is a
homogeneous sextic polynomial in $(\omega_0,\omega_1)$:
\begin{equation}\label{fw:eq:hyperelliptic-relation}
  W^{2}\;=\;f_6(\omega_0,\omega_1)
\end{equation}
for a unique such sextic $f_6$.  Its roots are exactly the six
branch points: $\operatorname{div}W^{2}$ is twice the
ramification divisor, which is $\pi^{*}B$, so $f_6$ has simple
roots at $B$.  Equation~\eqref{fw:eq:hyperelliptic-relation} is
thus the classical hyperelliptic equation of the curve, recovered
as an identity among code words in exact parallel with the theta
relations of genus one.  Altogether,
\begin{equation}\label{fw:eq:canring}
  R(\CS)\;\simeq\;
  \mathbb C[\omega_0,\omega_1,W]\,\big/\,
  \bigl(W^{2}-f_6(\omega_0,\omega_1)\bigr),
\end{equation}
with $\omega_0,\omega_1$ in degree one and $W$ in degree three:
every code space at every integer weight, together with its
multiplicative structure, is generated by three code words subject
to the single relation \eqref{fw:eq:hyperelliptic-relation}.

\subsection{Hyperbolic theta series}\label{fw:sub:thetaseries}

The preceding subsections pin down equations that the stellar
functions of the code words \emph{satisfy}; we close with the
objects that \emph{produce} the code words.
At genus one, every code word is built from the Jacobi theta
function [Eq.~\eqref{fw:eq:zbasis}]---a sum over the abelian
stabilizer lattice.  The \emph{Poincar\'e theta series} furnish
an analogue of these sums on every phase space uniformized by
the hyperbolic disk, i.e.\ on every compact curve of genus
$g\geq2$, the genus-two case of this work being only the first
instance.  Here the stabilizer group is nonabelian, and the sums
are organized by its conjugacy classes---equivalently, by the
free-homotopy classes of closed geodesics of $\CS$: to the class
$[\gamma]\subset\Gamma$ of a hyperbolic element one attaches the
sum over the class,
\begin{equation}\label{eq:theta_def}
  \theta_\gamma(z)
  = \sum_{\nu \in [\gamma]}
    \bigl(\bar{b}\,z^2 + (\bar{a} - a)z - b\bigr)^{-k},
  \quad
  \nu = \begin{pmatrix} a & b \\ \bar{b} & \bar{a} \end{pmatrix}\!.
\end{equation}

\emph{Convergence.}---Each element $\nu$ of the class has a
\emph{height} $h(\nu) = |b(\nu)|$, and the summand decays as
$|b|^{-k}$ for large~$|b|$, so convergence is controlled by the
\emph{filtered} partial sums
\begin{equation}\label{eq:filtered}
  \theta_\gamma(z;\, B)
  = \sum_{\substack{\nu \in [\gamma] \\[2pt]
     h(\nu) < B}}
     \bigl(\bar{b}\,z^2 + (\bar{a} - a)z - b\bigr)^{-k},
\end{equation}
which truncate the class at height~$B$.  The number of class
elements below height~$B$ is governed by the prime geodesic
theorem: for a cocompact Fuchsian group and a \emph{primitive}
hyperbolic $\gamma$---one that is not a proper power in
$\Gamma$---Good's
formula~\cite{Good2015,ChatzakosPetridis2016} gives
\begin{equation}\label{eq:good}
  \#\bigl\{\nu\in[\gamma] : h(\nu)<B\bigr\}
  = \frac{2\ell(\gamma)}{\mathrm{Vol}(\CS) \cdot
    \sinh\bigl(\ell(\gamma)/2\bigr)}\,B
  + O\!\left(B^{2/3}\right),
\end{equation}
where $\ell(\gamma)$ is the length of the closed geodesic. The
class therefore has constant density in height, and the tail is
bounded, for $k\geq2$, by
\begin{equation}\label{eq:tail}
  \bigl|\theta_\gamma(z) - \theta_\gamma(z;\, B)\bigr|
  \;\leq\;
  C(z) \int_B^\infty h^{-k}\,dh
  \;=\;
  \frac{C(z)}{(k{-}1)\,B^{k-1}},
\end{equation}
with $C(z)$ bounded on compact subsets of~$\DD$: the series
converges absolutely for every $k\geq2$, uniformly on compacts,
and $\theta_\gamma$ is a weight-$k$ code word.
At $k=1$ the sum
diverges~\cite{Iwaniec2002_SM}, so the abelian differentials
cannot be produced this way directly---they are recovered from
the $k=2$ code space through
Proposition~\ref{fw:prop:sqroots}.

\section{Case study: the Bolza code}\label{sec:bolza}

This section collects the explicit data of the Bolza code: the
stabilizer generators, the pants theta-series basis and its
numerical convergence, the conic factorization that produces the
code words, and the Cartan (rotation--squeeze) form of the
Gaussian logical gates.
Throughout, $\CS$ is the Bolza surface, the most symmetric
genus-two curve, presented as the regular hyperbolic octagon with
opposite sides identified.
The pants decomposition below and the resulting $k=2$
theta-series code words were developed by one of us in the
notebook \texttt{genus-two-quantization} (dated June 8, 2024) of
Ref.~\cite{Roberts2024_SM}.

\subsection{Stabilizer generators}\label{sec:stabilizers}

The Fuchsian group $\Gamma$ of the Bolza surface is generated by
four hyperbolic translations $\gamma_0, \ldots, \gamma_3$ through
the origin at angles $j\pi/4$, all with translation
length $\varphi = 2\,\mathrm{arccosh}(1 + \sqrt{2})$.  Their
$\SU(1,1)$ matrices are
\begin{equation}\label{eq:gamma_matrices}
  \gamma_j =
  \begin{pmatrix}
    1 + \sqrt{2} & \mu\, e^{ij\pi/4} \\[2pt]
    \mu\, e^{-ij\pi/4} & 1 + \sqrt{2}
  \end{pmatrix}\!,
  \quad j = 0, 1, 2, 3,
\end{equation}
where
\begin{equation}
  \mu = (2 + \sqrt{2})\sqrt{\sqrt{2} - 1}.
\end{equation}
One verifies $\det \gamma_j = (1+\sqrt{2})^2 - (2+\sqrt{2})^2
(\sqrt{2}-1) = 1$ and
$\cosh(\varphi/2) = 1 + \sqrt{2}$, consistent with the stated
translation length.  Numerically, $\mu \approx 2.197$ and
$\varphi \approx 3.057$. As noted in the main text, these generators correspond to the $\mathrm{SU}(1,1)$ stabilizers $\hat{\mathcal{D}}_j = \exp(\xi_j\,\Kp - \xi_j^{*}\,\Km)$ with
$\xi_j = (\varphi/2)\,e^{ij\pi/4}$.

Throughout, an element
$g = \bigl(\begin{smallmatrix} a & b \\ \bar{b} & \bar{a}
\end{smallmatrix}\bigr) \in \SU(1,1)$ acts on $\DD$ by
$g \cdot z = (az + b)/(\bar{b}z + \bar{a})$.

\subsection{Pants basis and numerical convergence}
\label{sec:pants}

For the Bolza surface we choose the three systolic geodesics as
the pants curves of Sec.~\ref{fw:sub:thetaseries}; in terms of the
stabilizer generators~$\gamma_j$ [Eq.~\eqref{eq:gamma_matrices}],
their conjugacy classes are represented by
\begin{equation}\label{eq:pants_words}
  \gamma_G = \gamma_1,\qquad
  \gamma_B = \gamma_3^{-1}\gamma_2,\qquad
  \gamma_R = \gamma_1\gamma_2^{-1}\gamma_3.
\end{equation}
All three have the same translation length
$\ell_{\mathrm{sys}} = 2\,\mathrm{arccosh}(1 + \sqrt{2})$, as can
be verified from $|\mathrm{tr}\,\gamma_\ell| = 2(1+\sqrt{2})$; by
Good's formula~\eqref{eq:good}, the three theta series therefore
converge at the same rate.  Figure~\ref{fig:pants} shows the
corresponding geodesic arcs on the Bolza octagon.  At $k = 2$, the
three series $\theta_R, \theta_G, \theta_B$ are a basis of
$\Gamma(\CS,\Omega_{\CS}^{\otimes 2})$
(Sec.~\ref{fw:sub:thetaseries}).

\begin{figure}[t]
  \centering
  \includegraphics[width=0.5\columnwidth]{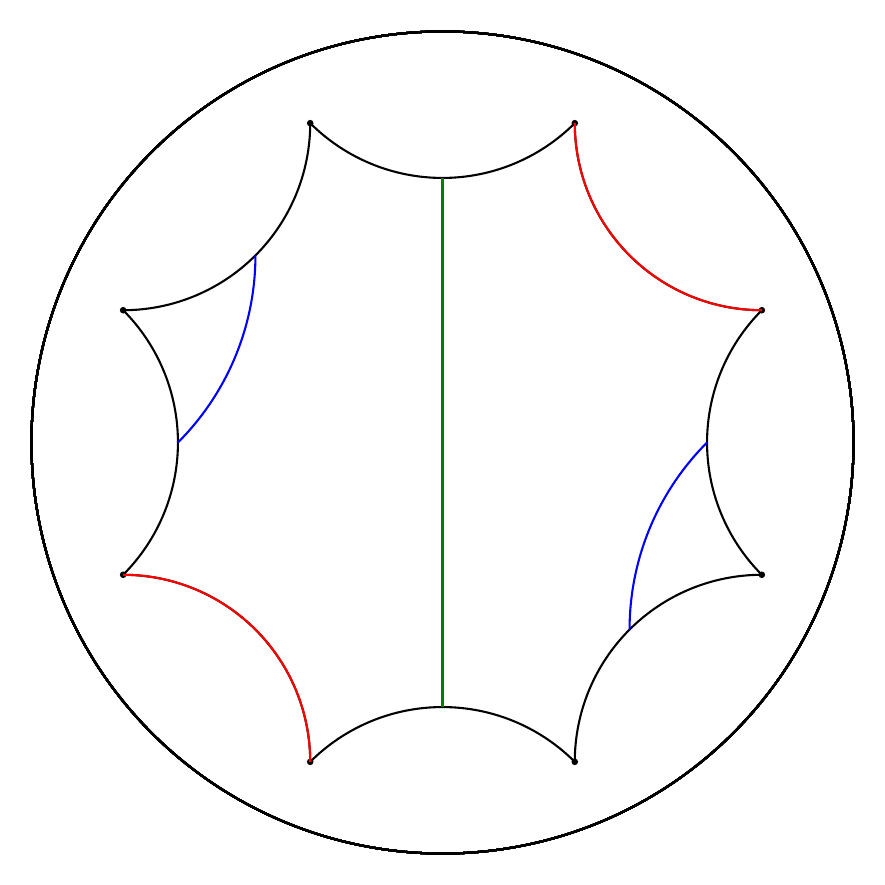}
  \caption{\label{fig:pants}%
    Pants decomposition of the Bolza octagon.  The three
    systolic geodesics $\gamma_R$ (red), $\gamma_G$ (green),
    $\gamma_B$ (blue) cut the surface into two pairs of pants.
    Their conjugacy classes in~$\Gamma$ define the Poincar\'e
    theta series~\eqref{eq:theta_def}.}
\end{figure}

\emph{Numerical convergence.}---Figure~\ref{fig:convergence}(a)
confirms the truncation rate~\eqref{eq:tail} at $k = 2$.  We
evaluate $\theta_R, \theta_G, \theta_B$ at 10~random sample points
in the interior of~$\DD$ ($|z| < 0.6$) for height cutoffs
$B \in \{50, 100, \ldots, 20000\}$, using $B_{\mathrm{ref}} = 50000$
as a proxy for the exact series.  At each point, the
\emph{relative truncation error}
\begin{align}
  \varepsilon(B;\, z)
  &= \max_{\ell \in \{R,G,B\}}\,
  \frac{\bigl|\theta_\ell(z;\, B)
        - \theta_\ell(z;\, B_{\mathrm{ref}})\bigr|}
       {\bigl|\theta_\ell(z;\, B_{\mathrm{ref}})\bigr|}
\end{align}
is computed by comparing the partial sum~\eqref{eq:filtered} at
cutoff~$B$ to the reference value.  On a log-log plot
(Fig.~\ref{fig:convergence}a), the data follow a clean power-law
decay $\varepsilon \propto B^{-\beta}$ over more than two decades,
with $\beta \approx 1$.
At $B = 10^4$
($\sim\!2000$ terms per conjugacy class), the median
error is $\sim\!10^{-5}$.

\begin{figure}[t]
  \centering
  \includegraphics[width=0.7\columnwidth]{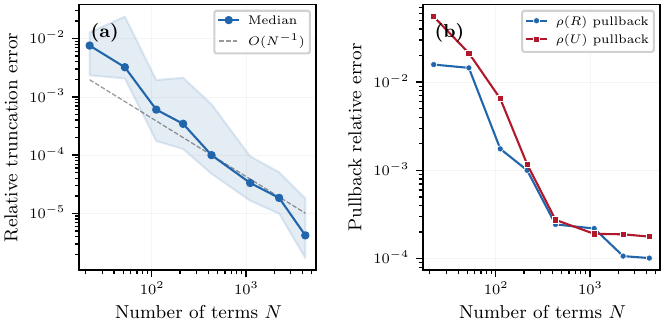}
  \caption{Convergence of the Poincar\'e series and pullback
    verification.
    \textbf{(a)}~Relative truncation error of the theta series
    vs.\ the number of terms~$N$ (blue: median over sample points;
    shaded band: pointwise range).
    \textbf{(b)}~Pullback verification error: maximum pointwise
    relative error in the weight-2 relation~\eqref{eq:pullback_check}
    for $\rho(R)$ (blue) and $\rho(U)$ (red).
    Panel~(a) shows $O(N^{-1})$ decay (dashed).
    The pullback errors in~(b) follow a similar initial decay but
    saturate at ${\sim}\,10^{-4}$: the Veronese basis
    change~\eqref{eq:XYZ_SM} depends on the conic
    parameter~$\lambda$, extracted by SVD at a fixed truncation,
    whose fitting residual sets a noise floor.}
  \label{fig:convergence}
\end{figure}

\subsection{Conic factorization}\label{sec:conicfactorization}

Section~\ref{fw:sub:canonicalring} builds the entire tower of code
spaces from a conic basis $X,Y,Z$ of the $k=2$ code space; here we
produce that basis for the Bolza code.  Write
$(X, Y, Z)^T = A\,(\theta_R, \theta_G, \theta_B)^T$ for a
$3 \times 3$ change-of-basis matrix~$A$.  The Veronese
condition~\eqref{fw:eq:veronese-conic} becomes a quadratic relation
\begin{equation}\label{eq:conic_theta}
  \bm{\theta}^T M\, \bm{\theta} = 0
\end{equation}
in the theta basis, where $M$ is a symmetric $3\times 3$ matrix
determined by the inner products of the theta series.

For the Bolza surface, the octahedral symmetry permutes the three
pants geodesics cyclically, forcing
$M = (1{-}\lambda)I + \lambda J$ where $J$ is the all-ones matrix.
The conic is then
$\theta_R^2 + \theta_G^2 + \theta_B^2
+ 2\lambda(\theta_R\theta_G + \theta_R\theta_B
+ \theta_G\theta_B) = 0$,
governed by a single parameter~$\lambda$.  Non-degeneracy
($\det M = (1{-}\lambda)^2(1{+}2\lambda) \neq 0$) ensures the conic
is smooth and hence rational, admitting the Veronese
parametrization~\eqref{fw:eq:veronese-factor}.

\emph{Extracting the change-of-basis matrix.}---The $C_3$ symmetry
of the conic decomposes the theta basis into irreducible
representations.  The symmetric combination
$Y \propto \theta_R + \theta_G + \theta_B$ is the trivial
representation, while $X$ and $Z$ lie in the two-dimensional
representation and differ only in the sign of one coefficient.
The explicit decomposition (quoted in the main text) is
\begin{align}
  Y &= \tfrac{\alpha}{3}\,(\theta_R + \theta_G + \theta_B),
      \notag \\[3pt]
  X &= \tfrac{\gamma}{6}\,(\theta_R {+} \theta_G {-}
       2\theta_B)
       + \tfrac{\beta}{2}\,(\theta_R {-} \theta_G),
       \label{eq:XYZ_SM} \\[3pt]
  Z &= \tfrac{\gamma}{6}\,(\theta_R {+} \theta_G {-}
       2\theta_B)
       - \tfrac{\beta}{2}\,(\theta_R {-} \theta_G),
       \notag
\end{align}
with $\alpha = \sqrt{6\lambda{+}3}$,
$\beta = \sqrt{2{-}2\lambda}$,
$\gamma = \sqrt{6\lambda{-}6}$.
Numerically, the SVD of the $6$-monomial matrix over $400$ sample
points has rank $5$ (singular-value ratio
$S_5/S_4 \sim 10^{-4}$), and its null vector gives
\begin{equation}
  \lambda = -1.0142(1) - 0.2385(1)\,i,
\end{equation}
where the quoted uncertainty is the observed drift of $\lambda$
under variation of the height cutoff over
$B = 2\times10^4$--$2\times10^5$ ($< 7\times10^{-5}$) and of the
sample grid (${\sim}5\times10^{-5}$).

\emph{Square roots and the Wronskian.}---By
Proposition~\ref{fw:prop:sqroots}, $X$ and $Z$ admit single-valued
holomorphic square roots on the simply connected cover~$\DD$,
unique up to an overall sign, and any choice
$\omega_0 = \sqrt{X}$, $\omega_1 = \sqrt{Z}$ is a basis of the
$k=1$ code space; each has exactly $\deg \Omega_{\CS} = 2$ zeros.
The third generator of the canonical ring is the Wronskian
\begin{equation}
  W = \omega_0\,\omega_1' - \omega_1\,\omega_0'
  = \frac{X Z' - Z X'}{2Y},
\end{equation}
computable rationally from the conic basis, and
Proposition~\ref{fw:prop:canring} then yields explicit code words
at every integer weight.

Figure~\ref{fig:conic} provides a direct test of the conic
relation~\eqref{fw:eq:veronese-conic}.  We evaluate $\theta_R,
\theta_G, \theta_B$ at a grid of ${\sim}\,700$ sample points in the
interior of $\DD$ ($|z| < 0.6$), apply the change of
basis~\eqref{eq:XYZ_SM} to obtain $X, Y, Z$, and form the affine
coordinates $x = X/Z$ and $y = Y/Z$.  The conic $Y^2 = XZ$ in
affine coordinates reads
$\mathrm{Re}(x) = \mathrm{Re}(y)^2 - \mathrm{Im}(y)^2$.
Panel~(a) of Fig.~\ref{fig:conic} plots the measured
$\mathrm{Re}(X/Z)$ against this prediction; the data lie on the
diagonal with fractional residuals $\lesssim 10^{-3}$, limited by
truncation of the Poincar\'e series at finite height
cutoff~$B \sim 3000$.  Panels~(b) and~(c) display the spatial
distribution of the fractional and absolute residuals on the
Poincar\'e disk, confirming uniform accuracy across the fundamental
domain.

\begin{figure}[t]
  \centering
  \includegraphics[width=\columnwidth]{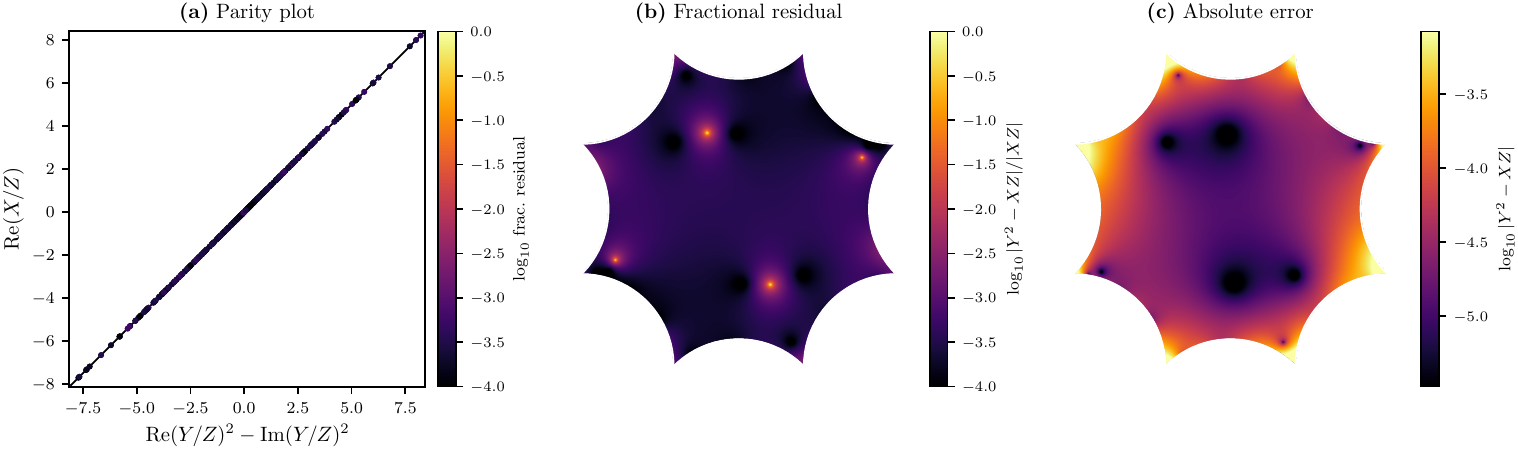}
  \caption{\label{fig:conic}%
    Numerical verification of the Veronese conic $Y^2 = XZ$.
    \textbf{(a)}~Parity plot: measured $\mathrm{Re}(X/Z)$ vs.\ the
    conic prediction
    $\mathrm{Re}(Y/Z)^2 - \mathrm{Im}(Y/Z)^2$; color encodes
    the $\log_{10}$ fractional residual.
    \textbf{(b)}~Spatial map of the fractional residual
    $|Y^2 - XZ|/|XZ|$ on the Poincar\'e disk (white lines:
    Bolza octagon boundary).
    \textbf{(c)}~Spatial map of the absolute error
    $|Y^2 - XZ|$.  The fractional residual in~(b) spikes near
    the zeros of $X$ and~$Z$, as expected from the vanishing
    denominator; panel~(c) confirms that the absolute error
    remains small at these points.  All interior points achieve
    fractional accuracy $\lesssim 10^{-3}$ away from the zeros.}
\end{figure}

\subsection{Gaussian gates}\label{sec:matrices}

\subsubsection{Cartan form of the generators}

The automorphism group $\Aut(\CS) \cong \GL(2, \mathbb{F}_3)$ is
generated by two elements: an order-8 rotation $R$ and an order-3
M\"obius transformation~$U$.

\emph{The rotation $R$}: $z \mapsto e^{i\pi/4}z$, corresponding to
\begin{equation}\label{eq:R_matrix}
  R =
  \begin{pmatrix}
    e^{i\pi/8} & 0 \\[2pt]
    0 & e^{-i\pi/8}
  \end{pmatrix}\!.
\end{equation}
This is simply a rotation of the Poincar\'e disk by $\pi/4$, implemented
physically as
$\hat{\mathcal{D}}_R = \exp(i(\pi/4)\,\Kz)
= \exp\bigl(i(\pi/8)(\hat{n}_a + \hat{n}_b + 1)\bigr)$.

\emph{The order-3 element $U$}: a rotation by $2\pi/3$ about the
point $p = |p|\,e^{i\pi/8}$, where
$|p|^2 = (q - 1)/(q + 1)$ with
$q = \cot(\pi/3)\cot(\pi/8)
= (1 + \sqrt{2})/\sqrt{3}$.  Its $\SU(1,1)$ matrix is
\begin{equation}\label{eq:U_matrix}
  U =
  \begin{pmatrix}
    e^{i3\pi/8}\sqrt{1 + 2^{-1/2}} &
    2^{-1/4}\,e^{-i3\pi/8} \\[2pt]
    2^{-1/4}\,e^{i3\pi/8} &
    e^{-i3\pi/8}\sqrt{1 + 2^{-1/2}}
  \end{pmatrix}\!,
\end{equation}
obtained from the Blaschke decomposition
$U = A(p)^{-1}\,\mathrm{diag}(e^{i\pi/3},
e^{-i\pi/3})\,A(p)$, where $A(p)$ is the Blaschke
automorphism sending $p \mapsto 0$.  One verifies
$|a|^2 - |b|^2 = (1 + 2^{-1/2}) - 2^{-1/2} = 1$.

\emph{Cartan form: the logical gate as a two-mode
squeezer.}---The Cartan ($KP$) decomposition of $\SU(1,1)$
renders any element as a phase rotation times a two-mode squeeze,
and its chart is nothing but the modulus--phase coordinates of
the first row.  Writing $k_z, k_\pm$ for the defining
$2 \times 2$ representation of $\Kz, \hat{K}_\pm$ and multiplying
out the two factors,
\begin{equation}\label{eq:KP_chart}
  e^{i\varphi k_z}\,e^{\xi k_+ - \xi^* k_-}
  =
  \begin{pmatrix}
    e^{i\varphi/2}\cosh r &
    e^{i(\varphi/2 + \theta)}\sinh r \\[2pt]
    e^{-i(\varphi/2 + \theta)}\sinh r &
    e^{-i\varphi/2}\cosh r
  \end{pmatrix}\!,
  \qquad \xi = r\,e^{i\theta},
\end{equation}
so for any
$g = \bigl(\begin{smallmatrix} \alpha & \beta \\
\beta^* & \alpha^* \end{smallmatrix}\bigr) \in \SU(1,1)$ the
inversion is immediate and unique ($|\alpha| = \cosh r \geq 1$ is
automatic from $|\alpha|^2 - |\beta|^2 = 1$):
\begin{equation}\label{eq:KP_inversion}
  \varphi = 2\arg\alpha,
  \qquad
  r = \mathrm{arccosh}\,|\alpha|,
  \qquad
  \theta = \arg\beta - \arg\alpha.
\end{equation}
Applied to Eq.~\eqref{eq:U_matrix}: $\arg\alpha = 3\pi/8$,
$|\alpha| = \sqrt{1 + 2^{-1/2}}$, $\arg\beta = -3\pi/8$, so
\begin{equation}\label{eq:U_cartan}
  \hat{\mathcal D}_U
  = e^{i(3\pi/4)\Kz}\; e^{\xi\Kp - \xi^*\Km},
  \qquad
  \xi = r\,e^{-3i\pi/4},
  \qquad
  r = \mathrm{arcsinh}\bigl(2^{-1/4}\bigr)
    = \tfrac{1}{2}\,\mathrm{arccosh}(1 + \sqrt{2})
    \approx 0.7643
\end{equation}
(equivalently $\cosh 2r = 1 + \sqrt{2} = \cot(\pi/8)$); we have
verified the product against Eq.~\eqref{eq:U_matrix} numerically
to machine precision.  The order-3 logical gate is therefore a
$3\pi/4$ two-mode phase rotation followed by a two-mode squeezer
with $r \approx 0.764$, about $6.6~\mathrm{dB}$---half the
$13.3~\mathrm{dB}$ squeeze of a stabilizer.  Indeed, since
$e^{i(3\pi/4)k_z} = R^3$ and conjugating the squeeze through the
rotation carries its axis from $-3\pi/4$ to $0$, the
decomposition reads equivalently
$U = \gamma_0^{1/2}\,R^3$: half a stabilizer squeeze composed
with three ticks of the octagon rotation.

\subsubsection{Action on the $k=1$ code space}\label{sec:numerics}

We verify the weight-2 representation matrices $\rho(R)$ and
$\rho(U)$ below by two independent methods (``weight 2'' is the
traditional automorphic weight $2k$; the space acted on is the
$k = 1$ code space).  Because the stellar
transform $\psi \mapsto f^\psi$ is antilinear in~$\psi$, code
words transform by the complex-conjugate matrices: the logical
gates quoted in the main text are
$\hat{V}_R = \overline{\rho(R)}$ and
$\hat{V}_U = \overline{\rho(U)}$.  The two methods are:
symbolic verification in exact arithmetic, and numerical
verification against the automorphic forms $\omega_0, \omega_1$.
Full details are in the companion notebook
\texttt{SV-representation-check.ipynb}.

\emph{Claimed matrices.}---The weight-2 representation of
$\Aut(\CS) \cong \GL(2, \mathbb{F}_3)$ in the basis
$(\omega_0, \omega_1)$ is generated by
\begin{equation}\label{eq:rho_extracted}
\begin{aligned}
  \rho(R) &= \frac{1}{\sqrt{6}}
  \begin{pmatrix}1{+}i\sqrt{3} & -i\sqrt{2} \\[-2pt]
    i\sqrt{2} & -(1{-}i\sqrt{3})\end{pmatrix}\!,
  \\[4pt]
  \rho(U) &= \frac{1}{2\sqrt{3}}
  \begin{pmatrix}-(\sqrt{3}{+}i) & 2\sqrt{2} \\[-2pt]
    -2\sqrt{2} & -(\sqrt{3}{-}i)\end{pmatrix}\!.
\end{aligned}
\end{equation}

\emph{Symbolic verification.}---The following are verified
in exact arithmetic (SageMath symbolic ring):
\begin{itemize}
  \item $\rho(R)^4 = -I$ (the hyperelliptic involution acts as $-1$
        on abelian differentials) and $\rho(U)^3 = I$;
  \item $\rho(R)^\dagger\rho(R) = \rho(U)^\dagger\rho(U) = I$
        (unitarity);
  \item the group $\langle\rho(R),\rho(U)\rangle$ has order~48,
        confirming $\GL(2, \mathbb{F}_3)$.
\end{itemize}

\emph{Numerical verification.}---We construct the $k = 1$
automorphic forms $\omega_0(z)$, $\omega_1(z)$ from the $k = 2$
Poincar\'e theta series via the conic factorization:
\begin{enumerate}
  \item Evaluate $\theta_R, \theta_G, \theta_B$ on a $400\times 400$
        grid covering the Poincar\'e disk, retaining ${\sim}\,2000$
        terms per conjugacy class (height cutoff $B = 10^4$), and
        form the Veronese basis $(X, Y, Z)$ via
        Eq.~\eqref{eq:XYZ_SM}.
  \item Extract $\omega_0 = \sqrt{X}$ using a branch-consistent
        square root (priority-queue BFS propagating the sign from
        large-$|X|$ seed points through 8-connected neighbors), and
        set $\omega_1 = Y / \omega_0$.
  \item For each automorphism $\varphi \in \{R, U\}$ and test points
        $\alpha \in \DD$, verify the weight-2 pullback relation
        \begin{equation}\label{eq:pullback_check}
          (\varphi'(\alpha))^k\,\omega_i\bigl(\varphi(\alpha)\bigr)
          = \sum_j \rho(\varphi)_{ij}\,\omega_j(\alpha)
        \end{equation}
        by interpolating $\omega_0, \omega_1$ from the
        branch-consistent grid.
\end{enumerate}
At ${\sim}\,2000$ terms per conjugacy class, the pointwise relative
errors in~\eqref{eq:pullback_check} are $\lesssim 10^{-4}$ for
both generators (Fig.~\ref{fig:convergence}b).

\section{Experimental platform for higher-genus GKP codes}
\label{sec:protocol}

We derive the effective two-mode Hamiltonian implemented by the
cQED architecture of Fig.~\ref{fig:circuit}, then describe the
dissipative kick channel that uses it.  Two resonators
$\hat a,\hat b$ are inductively coupled to a transmon $\hat c$,
and a flux-pumped symmetric SQUID closes the $a$--$b$ loop;
pumping that SQUID at the dressed sum frequency drives a
two-mode squeezing process that is conditional on the
transmon being in $|0\rangle$.  The derivation proceeds in the lab frame, following standard black-box
circuit-quantization logic~\cite{Nigg2012_SM}: solve the linear
circuit, then add the Josephson nonlinearity perturbatively.  A
fully annotated symbolic verification of every step is provided
as the companion notebook
\texttt{SVI-conditional-kick-protocol.ipynb}.

\subsection{Static circuit}
With nodal phases
$\hat\phi_\mu = \varphi_\mu(\hat\mu+\hat\mu^\dagger)$ for
$\mu\in\{a,b,c\}$ and zero-point amplitudes $\varphi_\mu$, and expanding the cosine potential of the symmetric SQUID, we arrive at the bare Hamiltonian
\begin{equation}\label{eq:Hbare_SM}
\begin{aligned}
\hat H &= \sum_\mu\omega_\mu\hat n_\mu
\;+\;\tfrac{L_a}{2}(\hat\phi_a-\hat\phi_c)^2
\;+\;\tfrac{L_b}{2}(\hat\phi_b-\hat\phi_c)^2\\
&\quad\;+\;\tfrac{E_{ab}}{2}\,\epsilon_p(t)\,(\hat\phi_b-\hat\phi_a)^2
\;+\;\tfrac{E_c}{24}\hat\phi_c^4 + \cdots,
\end{aligned}
\end{equation}
with $\epsilon_p(t)=\epsilon\cos(\omega_p t+\theta)$ the SQUID
flux pump.  Each inductive coupler splits as
$\tfrac{L_j}{2}(\hat\phi_j-\hat\phi_c)^2=\tfrac{L_j}{2}\hat\phi_j^2
+\tfrac{L_j}{2}\hat\phi_c^2-L_j\hat\phi_j\hat\phi_c$; absorbing the
self-shifts into renormalized oscillator frequencies and dropping
counter-rotating exchange (valid for
$|g_j|\ll\omega_j+\omega_c$), then normal-ordering the transmon
quartic and applying the RWA, gives the static (pump-off)
Hamiltonian
\begin{equation}\label{eq:Hst_SM}
\hat H_\mathrm{st}
\;=\;\underbrace{\textstyle\sum_\mu\omega_\mu\hat n_\mu+\tfrac{\alpha}{2}\hat n_c(\hat n_c-1)}_{\hat H_0}
\;+\;\underbrace{\textstyle\sum_{j=a,b}g_j(\hat j^\dagger\hat c+\hat c^\dagger\hat j)}_{\hat V},
\end{equation}
with linear exchange $g_j=-L_j\varphi_j\varphi_c$ and transmon
anharmonicity $\alpha=\tfrac{1}{2}E_c\varphi_c^4$ (the sign
convention is $\alpha<0$ for a usual transmon).

\subsection{Dispersive Schrieffer--Wolff}\label{subsec:dispSW}
For an anti-Hermitian
generator $\hat S$, the unitary frame change
$\hat H_\mathrm{eff} = \mathrm{Ad}_{\exp\hat S}\,\hat H_\mathrm{st}$
expands via the identity
$\mathrm{Ad}_{\exp\hat S} = \exp\mathrm{ad}_{\hat S}$ as
\begin{equation}\label{eq:SW_Adexp_SM}
\hat H_\mathrm{eff} = \hat H_\mathrm{st}+\mathrm{ad}_{\hat S}\hat H_\mathrm{st}+\tfrac{1}{2}\mathrm{ad}_{\hat S}^2\hat H_\mathrm{st}+\cdots,
\end{equation}
which, on splitting $\hat H_\mathrm{st}=\hat H_0+\hat V$ and taking
$\hat S$ to be of order $\hat V$, reorganizes into orders of
$\hat V$ as
\begin{equation*}
\hat H_\mathrm{eff} = \hat H_0+\bigl(\mathrm{ad}_{\hat S}\hat H_0+\hat V\bigr)+\bigl(\mathrm{ad}_{\hat S}\hat V+\tfrac{1}{2}\mathrm{ad}_{\hat S}^2\hat H_0\bigr)+O(\hat V^3).
\end{equation*}
The Schrieffer--Wolff choice fixes $\hat S$ by killing the
$O(\hat V)$ correction,
\begin{equation}\label{eq:SW_condition_SM}
\mathrm{ad}_{\hat S}\hat H_0 + \hat V \;=\; 0
\quad\Longleftrightarrow\quad
[\hat S,\hat H_0] \;=\; -\hat V,
\end{equation}
leaving $\hat H_\mathrm{eff} = \hat H_0 + \tfrac{1}{2}[\hat S,\hat V] + O(\hat V^3)$.

\begin{figure}[t]
  \centering
  \includegraphics[width=0.7\columnwidth]{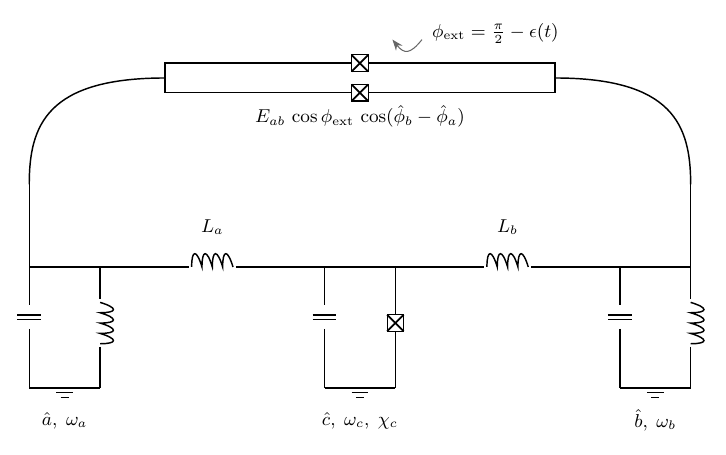}
  \caption{\label{fig:circuit}%
    Two-mode cQED architecture for conditional $\SU(1,1)$ control.
    Two LC resonators ($\hat a, \hat b$) and a transmon ($\hat c$,
    capacitor in parallel with a Josephson junction) are
    inductively coupled in pairs: linear inductors $L_a, L_b$
    realize the static $a$--$c$ and $b$--$c$ exchange (giving rise
    to the dispersive shifts $\chi_a, \chi_b$ in
    Sec.~\ref{subsec:dispSW}), while a
    symmetric SQUID (two parallel JJs) closes the $a$--$b$ loop
    and is flux-pumped at $\omega_p\simeq\widetilde\omega_a+\widetilde\omega_b$
    [precise form in Eq.~\eqref{eq:pump_freq_SM}] via
    $\phi_\mathrm{ext}=\pi/2-\epsilon(t)$.  The pumped SQUID drives
    the parametric two-mode squeezing
    $\Lambda(e^{i\phi}\hat a^\dagger\hat b^\dagger+\mathrm{h.c.})$
    of \eqref{eq:Lambda_SM}.}
\end{figure}

In our setting, define $\Delta_j=\omega_c-\omega_j$.  In the
dispersive regime $|g_j|,|\alpha|\ll|\Delta_j|$, the standard SW
choice is the generator $\hat S^\mathrm{exact}$ satisfying
\eqref{eq:SW_condition_SM} exactly.  Decomposing
$\hat V$ by transmon manifold this is solved by
\begin{equation}\label{eq:Sd_exact_SM}
\hat S^\mathrm{exact}
= \sum_{j=a,b}\Bigl(\hat c^\dagger\hat j\,\frac{g_j}{\Delta_j+\alpha\hat n_c}
\;-\;\mathrm{h.c.}\Bigr),
\end{equation}
manifestly anti-Hermitian, giving $e^{\hat S^\mathrm{exact}}\hat H_\mathrm{st}\,e^{-\hat S^\mathrm{exact}}
=\hat H_0+\tfrac{1}{2}[\hat S^\mathrm{exact},\hat V]+O(\hat V^3)$.
For symbolic tractability the companion notebook works instead with
the \emph{approximate} generator
\begin{equation}\label{eq:Sd_SM}
\hat S = \sum_{j=a,b}\frac{g_j}{\Delta_j}\Bigl[\hat c^\dagger\hat j\,\Bigl(\id-\tfrac{\alpha}{\Delta_j}\hat n_c\Bigr)-\mathrm{h.c.}\Bigr],
\end{equation}
obtained by Taylor-expanding $1/(\Delta_j+\alpha\hat n_c)$ to first
order in $\alpha\hat n_c/\Delta_j$ (a truncation of the
dimensionless generator at relative order $(\alpha/\Delta_j)^2$).
This satisfies the SW condition only approximately: the residual,
computed exactly in the companion notebook, is
\begin{equation}\label{eq:SW_residual_SM}
[\hat S,\hat H_0]+\hat V
= \sum_{j=a,b}\frac{g_j\alpha^2}{\Delta_j^2}
\bigl(\hat n_c^{2}\,\hat j^\dagger\hat c+\mathrm{h.c.}\bigr)
= O\!\bigl(g\,\alpha^{2}/\Delta^{2}\bigr),
\end{equation}
which we neglect below alongside the $O(\hat V^3)$ terms.

Splitting $\hat S=\hat S^{(a)}+\hat S^{(b)}$ and
$\hat V=\hat V_a+\hat V_b$ with
$\hat V_j = g_j(\hat j^\dagger\hat c+\hat c^\dagger\hat j)$ and
$\hat S^{(j)}$ the $j$th summand in \eqref{eq:Sd_SM}, the
second-order correction decomposes into a piece diagonal in the
resonator index,
$\sum_j \tfrac{1}{2}[\hat S^{(j)},\hat V_j]$, and an off-diagonal
piece,
$\tfrac{1}{2}([\hat S^{(a)},\hat V_b]+[\hat S^{(b)},\hat V_a])$.
The two brackets together produce seven operator structures ---
(i)--(v) from the diagonal commutators, and (vi)--(vii) from the off-diagonal commutators:
\begin{itemize}
\item[(i)] Resonator Lamb shift, $-(g_j^2/\Delta_j)\,\hat n_j$;
      absorbed into the dressed frequency
      $\widetilde\omega_j = \omega_j-g_j^2/\Delta_j$.
\item[(ii)] Transmon Lamb shift, $+(g_j^2/\Delta_j)\,\hat n_c$;
      the $j$-sum is absorbed into the dressed frequency
      $\widetilde\omega_c = \omega_c+\sum_j g_j^2/\Delta_j$.
\item[(iii)] Dispersive cross-Kerr, $\chi_j\,\hat n_j\hat n_c$, with
      \begin{equation}\label{eq:chi_jj_SM}
        \chi_j \;=\; \frac{2g_j^2\,\alpha}{\Delta_j^{2}},\qquad j=a,b;
      \end{equation}
      the gate-relevant term.
\item[(iv)] Renormalization of the Kerr coefficient,
      $-(g_j^2\alpha/\Delta_j^2)\,\hat n_c(\hat n_c-1)$; absorbed into a dressed anharmonicity
      $\widetilde\alpha = \alpha-2\sum_j g_j^2\alpha/\Delta_j^2$.
\item[(v)] Two-photon hopping,
      $-(g_j^2\alpha/2\Delta_j^2)(\hat c^{\dagger 2}\hat j^2+\mathrm{h.c.})$.
\item[(vi)] Transmon-mediated $a$--$b$ beam-splitter,
      $\hat J_{ab}(\hat n_c)\,(\hat a^\dagger\hat b+\hat b^\dagger\hat a)$,
      with $\hat n_c$-dependent (operator-valued) coefficient
      \begin{equation}\label{eq:Jab_SM}
      \hat J_{ab}(\hat n_c) = -\frac{g_a g_b}{2}\Bigl(\frac{1}{\Delta_a}+\frac{1}{\Delta_b}\Bigr) + O(\alpha/\Delta).
      \end{equation}
\item[(vii)] Two-mode pair hopping, at the same order as item (v),
      \begin{equation}\label{eq:pair_SM}
      -\frac{g_a g_b\,\alpha}{2}\Bigl(\frac{1}{\Delta_a^{2}}+\frac{1}{\Delta_b^{2}}\Bigr)\bigl(\hat a^\dagger\hat b^\dagger\hat c^{2}+\mathrm{h.c.}\bigr),
      \end{equation}
      converting two transmon excitations into an $a$--$b$ photon
      pair and back.
\end{itemize}
All seven appear explicitly in the symbolic output of the
companion notebook.  The displayed coefficients are expressed to leading-order in $\alpha/\Delta_j$.

Collecting: items (i)--(iv) are diagonal in the occupation numbers
and are absorbed into the dressed parameters
$\widetilde\omega_j,\widetilde\omega_c,\widetilde\alpha,\chi_j$;
items (v)--(vii) are not, and constitute a remnant
$\hat H_\mathrm{off}$:
\begin{equation}\label{eq:Hdisp_SM}
\hat H_\mathrm{eff}^\mathrm{st} = \hat H_\mathrm{disp}+\hat H_\mathrm{off},
\qquad
\hat H_\mathrm{disp} = \sum_{j=a,b}\bigl(\widetilde\omega_j\hat n_j+\chi_j\hat n_j\hat n_c\bigr)+\widetilde\omega_c\hat n_c+\tfrac{\widetilde\alpha}{2}\hat n_c(\hat n_c-1).
\end{equation}
No entry of $\hat H_\mathrm{off}$ may be dropped at this stage. However, later on we will see how they become off-resonant once the SQUID is pumped to produce two-mode squeezing interactions.

\subsection{Pumped SQUID in the dressed frame}\label{subsec:pump}
With the pump on, the SW frame change acts also on the pumped
SQUID term
$\tfrac{E_{ab}}{2}\,\epsilon_p(t)\,(\hat\phi_b-\hat\phi_a)^2$ of
\eqref{eq:Hbare_SM}, adding to the static piece \eqref{eq:Hdisp_SM}
a pumped-branch piece:
\begin{equation}\label{eq:Heff_split_SM}
\hat H_\mathrm{eff} = \hat H_\mathrm{eff}^\mathrm{st}+\hat H_\mathrm{eff}^\mathrm{b}.
\end{equation}
Conjugation
distributes over products,
$e^{\hat S}(\hat\phi_b-\hat\phi_a)^2 e^{-\hat S}=
[e^{\hat S}(\hat\phi_b-\hat\phi_a)e^{-\hat S}]^2$, and the
$\mathrm{Ad}_{\exp\hat S}=\exp\mathrm{ad}_{\hat S}$ expansion of
the inner factor gives
\begin{equation}\label{eq:phidiff_SM}
e^{\hat S}(\hat\phi_b-\hat\phi_a)e^{-\hat S}
=-\varphi_a(\hat a+\hat a^\dagger)+\varphi_b(\hat b+\hat b^\dagger)+O(\varphi\, g/\Delta).
\end{equation}
The $O(\varphi\,g/\Delta)$ remainder of \eqref{eq:phidiff_SM} is
the renormalization of the branch flux. We drop this remainder
altogether; this is a core modeling step.  The dropped contributions
carry an extra factor of $g/\Delta$ relative to the leading piece,
so any operator they generate after squaring and pumping has
coefficient of order $\Lambda\cdot(g/\Delta)$ --- smaller than
$\Lambda$ itself by the dispersive parameter.  Combined with the
condition $|\Lambda|\ll|\chi_a+\chi_b|$, to be imposed in
Sec.~\ref{subsec:rotframe} so that the drive addresses only the
transmon ground state, we obtain the nested chain
\begin{equation}\label{eq:Jchain_SM}
\boxed{\;\;\Lambda\cdot\frac{g}{\Delta}\;\ll\;\Lambda\;\ll\;\chi_a,\,\chi_b,\;\;}
\end{equation}
so every coefficient generated by the dropped remainder lies
parametrically below both the pump amplitude $\Lambda$ and the
dispersive shifts $\chi_j$.  The truncation is consistent at the
level of the bare Hamiltonian without invoking any resonance
argument.

\paragraph*{Pump coefficients.}
Squaring \eqref{eq:phidiff_SM} produces all bilinear pairings of
$\hat a, \hat a^\dagger, \hat b, \hat b^\dagger$, each carrying the
common pump prefactor $\cos(\omega_p t+\theta)$: the two-mode
pairing $\hat a^\dagger\hat b^\dagger+\mathrm{h.c.}$ with c-number
amplitude
\begin{equation}\label{eq:Lambda_SM}
\Lambda = \tfrac{1}{2}E_{ab}\epsilon\,\varphi_a\varphi_b\,e^{-i\theta},
\end{equation}
single-mode pairings $\hat a^{\dagger2},\hat b^{\dagger2}$ with
amplitudes $\lambda_{jj}=\tfrac{1}{4}E_{ab}\epsilon\,\varphi_j^2\,e^{-i\theta}$
of the same order as $\Lambda$, and pump-modulated beam-splitter
and number terms --- together constituting
$\hat H_\mathrm{eff}^\mathrm{b}$.  As with $\hat H_\mathrm{off}$,
none of these may be dropped yet.

\subsection{Rotating frame and the conditional Hamiltonian}\label{subsec:rotframe}
We now set the pump frequency to the dressed sum frequency,
\begin{equation}\label{eq:pump_freq_SM}
\omega_p = \widetilde\omega_a+\widetilde\omega_b+\delta_p,
\end{equation}
with $\delta_p$ a small pump detuning.
This will make the
two-mode squeeze on resonance when the transmon is in
$|0\rangle$ and dispersively detuned by $\chi_a+\chi_b$ in
each higher manifold.  Transform to the rotating
frame
\begin{align}
	R(t) = \exp\bigl[-it\bigl(\nu_a\hat n_a+\nu_b\hat n_b+
\widetilde\omega_c\hat n_c+\tfrac{\widetilde\alpha}{2}\hat n_c(\hat n_c-1)\bigr)\bigr]
\end{align}
with $\nu_j=\widetilde\omega_j+\delta_p/2$ (so
$\nu_a+\nu_b=\omega_p$ exactly).  In this frame the diagonal part
of \eqref{eq:Hdisp_SM} collapses identically: the transmon
self-energy
$\widetilde\omega_c\hat n_c+\tfrac{\widetilde\alpha}{2}\hat n_c(\hat n_c-1)$
is the frame generator and cancels, the cross-Kerr terms are
invariant, and
$\widetilde\omega_j\hat n_j\to-\tfrac{\delta_p}{2}\hat n_j$; the
resonant pump pairing
$\Lambda\,\hat a^\dagger\hat b^\dagger e^{-i\omega_p t}+\mathrm{h.c.}$
becomes static.  Every other term catalogued in $\hat H_\mathrm{off}$
and $\hat H_\mathrm{eff}^\mathrm{b}$ acquires a residual phase
$e^{i\Omega t}$.  We quote each detuning at $\delta_p=0$ (where
$\nu_j=\widetilde\omega_j$) and at leading order: for transmon
occupations $m=O(1)$ in the dispersive regime, the anharmonic,
Lamb-shift, and cross-Kerr corrections to each $\Omega$ ---
$O(\widetilde\alpha\,m)$, $O(g^2/\Delta)$, and $O(\chi_j m)$ ---
are parametrically small, and the detunings reduce to bare
frequency combinations:
\begin{itemize}
\item two-photon hopping (v),
      $\sim(g_j^{2}\alpha/\Delta_j^{2})\,\hat c^{\dagger2}\hat j^{2}$:
      \quad$\Omega=2\Delta_j$;
\item mediated beam-splitter (vi),
      $\sim(g_ag_b/\Delta)\,\hat a^{\dagger}\hat b$:
      \quad$\Omega=\Delta_b-\Delta_a$;
\item pair hopping (vii),
      $\sim(g_ag_b\alpha/\Delta^{2})\,\hat a^{\dagger}\hat b^{\dagger}\hat c^{2}$:
      \quad$\Omega=-(\Delta_a+\Delta_b)$;
\item single-mode squeezes, $\sim\Lambda\,\hat a^{\dagger2}$ and
      $\sim\Lambda\,\hat b^{\dagger2}$:
      \quad$\Omega=\pm(\Delta_b-\Delta_a)$;
\item pump-modulated beam-splitter,
      $\sim\Lambda\,\hat a^{\dagger}\hat b\,e^{\mp i\omega_p t}$:
      \quad$\Omega=-2\omega_b$ or $+2\omega_a$.
\end{itemize}
The RWA drops each term whose coefficient is small against its
detuning, at the price of second-order shifts
$\sim|\mathrm{coeff}|^2/|\Omega|$.  Reading the conditions
$|\mathrm{coeff}|\ll|\Omega|$ off the list: for (v) and (vii) they
are $g_j^{2}|\alpha|/\Delta_j^{2}\ll2|\Delta_j|$ and
$g_ag_b|\alpha|/\Delta^{2}\ll|\Delta_a+\Delta_b|$,
automatic in the dispersive regime; for the pump-modulated
beam-splitter they are
$|\Lambda|\ll2\omega_{a},2\omega_b$, implied a
fortiori by \eqref{eq:Jchain_SM} since
$\chi_j\ll\omega_j$ in this
dispersive architecture.  Finally, (vi) and the single-mode
squeezes require the mode--mode detuning condition
\begin{equation}\label{eq:BS_offres_SM}
\frac{g_ag_b}{2}\Bigl|\frac{1}{\Delta_a}+\frac{1}{\Delta_b}\Bigr|,
\;|\lambda_{aa}|,\;|\lambda_{bb}|
\;\ll\;|\Delta_a-\Delta_b|.
\end{equation}
After the RWA the boxed effective Hamiltonian is
\begin{equation}\label{eq:HRWA_SM}
\hat H_\mathrm{RWA} = \bigl[\chi_a\hat n_c-\tfrac{\delta_p}{2}\bigr]\hat n_a +\bigl[\chi_b\hat n_c-\tfrac{\delta_p}{2}\bigr]\hat n_b+\Lambda\,\hat a^\dagger\hat b^\dagger+\Lambda^*\,\hat a\hat b+\hat H_\mathrm{small},
\end{equation}
with $\hat H_\mathrm{small}$ the collection of RWA-dropped terms
enumerated above.
The two-mode squeezing interaction is on resonance only when $\hat n_c=0$.

\subsection{Summary of regime of validity}
What the detailed analysis above tells us is that, in addition to
the standard dispersive-regime constraints
$|g_j|,|\alpha|\ll|\Delta_j|$ (which by themselves suppress the
two-photon hopping (v), the pair hopping (vii), and the branch-flux
sidebands) and the chain \eqref{eq:Jchain_SM}, the Hamiltonian
\eqref{eq:HRWA_SM} requires the mode--mode detuning condition
\eqref{eq:BS_offres_SM}: the bare resonant frequencies of the
resonator modes $\hat a,\hat b$ must be detuned from each other by
much more than $g_ag_b/\Delta$ and $|\lambda_{jj}|$.  In the
dispersive limit $g_j/\Delta_j \ll 1$ this is not a stringent
constraint.

\subsection{Conditional gate primitives}\label{sec:kick}
The conditional Hamiltonian of the main text requires only matched
dispersive shifts, $\chi_a=\chi_b\equiv\chi$.  With $\chi_a=\chi_b=\chi$ and $\delta_p=0$, the
boxed Hamiltonian \eqref{eq:HRWA_SM} reduces to
\begin{equation}\label{eq:Hcond_SM}
\hat H_\mathrm{eff}\;\simeq\;\chi\,\hat n_c\,(\hat n_a+\hat n_b)+2|\Lambda|\,\Kphi,
\end{equation}
with $\phi=\arg\Lambda=-\theta$ and
$\Kphi = \tfrac{1}{2}(e^{i\phi}\hat a^\dagger\hat b^\dagger+e^{-i\phi}\hat a\hat b)$,
so that
$2|\Lambda|\Kphi=\Lambda\,\hat a^\dagger\hat b^\dagger+\Lambda^*\hat a\hat b$.
Like \eqref{eq:HRWA_SM}, this is written in the full rotating
frame: the transmon self-energy is part of the frame generator and,
if reintroduced, contributes only a passive transmon phase.
Two primitives follow:
\begin{itemize}
  \item \emph{Conditional squeeze} ($|\chi|\gg|\Lambda|$): with
        the pump on, the resonant manifold $\hat n_c=0$
        undergoes $\exp(-i\theta\,P_0\Kphi)$, where
        $P_0 = |0\rangle\langle 0|_c$ projects onto the transmon
        ground state.  On the ancilla qubit subspace
        $P_0 = \id - \hat n_c$, so this is equivalent
        (up to the unconditional squeeze $e^{-i\theta\Kphi}$) to
        $\mathrm{C}\Kphi(\theta) = e^{i\hat n_c\theta\,\Kphi}$.
  \item \emph{Conditional} $\Kz$ ($\Lambda=0$): with
        $\Kz = \tfrac{1}{2}(\hat n_a+\hat n_b+1)$, the cross-Kerr
        $\chi\hat n_c(\hat n_a+\hat n_b)
        = 2\chi\hat n_c\Kz - \chi\hat n_c$ realizes
        $\mathrm{C}\Kz(\theta) = e^{i\hat n_c\theta\,\Kz}$ up to a
        transmon phase.
\end{itemize}
Generic conditional $\SU(1,1)$ operations are synthesized by
Trotterization of these primitives; preparing the transmon in
$|1\rangle$ converts conditional to unconditional.

\subsection{Conditional-kick protocol}
For a single stabilizer generator
$\hat{\mathcal{D}}_\gamma$, the conditional-kick channel applies
the Kraus operators
\begin{equation}\label{eq:kraus_SM}
  \hat{\mathcal{K}}_0 = \tfrac{1}{2}(\id + \hat{\mathcal{D}}_\gamma), \qquad
  \hat{\mathcal{K}}_1 = \tfrac{1}{2}\Veps(\id - \hat{\mathcal{D}}_\gamma),
\end{equation}
where $\Veps = e^{i\varepsilon\Kz}$ is a small
$\SU(1,1)$ ``kick.''  These satisfy
$\hat{\mathcal{K}}_0^\dagger \hat{\mathcal{K}}_0
+ \hat{\mathcal{K}}_1^\dagger \hat{\mathcal{K}}_1 = \id$
identically: $\hat{\mathcal{D}}_\gamma$ and $\Veps$ are unitary, so
the cross terms cancel exactly.

The circuit implementation is:
\begin{enumerate}
  \item Prepare ancilla in $|+\rangle$.
  \item Apply $\mathrm{C}\hat{\mathcal{D}}_\gamma$ (conditional
        stabilizer displacement).
  \item Measure ancilla in the $X$ basis.
  \item On outcome $|{-}\rangle$, apply correction
        $\Veps$.
\end{enumerate}

\subsection{Stabilizer-defect dynamics}
Because the generator $\hat{K}_{\alpha_\gamma}$ is hyperbolic, its
spectrum is purely continuous and exact eigenstates
$\Dg|\psi\rangle = |\psi\rangle$ are non-normalizable; the
normalizable-state formulation of the stabilizer constraints is
treated at full rigor in the spectral-gap analysis of
Sec.~\ref{sec:gap}.  For the protocol it suffices to track the
stabilizer defect
\begin{equation}\label{eq:defect_SM}
  \mathcal{W}(\rho) \;=\; 1 - \Re\Tr[\rho\,\Dg]\;\in\;[0,2],
\end{equation}
which is exactly twice the probability of the error outcome in one
round:
\begin{equation}\label{eq:pminus_SM}
  P_-(\rho)
  \;=\;\Tr[\hat{\mathcal{K}}_1^\dagger\hat{\mathcal{K}}_1\,\rho]
  \;=\;\tfrac{1}{4}\Tr[(\id-\Dg^\dagger)(\id-\Dg)\,\rho]
  \;=\;\tfrac{1}{2}\,\mathcal{W}(\rho).
\end{equation}
Small-defect states are therefore asymptotically dark: the error
branch fires with probability $\mathcal{W}/2$, vanishing as the
state approaches the code constraint. Convergence of the resulting dynamics is
established numerically: for a single generator the protocol
drives $\mathcal{W}$ rapidly to a small residual floor from
arbitrary initial states, as is seen in the main text.

\subsection{Multi-generator protocol}
In practice, one cycles through
the stabilizer generators
$\hat{\mathcal{D}}_0, \ldots, \hat{\mathcal{D}}_3$, applying
one round of the conditional kick channel for each.  As shown in
the main text's protocol figure, the single-generator protocol
drives its stabilizer defect to a small $\varepsilon$-limited
floor, while the multi-generator protocol plateaus at
$\langle \Hstab \rangle = \mathcal{O}(1)$ rather
than converging toward zero, providing independent numerical
evidence for the spectral gap
(Theorem~\ref{thm:gap_SM}).

\section{Proof of the spectral gap}\label{sec:gap}

We give the full proof of the spectral-gap theorem of the main
text.  The argument has three steps, following the proof sketch
given there:
(1)~the stabilizer group $\Gamma \cong \pi_1(\CS)$ is
non-amenable for $g \geq 2$, so the Markov operator on the regular
representation satisfies $\|M_\lambda\| < 1$;
(2)~the discrete series restricts to a representation weakly
contained in the regular representation,
$D_k^+\big|_\Gamma \preceq \lambda_\Gamma$; and
(3)~weak containment implies
$\|M\| \leq \|M_\lambda\|$, giving the gap $\delta > 0$.
We make each step precise.  Combined with the existence of
approximate GKP code words of arbitrary precision at genus one,
the result establishes the amenability dichotomy of the main
text.

\subsection{Step 1: Non-amenability and the Kesten bound}

Recall that $\Gamma$ is \emph{amenable} if and only if the trivial
representation is weakly contained in the regular representation:
$\mathbbm{1}_\Gamma \preceq \lambda_\Gamma$
(Hulanicki--Reiter~\cite{Hulanicki1966_SM,BHV2008_SM}).
Equivalently, $\Gamma$ is amenable if and only if the Markov
operator $M_\lambda$ of any symmetric random walk on the Cayley
graph $\mathrm{Cay}(\Gamma, S)$ has
$\|M_\lambda\| = 1$~\cite[Cor.~G.4.6]{BHV2008_SM}.

\begin{lemma}\label{lem:nonamenable}
For $g \geq 2$, the surface group
$\Gamma \cong \pi_1(\CS)$ is non-amenable, and consequently
$\|M_\lambda\| < 1$.
\end{lemma}

\begin{proof}
The standard presentation
$\Gamma = \langle a_1, b_1, \ldots, a_g, b_g \mid
\prod_{i=1}^g [a_i, b_i] = 1\rangle$ admits a surjection
$\Gamma \twoheadrightarrow F_2$ onto the free group on two
generators (send $a_1 \mapsto a$, $a_2 \mapsto b$, and every $b_i$ and
remaining $a_i \mapsto e$; then every commutator in the defining relation
maps to~$e$)~\cite[Ex.~G.2.4(iii)]{BHV2008_SM}.
Since $F_2$ is non-amenable~\cite[Ex.~G.2.4(ii)]{BHV2008_SM}
and quotients of amenable groups are
amenable~\cite[Prop.~G.2.2]{BHV2008_SM}, $\Gamma$ cannot be
amenable.
\end{proof}

\subsection{Step 2: Weak containment of the restricted
  discrete series}

We write $\pi \preceq \sigma$ when a unitary representation
$\pi$ is \emph{weakly contained}
in~$\sigma$~\cite[Def.~F.1.1]{BHV2008_SM}: every matrix
coefficient of~$\pi$ can be uniformly approximated on compact
sets by convex combinations of matrix coefficients of~$\sigma$.

\begin{lemma}\label{lem:restriction}
Let $\Gamma \subset G = \PSU(1,1)$ be a discrete Fuchsian
subgroup and $D_k^+$ the holomorphic discrete series at integer
Bargmann index~$k$.  Then
\begin{equation}\label{eq:restriction}
  D_k^+\big|_\Gamma \;\preceq\; \lambda_\Gamma.
\end{equation}
\end{lemma}

\begin{proof}
By Lang's Plancherel theorem for
$\SU(1,1)$~\cite[Ch.~IX, Thm.~1]{Lang1985_SM}, $D_k^+$ realizes
as the closed, left-translation-invariant, irreducible subspace
of $L^2(\SU(1,1))$ whose lowest-weight vector is
$\varphi_m(g) = \alpha(g)^{-m}$ and which is spanned by the
eigenfunctions
$\varphi_{m+2r}(g) = \alpha(g)^{-m-r}\,\bar\beta(g)^{r}$ of weight
$m+2r$, $r = 0,1,2,\ldots$, where $m = 2k$ and
$g = \bigl(\begin{smallmatrix}\alpha&\beta\\\bar\beta&\bar\alpha
\end{smallmatrix}\bigr) \in \SU(1,1)$.  At integer~$k$ the weight
$m = 2k$ is even, so $\varphi_{m+2r}(-g) = (-1)^{m}\varphi_{m+2r}(g)
= \varphi_{m+2r}(g)$ for every~$r$, and hence the entire $D_k^+$
summand is invariant under $g \mapsto -g$.  The natural isometry
\[
  L^2(\PSU(1,1)) \;\xrightarrow{\;\sim\;}\;
  L^2(\SU(1,1))^{\mathbb{Z}_2}
  \;=\; \{ f \in L^2(\SU(1,1)) : f(-g) = f(g)\}
\]
therefore identifies $D_k^+$ with a subrepresentation of
$\lambda_G = \lambda_{\PSU(1,1)}$, so $D_k^+ \leq \lambda_G$ and
in particular $D_k^+ \preceq \lambda_G$.

Since $\Gamma$ is discrete and hence a closed subgroup of the
Hausdorff group~$G$, two standard results give the chain
\[
  D_k^+\big|_\Gamma \;\preceq\;
  \lambda_G\big|_\Gamma \;\preceq\; \lambda_\Gamma\,:
\]
the first step is \cite[Prop.~F.3.4]{BHV2008_SM} (restriction
preserves weak containment for closed subgroups); the second is
\cite[Prop.~F.1.10]{BHV2008_SM}
($\lambda_G|_H \preceq \lambda_H$ for closed $H \leq G$).
\end{proof}

\subsection{Step 3: From weak containment to the spectral gap}

The stabilizer Hamiltonian and the Cayley graph Laplacian are
evaluations of the \emph{same} group algebra element
$a_S \in \mathbb{C}[\Gamma]$ in different representations.
Define the \emph{Markov (averaging) operator}
\begin{equation}\label{eq:markov}
  T_S \;=\; \frac{1}{|\widetilde{S}|}
  \sum_{s \in \widetilde{S}} s
  \;\;\in\; \mathbb{C}[\Gamma],
  \qquad \widetilde{S} = S \cup S^{-1},
\end{equation}
so that $\Hstab = \id - \pi(T_S)$ for any unitary
representation~$\pi$ of~$\Gamma$ [the stabilizer Hamiltonian of
the main text].  Since $\widetilde{S}$ is symmetric,
$T_S = T_S^*$ and $\pi(T_S)$ is
self-adjoint, so the spectral gap is
\begin{equation}\label{eq:gap_from_norm}
  \delta_\pi
  \;=\; \inf\,\sigma(\Hstab)
  \;=\; 1 - \sup\sigma\bigl(\pi(T_S)\bigr)
  \;\geq\; 1 - \|\pi(T_S)\|.
\end{equation}

The key tool is the operator norm inequality for weak
containment~\cite[Thm.~F.4.4]{BHV2008_SM}: if
$\pi \preceq \sigma$, then
\begin{equation}\label{eq:norm_ineq}
  \|\pi(a)\| \;\leq\; \|\sigma(a)\|
  \qquad \text{for all } a \in \mathbb{C}[\Gamma].
\end{equation}
In the regular representation $\lambda_\Gamma$, the norm
$\|\lambda_\Gamma(T_S)\|$ is the \emph{Kesten spectral radius}
$\rho(S)$, which equals~$1$ if and only if $\Gamma$ is
amenable~\cite[Cor.~G.4.6]{BHV2008_SM}.

\begin{theorem}[Spectral gap]\label{thm:gap_SM}
Let $\Gamma$ be a non-amenable discrete group with finite
symmetric generating set~$S$, acting unitarily on a Hilbert
space~$\HH$ via a representation~$\pi \preceq \lambda_\Gamma$.
Then the stabilizer
Hamiltonian $\Hstab = \id - \pi(T_S)$ has a spectral gap
\begin{equation}\label{eq:gap_bound}
  \delta_\pi
  \;=\; 1 - \sup\sigma\bigl(\pi(T_S)\bigr)
  \;\geq\; 1 - \|\pi(T_S)\|
  \;\geq\; 1 - \rho(S)
  \;>\; 0.
\end{equation}
\end{theorem}

\begin{proof}
Apply~\eqref{eq:norm_ineq} to $a = T_S$ and
$\sigma = \lambda_\Gamma$:
\[
  \|\pi(T_S)\| \;\leq\; \|\lambda_\Gamma(T_S)\|
  \;=\; \rho(S).
\]
By~\eqref{eq:gap_from_norm},
$\delta_\pi \geq 1 - \rho(S)$.  Since $\Gamma$ is
non-amenable, $\rho(S) < 1$
(Lemma~\ref{lem:nonamenable}), so $\delta_\pi > 0$.
\end{proof}

\noindent The spectral-gap theorem of the main text follows by
taking $\pi = D_k^+$ and applying Lemma~\ref{lem:nonamenable}
(non-amenability for $g \geq 2$) and
Lemma~\ref{lem:restriction}
($D_k^+\big|_\Gamma \preceq \lambda_\Gamma$).

For the genus-two surface group with standard generators,
Nagnibeda's upper bound~\cite{Nagnibeda1997_SM}
$\rho(S) \leq 0.662816$ gives
$\delta \geq 1 - \rho(S) \geq 0.337$; Bartholdi's lower
bound~\cite{Bartholdi2002_SM} $\rho(S) \geq 0.662420$ shows that
this estimate of $1 - \rho(S)$ is sharp to four digits.  We verify
the gap numerically by diagonalizing $\Hstab$ for the Bolza
surface at $k = 2$ in a truncated Fock space of dimension~$M$
(Table~\ref{tab:E0}).  The lowest eigenvalue $E_0(M)$
converges from above to $E_0 = 0.472$ at $M = 2000$; Richardson
extrapolation in $1/M$ on the last two rows,
$2E_0(2000) - E_0(1000) = 0.454$, gives $\delta \approx 0.45$, of
which the bound $\delta \geq 0.337$ captures ${\sim}75\%$.

\begin{table}[h]
  \caption{\label{tab:E0}Lowest eigenvalue $E_0$ of $\Hstab$
    for the Bolza surface at $k = 2$, computed by diagonalizing
    the $M \times M$ upper-left block of $\Hstab$ constructed at
    a larger truncation $5M$ (to suppress boundary artifacts).
    $E_0(M)$ converges from above as $M$ increases.}
\begin{ruledtabular}
\begin{tabular}{rccc}
  $M$ & $E_0$ &
  $\langle\mathrm{Re}\,D\rangle$ &
  $|\Delta E_0|/(\Delta M/100)$ \\
  \hline
  100  & 0.579 & 0.421 & --- \\
  200  & 0.562 & 0.438 & 0.017 \\
  500  & 0.510 & 0.490 & 0.017 \\
  1000 & 0.490 & 0.510 & 0.004 \\
  2000 & 0.472 & 0.528 & 0.002 \\
\end{tabular}
\end{ruledtabular}
\end{table}

\subsection{Extension to tessellation codes}
\label{sec:tessellation_extension}

The spectral gap obstruction is not specific to bosonic codes: the same machinery applies to any
unitary representation~$\pi$ of~$\Gamma$ that is weakly
contained in~$\lambda_\Gamma$.  As an illustration, the
tessellation codes of Wang, Xu, and Liu~\cite{WXL2025}---which
encode qudits into~$L^2(\DD)$ via delta-function
superpositions on a hyperbolic tessellation---fall under
exactly the same framework.  Their stabilizer Hamiltonian is
$\Hstab =\id- \pi(T_S)$ for $\pi$ the quasi-regular representation
of $\Gamma \subset \PSU(1,1)$ on~$L^2(\DD)$, and
Theorem~\ref{thm:gap_SM} applies once the analogue of
Lemma~\ref{lem:restriction} is established:

\begin{lemma}\label{lem:restriction_tessellation}
Let $\Gamma \subset G = \PSU(1,1)$ be a discrete Fuchsian
subgroup and $\pi$ the quasi-regular representation of~$G$
on~$L^2(\DD)$, $[\pi(g)f](x) = f(g^{-1}\cdot x)$.
Then
\[
  \pi\big|_\Gamma \;\preceq\; \lambda_\Gamma.
\]
\end{lemma}

\begin{proof}
Writing $\DD = \SU(1,1)/\mathrm{U}(1)$ as the quotient by the
maximal compact subgroup, the natural map
$L^2(\DD) \xrightarrow{\sim} L^2(\SU(1,1))^{\mathrm{U}(1)}$
(right $\mathrm{U}(1)$-invariant functions) realizes $\pi$ as a
subrepresentation of $\lambda_{\SU(1,1)}$.  Since
$\{\pm I\} \subset \mathrm{U}(1)$, every right-$\mathrm{U}(1)$-invariant
function is a fortiori $\{\pm I\}$-invariant, so the image lies
in $L^2(\SU(1,1))^{\mathbb{Z}_2} \cong L^2(\PSU(1,1)) = \lambda_G$
via the central isometry used in the proof of
Lemma~\ref{lem:restriction}.  Hence $\pi \leq \lambda_G$, and in
particular $\pi \preceq \lambda_G$.  Restriction to the closed
subgroup $\Gamma$ then gives the chain
\[
  \pi\big|_\Gamma \;\preceq\;
  \lambda_G\big|_\Gamma \;\preceq\; \lambda_\Gamma\,,
\]
by~\cite[Prop.~F.3.4]{BHV2008_SM}
and~\cite[Prop.~F.1.10]{BHV2008_SM} respectively.
\end{proof}

\section{Beyond integer weight: spin structures and a
non-split gate group}
\label{fw:doublecover}\label{fw:bolzahalf}

At half-integral Bargmann index, the logical extension need not split. Two distinct obstructions occur on the Bolza surface. For the odd theta characteristics, the obstruction is purely a double-cover obstruction: no choice of signs realizes the classical group inside \(SU(1,1)\), although arbitrary scalar rephasing linearizes the resulting projective action. For certain even theta characteristics, the obstruction is stronger. One is able to find a commuting pair of classical logic gates which lift to an anti-commuting pair of quantum gates, an invariant unchanged by arbitrary \(U(1)\)-rephasing. In this case the action of the classical logical gate group on the code space is irreducibly projective.\\

 Throughout this section the phase space is hyperbolic and the weight $k$
is half-integral: $G=\PSU(1,1)$,
$k\in\tfrac12\mathbb Z\setminus\mathbb Z$, and the generalized
displacement group is the double cover $\cD_k\cong\SU(1,1)$
(Sec.~\ref{fw:hyperbolic}), so the ambient extension
\eqref{fw:eq:pauli-ext} is the concrete exact sequence
\begin{equation}\label{fw:eq:su-psu}
  1 \longrightarrow \{\pm I\} \longrightarrow \SU(1,1)
    \xrightarrow{\ \pi\ } \PSU(1,1) \longrightarrow 1 ,
\end{equation}
with $Z_k=\{\pm I\}\cong\mathbb Z_2$ acting on states by
$\hat{\mathcal D}_{-I}=e^{2\pi ik}\,\id=-\id$.  Two consequences of
\eqref{fw:eq:su-psu} are immediate and used repeatedly below: every
classical map $g$ has exactly two lifts $\pm\tilde g$, so any two
lifts of a group $\Gamma\subset\PSU(1,1)$ differ by a sign character
$\Gamma\to\{\pm1\}$; and the representation distinguishes the two
lifts, since $-I$ acts as $-\id\neq\id$. \\

\subsection{Stabilizer lifts are theta characteristics}\label{fw:sub:lifts}
At half-integral $k$ the stabilizer selects a square root of the
canonical bundle: $\cS$ corresponds to a line bundle $\mathcal E\to\CS$
with $\mathcal E^{\otimes2}\simeq\Omega_{\CS}$, a \emph{theta
characteristic} (spin structure), and, writing $k=m+\tfrac12$,
$\Ls\simeq\Omega_{\CS}^{\otimes m}\otimes\mathcal E$
(Sec.~\ref{fw:hyperbolic}).  Every automorphism preserves
$\Omega_{\CS}^{\otimes m}$, so preserving the frame bundle is
equivalent to preserving the spin structure, and the $\cS$-polarized
normalizer of Sec.~\ref{fw:logical} becomes
\begin{equation}\label{fw:eq:half-logical}
  G_{\cS}
  \;=\; N^{\cS}_{G}(\Gamma)/\Gamma
  \;=\; \bigl\{\bar g\in\Aut(\CS):
      \bar g^{*}\mathcal E\simeq\mathcal E\bigr\},
\end{equation}
independent of the level---the automorphisms fixing the chosen spin
structure.  (In the hyperbolic setting every automorphism of $\CS$ is
induced by an element of $G=\PSU(1,1)$ normalizing $\Gamma$, which is
why the quotient $N^{\cS}_{G}(\Gamma)/\Gamma$ is the full
stabilizer of $\mathcal E$ in $\Aut(\CS)$.)\\

For the Bolza curve
$v^2=t^5-t$ the Weierstrass points---the fixed points of the
hyperelliptic involution $\iota$, which for the Bolza octagon is the
rotation $\iota(z)=-z$ about the center---sit at the vertices of a
regular octahedron on the hyperelliptic $\PP^1$, and
$\Aut(\CS)/\langle\iota\rangle\cong
S_4$ acts on them as the rotation group of that octahedron.

\subsection{Odd theta characteristics are Weierstrass
points}\label{fw:sub:oddchars}
On a surface of
genus $g$ there are $2^{2g}$ theta characteristics
[Ref.~\cite{Atiyah71_fw}, Thm.~2]; the \emph{parity}
$\varphi(\mathcal E):=\dim\Gamma(C,\mathcal E)\bmod 2$ is a
deformation invariant [\emph{ibid.}]; and exactly
$2^{g-1}(2^{g}+1)$ of them are even [Ref.~\cite{Atiyah71_fw},
Thm.~3].  At genus two: sixteen characteristics, ten even and six
odd.  The six odd ones admit a completely explicit description.

\begin{lemma}\label{fw:lem:degree-one}
On a compact Riemann surface $C$ of positive genus,
$\dim\Gamma(C,\mathcal O(p))=1$ for every point $p\in C$, and
$\mathcal O(p)\simeq\mathcal O(q)$ iff $p=q$.
\end{lemma}

\begin{proof}
The constant function $1$ satisfies $\mathrm{div}(1)+p=p\geq0$, so
$1\in\Gamma(C,\mathcal O(p))$ and the dimension is at least one.
Conversely, a nonconstant $f\in\Gamma(C,\mathcal O(p))$ would have a
single simple pole, hence define a holomorphic map
$f\colon C\to\PP^{1}$ of degree
$\deg f=\sum_{x\in f^{-1}(\infty)}m_{f}(x)=1$; a degree-one
holomorphic map of compact Riemann surfaces is an isomorphism,
forcing $C\simeq\PP^{1}$---genus zero.  The same argument proves the
second claim: $\mathcal O(p)\simeq\mathcal O(q)$ with $p\neq q$
supplies a meromorphic $f$ with $\mathrm{div}f=q-p$, again a
degree-one map.
\end{proof}

\begin{theorem}\label{fw:thm:odd-chars}
On a curve of genus two, $\mathcal O(w)$ is an odd theta
characteristic for every Weierstrass point $w$, and the six bundles
$\mathcal O(w_{1}),\dots,\mathcal O(w_{6})$ exhaust the odd theta
characteristics.
\end{theorem}

\begin{proof}
We first show $\mathcal O(2w)\simeq K_{C}$.  At genus two the
canonical map
$\varphi_{C}=[\omega_{0}:\omega_{1}]\colon C\to\PP^{1}$, with
$\omega_{0},\omega_{1}$ a basis of holomorphic differentials,
\emph{is} the hyperelliptic double cover, and the Weierstrass points
are its ramification points.  Let $\lambda=\varphi_{C}(w)$.  The holomorphic
section $\omega_{1}-\lambda\,\omega_{0}$ of $K_{C}$ vanishes exactly
on the fiber $\varphi_{C}^{-1}(\lambda)$; since $w$ is a ramification
point, that fiber is the single point $w$, so
$\mathrm{div}(\omega_{1}-\lambda\omega_{0})$ is supported at $w$, and
$\deg K_{C}=2$ fixes the multiplicity:
\begin{equation}\label{fw:eq:canonical-2w}
  \mathrm{div}\bigl(\omega_{1}-\lambda\,\omega_{0}\bigr)\;=\;2w ,
\end{equation}
a canonical divisor.  Hence $K_{C}\simeq\mathcal O(2w)$: each
$\mathcal O(w)$ is a theta characteristic, and its parity is odd by
Lemma~\ref{fw:lem:degree-one}, $\dim\Gamma(C,\mathcal O(w))=1$.  The
six characteristics are pairwise distinct, again by
Lemma~\ref{fw:lem:degree-one}, and a genus-two surface has exactly
$2^{g-1}(2^{g}-1)=6$ odd theta characteristics---so the list is
complete.
\end{proof}

\subsection{Odd characteristics carry a non-split $\mathbb Z_{16}$
gate group}\label{fw:sub:z16}

Take the odd characteristic at the origin.  The center of the octagon,
$z=0$, is fixed by the hyperelliptic involution $\iota(z)=-z$, hence its
image $w_0\in\CS$ is a Weierstrass point; set
$\mathcal E=\mathcal O(w_0)$ and let $\cS$ be the corresponding lift.\\

\emph{Classical logical group.}---By \eqref{fw:eq:half-logical}, the
classical gate group is the stabilizer of $\mathcal E$ in
$\Aut(\CS)$.  For $\mathcal E=\mathcal O(w_0)$, pullback acts on the
underlying divisor by
$\bar g^{*}\mathcal O(w_0)\simeq\mathcal O(\bar g^{-1}w_0)$, and
$\mathcal O(\bar g^{-1}w_0)\simeq\mathcal O(w_0)$ iff
$\bar g^{-1}w_0=w_0$ (Lemma~\ref{fw:lem:degree-one}); so $\bar g$
preserves $\mathcal E$ iff it fixes the point $w_0$.  The stabilizer is computed by the orbit--stabilizer theorem: the
octahedral action of Sec.~\ref{fw:sub:lifts} is transitive on the six
Weierstrass points, so the stabilizer of $w_0$ in the order-48 group
$\Aut(\CS)$ has order $48/6=8$; and it visibly contains the order-8
rotation $R(z)=e^{i\pi/4}z$ [Eq.~\eqref{eq:R_matrix}], which fixes
$z=0$.
Hence
\begin{equation}\label{fw:eq:z8}
  G_{\cS} = \langle R\rangle \cong \mathbb Z_8 .
\end{equation}

\emph{Quantum logical group.}---The rotation $R$ lifts to
$\SU(1,1)$ as $\widetilde R=\mathrm{diag}(e^{i\pi/8},e^{-i\pi/8})$
(the other lift being $-\widetilde R$), and
$\widetilde R^{\,8}=-I$, so $\widetilde R$ has order sixteen.  At
half-integral $k$ the representation
$\tilde g\mapsto\hat{\mathcal D}_{\tilde g}$ is faithful
($\hat{\mathcal D}_{-I}=-\id\neq\id$, Sec.~\ref{fw:hyperbolic}), so
$\hat{\mathcal D}_{\widetilde R}$ has order sixteen as well;
and since $-\id\notin\cS$,
no power of $\hat{\mathcal D}_{\widetilde R}$ below the sixteenth is
trivial modulo
the stabilizer.  The extension~\eqref{fw:eq:main-ses} over
$G_{\cS}\cong\mathbb Z_8$ is therefore
\begin{equation}\label{fw:eq:z16}
  1\longrightarrow \mathbb Z_2 \longrightarrow
  \cD_{\cS}=\bigl\langle \hat{\mathcal D}_{\widetilde R}\bigr\rangle
  \cong \mathbb Z_{16}
  \longrightarrow \mathbb Z_8 \longrightarrow 1 .
\end{equation}

\begin{theorem}[Double-cover obstruction]\label{fw:thm:nonsplit}
At half-integral Bargmann index the logical
extension~\eqref{fw:eq:z16} of the Bolza code
$\mathcal E=\mathcal O(w_0)$ does not split: the logical
gate group
$\cD_{\cS}\cong\mathbb Z_{16}$ is cyclic, while
$\mathbb Z_2\times\mathbb Z_8$ has no element of order sixteen.  No
choice of signs therefore makes the lifted gates close into a copy
of the classical gate group $G_{\cS}\cong\mathbb Z_8$ inside
$\cD_{\cS}$.
\end{theorem}

\noindent The five remaining odd characteristics are the
$\Aut$-translates of $\mathcal E=\mathcal O(w_0)$
(Sec.~\ref{fw:sub:lifts}), with conjugate quantum gate groups
$\cong\mathbb Z_{16}$: the computation above covers all six odd
spin structures.\\

\begin{remark} Although the odd-characteristic extension above does not split as an
extension by $Z=\{\pm1\}$, it splits after adjoining arbitrary scalar phases.
Let $d_R\in\mathcal D_{\mathcal S}$ denote the logical gate represented by
$\hat{\mathcal D}_{\widetilde R}$, so that $d_R^8=-1$.  In the
$U(1)$-pushout choose the rephased lift $v_R:=e^{i\pi/8}d_R$.  Then
$v_R^8=e^{i\pi}d_R^8=1$, while $v_R$ still projects to $R$ and therefore has
order exactly eight.  Since
$G_{\mathcal S}=\langle R\rangle\simeq\mathbb Z_8$, the assignment
\[
    s:G_{\mathcal S}\longrightarrow
    \mathcal D_{\mathcal S}\times_Z U(1),
    \qquad
    s(R^j):=v_R^j,
\]
is a homomorphism.  Thus the odd-characteristic obstruction is
confined to the sign extension: it therefore disappears in the $U(1)$-pushout.  The even characteristic below
instead yields a commutator obstruction, which no central rephasing can
alter.
\end{remark}

\subsection{An even-spin obstruction that survives rephasing}
\label{sec:even-spin-commutator}

The odd characteristic considered above gives a non-split extension by the
sign $Z=\{\pm1\}$, but its order obstruction disappears once arbitrary scalar
phases are allowed.  We now exhibit a stronger phenomenon.  For a suitable
even theta characteristic, two commuting classical logical gates have lifts
with commutator $-1$.  Since the group commutator is unchanged
by central rephasing, no enlargement of the allowed phase group linearizes
the action.  The natural place to detect this obstruction is the theta group,
which is the maximal scalar enlargement of the logical gate group furnished
by Theorem~\ref{fw:thm:theta}.

\begin{lemma}[Maximal scalar test]
\label{lem:maximal-scalar-test}
Let $A$ be any group of scalars with
\begin{equation}
    Z\subseteq A\subseteq\mathbb C^\times,
\end{equation}
and consider the pushout extension
\begin{equation}
    1\longrightarrow A
    \longrightarrow \mathcal D_{\mathcal S}\times_Z A
    \longrightarrow G_{\mathcal S}
    \longrightarrow 1.
    \label{eq:A-pushout-extension}
\end{equation}
If \eqref{eq:A-pushout-extension} splits, then the theta extension
\begin{equation}
    1\longrightarrow\mathbb C^\times
    \longrightarrow G(\mathbb L_{\mathcal S})
    \longrightarrow G_{\mathcal S}
    \longrightarrow1
    \label{eq:theta-extension-even-spin}
\end{equation}
splits.  Consequently, if the theta extension does not split, then neither
the original logical extension nor any intermediate scalar pushout
\eqref{eq:A-pushout-extension} splits.
\end{lemma}

\begin{proof}
The inclusion $A\hookrightarrow\mathbb C^\times$ induces a homomorphism
\begin{equation}
    j_A:\mathcal D_{\mathcal S}\times_Z A
    \longrightarrow
    \mathcal D_{\mathcal S}\times_Z\mathbb C^\times,
    \qquad
    [d,a]_A\longmapsto[d,a]_{\mathbb C^\times}.
    \label{eq:phase-pushout-inclusion}
\end{equation}
It is well defined because the same relation
$[zd,a]=[d,za]$, $z\in Z$, defines both pushouts, and it commutes with the
projections to $G_{\mathcal S}$.  If
$s_A:G_{\mathcal S}\to\mathcal D_{\mathcal S}\times_Z A$ is a splitting, then
\begin{equation}
    \Theta\circ j_A\circ s_A:
    G_{\mathcal S}\longrightarrow G(\mathbb L_{\mathcal S})
\end{equation}
is a splitting of \eqref{eq:theta-extension-even-spin}, where
$\Theta$ is the isomorphism of Theorem~\ref{fw:thm:theta}.
\end{proof}

It therefore suffices to find one half-integral-weight code whose theta
extension does not split.\\

\emph{The even characteristic.---}
We now specialize to a theta characteristic for which the stronger
obstruction occurs.  Write the Bolza curve in hyperelliptic form
\[
    C:\qquad v^2=t^5-t,
\]
and denote by $w_0,w_1,w_i$ the Weierstrass points lying over
$t=0,1,i$, respectively.  Set
\[
    \delta:=w_0+w_1-w_i,
    \qquad
    \mathcal E:=\mathcal O_C(\delta).
\]
This is one of the even theta characteristics of $C$.  To see directly
that it is a theta characteristic, consider the meromorphic differential
\[
    \eta_\delta
    :=
    \frac{t(t-1)}{t-i}\frac{dt}{v}.
\]
Using
\begin{equation}
    \operatorname{div}(t-a)=2w_a-2w_\infty,
    \qquad
    \operatorname{div}\!\left(\frac{dt}{v}\right)=2w_\infty,
    \label{eq:even-divisor-t}
\end{equation}
and
\begin{equation}
    \operatorname{div}v
    =
    w_0+w_1+w_{-1}+w_i+w_{-i}-5w_\infty,
    \label{eq:even-divisor-v}
\end{equation}
which follows from $v^{2}=t^{5}-t$ and \eqref{eq:even-divisor-t},
one finds
\[
    \operatorname{div}\eta_\delta
    =
    2w_0+2w_1-2w_i
    =
    2\delta.
\]
Since $\eta_\delta$ is a nonzero meromorphic section of $\Omega_C$, its
divisor determines the canonical bundle,
\[
    \Omega_C
    \simeq
    \mathcal O_C(\operatorname{div}\eta_\delta)
    =
    \mathcal O_C(2\delta)
    =
    \mathcal E^{\otimes 2}.
\]
Thus $\mathcal E$ is indeed a theta characteristic.  It is
\emph{even}, i.e.\ $h^{0}(\mathcal E)=0$: were $w_0+w_1-w_i$
linearly equivalent to an effective divisor---necessarily a single
point $p$---then $w_0+w_1\sim w_i+p$ would give
$h^{0}\bigl(\mathcal O_C(w_0+w_1)\bigr)\geq2$, placing $w_0+w_1$ in
the hyperelliptic pencil; but the divisors of the pencil are the
fibers of $t$, and the fiber through a Weierstrass point $w_a$ is
$2w_a$, not $w_0+w_1$.

For $g\in G_{\mathcal S}$, choose a meromorphic function
$\rho_g$ satisfying
\begin{equation}
    \rho_g^2=\frac{g^*\eta_\delta}{\eta_\delta},
    \qquad
    \operatorname{div}\rho_g=g^*\delta-\delta.
    \label{eq:normalized-even-spin-lift}
\end{equation}
Such a function exists and is unique up to sign.  Indeed,
$g^*\mathcal E\simeq\mathcal E$ means that $g^*\delta-\delta$ is principal,
whereas the ratio in \eqref{eq:normalized-even-spin-lift} has divisor
$2(g^*\delta-\delta)$; after dividing by the square of a function with this
divisor, only a nonzero constant remains.  Multiplication by $\rho_g$ then
defines an isomorphism
\begin{equation}
    u_g:g^*\mathcal E\overset{\sim}{\longrightarrow}\mathcal E.
    \label{eq:normalized-bundle-lift}
\end{equation}
Indeed, if $h$ is a local meromorphic section of
$g^*\mathcal E\simeq\mathcal O_C(g^*\delta)$, then
$\operatorname{div}(\rho_g h)+\delta
=\operatorname{div}h+g^*\delta\geq0$. The first identity in \eqref{eq:normalized-even-spin-lift} says precisely
that $u_g^{\otimes2}$ agrees, under
$\mathcal E^{\otimes2}\simeq\Omega_C$, with the canonical pullback
isomorphism $g^*\Omega_C\simeq\Omega_C$.

\emph{A commuting pair with anticommuting spin lifts.---}
Besides $\iota$, consider
\begin{equation}
    \tau(t,v)
    =\left(-\frac{i}{t},\,e^{i\pi/4}\frac{v}{t^3}\right).
    \label{eq:even-tau}
\end{equation}
Direct substitution gives
\begin{equation}
    \tau^2=1,
    \qquad
    \tau\iota=\iota\tau.
    \label{eq:tau-iota-commute}
\end{equation}
Since $\iota$ fixes each Weierstrass point,
$\iota^*\delta=\delta$ and hence $\iota\in G_{\mathcal S}$.  It
also fixes $t$ and $dt$ and sends $v\mapsto-v$, so
\begin{equation}
    \iota^*\eta_\delta=-\eta_\delta,
    \qquad
    \rho_\iota=\pm i.
    \label{eq:rho-iota}
\end{equation}
For $\tau$, direct pullback gives
\begin{equation}
    \frac{\tau^*\eta_\delta}{\eta_\delta}
    =e^{-3i\pi/4}\frac{t^2+1}{t^3-t}.
    \label{eq:tau-pullback-ratio}
\end{equation}
Because
\begin{equation}
    \left(\frac{v}{t^3-t}\right)^2
    =\frac{t^2+1}{t^3-t},
\end{equation}
we may take
\begin{equation}
    \rho_\tau=c\frac{v}{t^3-t},
    \qquad
    c^2=e^{-3i\pi/4}.
    \label{eq:rho-tau}
\end{equation}
Using \eqref{eq:even-divisor-t}--\eqref{eq:even-divisor-v},
\begin{equation}
    \operatorname{div}\rho_\tau
    =w_i+w_{-i}+w_\infty-w_0-w_1-w_{-1}
    =\tau^*\delta-\delta.
    \label{eq:rho-tau-divisor}
\end{equation}
Thus $\tau^*\mathcal E\simeq\mathcal E$, so
$\tau\in G_{\mathcal S}$.  Moreover, \eqref{eq:rho-tau} is $v$ times a
rational function of $t$, and hence
\begin{equation}
    \iota^*\rho_\tau=-\rho_\tau.
    \label{eq:rho-tau-odd}
\end{equation}
Combining \eqref{eq:tau-iota-commute}, \eqref{eq:rho-iota}, and
\eqref{eq:rho-tau-odd} gives
\begin{equation}
    u_\iota\circ\iota^*u_\tau
    =-u_\tau\circ\tau^*u_\iota.
    \label{eq:spin-bundle-anticommutator}
\end{equation}
The sign is unchanged by replacing either normalized lift by its negative.

\emph{Passage to the theta group.---}
Let $k=m+\tfrac12$ with $m\geq1$; the restriction $m\geq1$ only
ensures the code space is nonzero---at $k=\tfrac12$ the code space
$H^{0}(\mathcal E)$ of the even characteristic vanishes, so
statements about the action on code states would be vacuous, though
the bundle-level obstruction below persists.  As established above,
\begin{equation}
    \mathcal L_{\mathcal S}
    \simeq\Omega_C^{\otimes m}\otimes\mathcal E,
    \qquad
    \mathbb L_{\mathcal S}=\mathcal L_{\mathcal S}^{\vee}.
    \label{eq:half-integral-code-and-frame-bundles}
\end{equation}
Let
\begin{equation}
    \kappa_g^{(m)}:
    g^*\Omega_C^{\otimes m}
    \overset{\sim}{\longrightarrow}
    \Omega_C^{\otimes m}
\end{equation}
be the canonical pullback isomorphism for $m$-differentials.  Pullback
commutes with tensor products, so
\begin{equation}
\begin{split}
    a_g:=\kappa_g^{(m)}\otimes u_g:
    g^*\mathcal L_{\mathcal S}
    &\simeq g^*\Omega_C^{\otimes m}\otimes g^*\mathcal E\\
    &\overset{\sim}{\longrightarrow}
      \Omega_C^{\otimes m}\otimes\mathcal E
      \simeq\mathcal L_{\mathcal S}
\end{split}
    \label{eq:code-bundle-lift}
\end{equation}
is well defined.  Functoriality of pullback---equivalently, the chain
rule---gives
\begin{equation}
    \kappa_\iota^{(m)}\circ\iota^*\kappa_\tau^{(m)}
    =\kappa_\tau^{(m)}\circ\tau^*\kappa_\iota^{(m)},
    \label{eq:canonical-factor-commutes}
\end{equation}
so the integer-weight factor contributes trivial commutator to the commuting
pair $\tau,\iota$.  Tensoring
\eqref{eq:spin-bundle-anticommutator} with
\eqref{eq:canonical-factor-commutes} gives
\begin{equation}
    a_\iota\circ\iota^*a_\tau
    =-a_\tau\circ\tau^*a_\iota.
    \label{eq:code-bundle-anticommutator}
\end{equation}
Dualizing produces the morphisms in the convention of the theta group,
\begin{equation}
    \Phi_g:=a_g^\vee:
    \mathbb L_{\mathcal S}
    \overset{\sim}{\longrightarrow}
    g^*\mathbb L_{\mathcal S},
    \qquad
    x_g:=(g,\Phi_g)\in G(\mathbb L_{\mathcal S}).
    \label{eq:frame-bundle-lift}
\end{equation}
The dual of \eqref{eq:code-bundle-anticommutator} is
\begin{equation}
    \iota^*\Phi_\tau\circ\Phi_\iota
    =-\tau^*\Phi_\iota\circ\Phi_\tau.
    \label{eq:frame-bundle-anticommutator}
\end{equation}
Using the theta-group law
\begin{equation}
    (g,\Phi)(h,\Psi)
    =(gh,h^*\Phi\circ\Psi)
\end{equation}
and the commutativity of $\tau$ and $\iota$, this is exactly
\begin{equation}
    x_\tau x_\iota=-x_\iota x_\tau,
    \qquad
    [x_\tau,x_\iota]=-1\in\mathbb C^\times.
    \label{eq:theta-commutator-minus-one}
\end{equation}
Here $-1$ is the central scalar automorphism of
$\mathbb L_{\mathcal S}$.

\begin{theorem}[Even-spin commutator obstruction]
\label{thm:even-spin-commutator}
For the even characteristic
$\mathcal E=\mathcal O_C(w_0+w_1-w_i)$ and every
$k=m+\tfrac12$ with $m\geq1$, the theta extension
\eqref{eq:theta-extension-even-spin} does not split.  More strongly, for
every subgroup $A$ with
$Z\subseteq A\subseteq\mathbb C^\times$, the scalar pushout
\eqref{eq:A-pushout-extension} does not split.  In particular, the logical
extension
\begin{equation}
    1\longrightarrow Z
    \longrightarrow\mathcal D_{\mathcal S}
    \longrightarrow G_{\mathcal S}
    \longrightarrow1
\end{equation}
is non-split, and no scalar rephasing linearizes the projective representation of
$G_{\mathcal S}$ on the code space.
\end{theorem}

\begin{proof}
Suppose that the theta extension admitted a splitting
$s:G_{\mathcal S}\to G(\mathbb L_{\mathcal S})$.  Since $s(g)$ and $x_g$
lie over the same classical element, the fiber of the theta extension gives
scalars $\lambda_\tau,\lambda_\iota\in\mathbb C^\times$ such that
\begin{equation}
    s(\tau)=\lambda_\tau x_\tau,
    \qquad
    s(\iota)=\lambda_\iota x_\iota.
\end{equation}
Central rephasing does not change a commutator, whereas a homomorphism carries
a commuting pair to a commuting pair.  Therefore
\begin{equation}
    1=s([\tau,\iota])
     =[s(\tau),s(\iota)]
     =[x_\tau,x_\iota]
     =-1,
\end{equation}
a contradiction.  Thus the theta extension does not split.  The assertion
for every intermediate phase group $A$, including $A=\mathrm{U}(1)$, now follows from
Lemma~\ref{lem:maximal-scalar-test}.
\end{proof}

The distinction from the odd characteristics is therefore intrinsic.  There
the obstruction is carried by the order of a single lift and can be removed
by scalar rephasing; here it is carried by the commutator of a commuting pair,
which survives any rephasing by scalars.

\section{A lemma on short exact sequences}\label{fw:ses}

We record here the elementary group-theoretic lemma used in the
proof of Theorem~\ref{fw:thm:main}: it passes a short exact
sequence to quotients.

\begin{lemma}[Quotient of a short exact sequence]\label{fw:lem:quotient}
Let $1\to A\to B\xrightarrow{\pi}C\to1$ be exact and let $D\lhd B$.  Then
\begin{equation}
  1 \longrightarrow A/(A\cap D) \longrightarrow B/D
    \xrightarrow{\ \bar\pi\ } C/\pi(D) \longrightarrow 1
\end{equation}
is exact.
\end{lemma}

\begin{proof}
$\bar\pi$ is a well-defined surjective homomorphism: $c\,\pi(D)=\bar\pi(\tilde
c\,D)$ for any lift $\tilde c$ of $c$.  For the kernel, $\bar\pi(bD)=\pi(D)$
iff $\pi(b)\in\pi(D)$, i.e.\ $b\in\pi^{-1}(\pi(D))=AD$ (using $\ker\pi=A$).
Thus $\ker\bar\pi=AD/D\cong A/(A\cap D)$ by the second isomorphism theorem.
\end{proof}

\bibliography{prl-sm-refs}